\documentclass[pra,longbibliography,aps,superscriptaddress,nofootinbib,11pt,tightenlines,floatfix]{revtex4-2}

\UseRawInputEncoding
\usepackage[letterpaper,margin=.6in,tmargin=.8in,bmargin=.9in]{geometry}
\usepackage{graphicx}
\usepackage{amsmath,amsfonts,amsthm,amssymb}

\usepackage{graphicx}
\usepackage{multirow}
\usepackage{bm}
\usepackage{braket}
\usepackage{float}
\usepackage{algpseudocode}
\usepackage{enumerate}
\usepackage{enumitem}
\usepackage{caption}
\usepackage{babel}
\usepackage{placeins}
\usepackage{bbm}
\usepackage{times}
\usepackage{xcolor}
\usepackage{mathtools}
\usepackage[
  colorlinks=true,
  urlcolor=teal,
  linkcolor=teal,
  citecolor=teal
]{hyperref}
\usepackage{svg}
\algnewcommand\Input{\item[\textbf{Input:}]}
\algnewcommand\Output{\item[\textbf{Output:}]}

\newtheorem{theorem}{Theorem}

\newtheorem{definition}{Definition}

\newtheorem{lemma}{Lemma}

\newtheorem{problem}{Problem}
\newtheorem{proposition}{Proposition}

\newcounter{algorithm}
\renewcommand{\thealgorithm}{\arabic{algorithm}}
\newcounter{subroutine}
\renewcommand{\thesubroutine}{\arabic{subroutine}}

\newcommand{\subroutinecaption}[2]{%
  \refstepcounter{subroutine}%
  \noindent\textbf{Subroutine~\thesubroutine\quad #1}\label{#2}\par
  \vspace{0.4ex}
  \hrule
  \vspace{0.6ex}
}

\DeclareMathOperator{\BQP}{\mathsf{BQP}}

\renewcommand{\exp}{\ensuremath{\mathrm{exp}}}

\providecommand{\calC}{\ensuremath{\mathcal{C}}}
\providecommand{\calD}{\ensuremath{\mathcal{D}}}

\providecommand{\calL}{\ensuremath{\mathcal{L}}}

\providecommand{\calT}{\ensuremath{\mathcal{T}}}

\allowdisplaybreaks
\begin{document}
\preprint{APS/123-QED}
\bibliographystyle{apsrev4-2}
\title{Provable learning separation for predicting time-evolution of quantum many-body systems}
\author{Rahul Bandyopadhyay}
\affiliation{$\langle aQa^{L} \rangle$, Applied Quantum Algorithms, Leiden University, The Netherlands}
\affiliation{Leiden Institute of Advanced Computer Science (LIACS), Leiden University, Einsteinweg 55, 2333 CC, Leiden, The Netherlands}

\author{Riccardo Molteni}
\affiliation{$\langle aQa^{L} \rangle$, Applied Quantum Algorithms, Leiden University, The Netherlands}
\affiliation{Leiden Institute of Advanced Computer Science (LIACS), Leiden University, Einsteinweg 55, 2333 CC, Leiden, The Netherlands}

\author{Jens Eisert}
\affiliation{Dahlem Center for Complex Quantum Systems, Freie Universit{\"a}t Berlin, 14195 Berlin, Germany}
\affiliation{Fraunhofer Heinrich Hertz Institute, 10587 Berlin, Germany}
\affiliation{Helmholtz-Zentrum Berlin f{\"u}r Materialien und Energie, 14109 Berlin, Germany}

\author{Vedran Dunjko}
\affiliation{$\langle aQa^{L} \rangle$, Applied Quantum Algorithms, Leiden University, The Netherlands}
\affiliation{Leiden Institute of Advanced Computer Science (LIACS), Leiden University, Einsteinweg 55, 2333 CC, Leiden, The Netherlands}

\author{Sofiene Jerbi}
\affiliation{Dahlem Center for Complex Quantum Systems, Freie Universit{\"a}t Berlin, 14195 Berlin, Germany}
\affiliation{Helmholtz-Zentrum Berlin f{\"u}r Materialien und Energie, 14109 Berlin, Germany}

\begin{abstract}
     Given that quantum computers are naturally suited to simulate the behavior of quantum many-body systems, an immediate question arises: can one formulate \emph{physically motivated} quantum machine learning (QML) tasks that exhibit learning separations? We address this problem by studying the learnability of quantum many-body dynamics from the perspective of probably approximately correct (PAC)-learning. Concretely, we devise a supervised learning problem where the training set consists of specifications of randomized stabilizer probe states, evolution times sampled uniformly at random from a polynomially large time interval $\left[ 0, T \right]$, coupled with expectation values of certain observables evaluated on the resulting time-evolved state under an unknown Hamiltonian. For this learning task, we provide an efficient quantum procedure whose training phase learns the underlying Hamiltonian from short-time training samples, and whose deployment phase combines Hamiltonian simulation with the classical shadows protocol to perform inference on a newly given data point. By contrast, the existence of $\mathcal{O} \left( \mathsf{poly} \left( n \right) \right)$-time instances ensures classical hardness: by embedding a $\mathsf{BQP}$-complete computation into the polynomially long time-dynamics of a low-intersection variant of the Feynman-Kitaev clock Hamiltonian construction, we show that, for a certain family of input distributions, no randomized classical polynomial-time algorithm can fulfill our learning condition, unless $\mathsf{BQP} \subseteq \mathsf{P/poly}$. Furthermore, we show that the classically hard instance maintains quantum learnability. We also give an interpretation of our results in learning-assisted certified quantum simulation. Taken together, our results demonstrate a rigorous learning separation for a natural ML task based on Hamiltonian evolution, while building connections between quantum learning theory, quantum simulation, and QML.  
\end{abstract}
\maketitle
\tableofcontents
\section{Introduction}
\label{sec:introduction}
Notions of ML are rapidly reshaping 
the world. Large language models and other applications of ML are now ubiquitous, substantially altering how we communicate, manage organizational tasks, and generate products in industrial settings. At the same time, although the concept of quantum computing is not new~\cite{Shor-1994}, recent years have seen the construction of quantum devices that suggest we may soon achieve quantum computers and simulators 
with practical utility~\cite{MindTheGaps,VastWorld}. Indeed, a number of quantum algorithms are known to exhibit super-polynomial advantages over their classical counterparts, some of which are directly relevant to practical problems in optimization~\cite{OptimizationAdvantages,DecodedQuantumInterferometry,OptimizationReview} and ML~\cite{PACLearning,TemmeML,VedranExponential,PhysRevLett.126.190505,JunyuPruned}, as well as those in quantum simulation~\cite{Lloyd, Nori, Childs_2018, daley2022practical, flannigan2022propagation, trivedi2024quantum, kashyap2025accuracy,RaulStability}. 
In light of these developments, a central challenge remains the identification of real-world ML tasks exhibiting meaningful learning separations. Initial results in this direction have already emerged, including rigorous, in-principle QML separations~\cite{PACLearning,TemmeML,VedranExponential}, as well as early evidence of robust quantum advantages for natural classical data~\cite{JunyuPruned,ProbabilityDiffusion, gyurik2024exponentialseparationsclassicalquantum, Molteni, molteni2026identification, barthe2025quantumadvantagelearningquantum, bokov2026machinelearningminimaluse}. 

Nevertheless, examples of provable learning separations remain relatively scarce, with many of such results relying on cryptographic assumptions. Thus, it is fair to argue that one of the central challenges in the field remains the identification of further well-motivated separations, particularly those grounded in physical considerations. Indeed, it is widely believed that quantum computers are capable of simulating certain many-body systems more efficiently than classical computers~\cite{feynman1986quantum, Lloyd}. Guided by the intuition that similar advantages may hold for \emph{learning} tasks as well, an immediate question arises:
\begin{align*}
    \text{\emph{Can one identify QML separations for physically motivated learning tasks?}}
\end{align*}
An example of such a learning problem is that of predicting ground state properties of quantum systems, a quintessentially quantum learning task for which one may expect quantum learners to yield an advantage over classical ML. Within the natural regime of gapped phases of matter, however, the aforementioned learning problem has been found to be solvable using classical ML algorithms~\cite{ProvablyEfficientQML, Lewis_2024, wanner2024predictinggroundstateproperties, Rouz__2024}, and the question of whether quantum advantage is attainable in learning properties of ground states thus remains dauntingly open. Therefore, in this work, we instead consider a closely related problem that is equally well motivated from a physical perspective, namely, predicting Hamiltonian time-evolution. Any physical system that is sufficiently isolated from its environment undergoes quantum evolution generated by a Hamiltonian. In analog quantum simulation~\cite{CiracZollerSimulation,BlochSimulation,GAN14,Trotzky}, one 
studies the time-dependent evolution generated by some precisely controlled Hamiltonian. Contrasting this, digital quantum simulation involves approximating the dynamics of some target quantum system by a discrete sequence of elementary quantum gates~\cite{Lloyd, Nori}. 

More generally, quantum simulation, whether that concerns real-time dynamics or preparation of ground or thermal states, may be regarded as a \textit{forward computational} problem. In turn, if we consider the problem of estimating the parameters of a many-body quantum system from measurement data thereof: in Ref.~\cite{Bakshi_2024}, this task has been conceptualized as an \textit{inverse, Hamiltonian learning} problem. Within an ML context, this dichotomy bears a natural analogy to \textit{training} and \textit{inference}, or, in the language of PAC-learning, to \textit{identification} and \textit{evaluation}. At a high level, the present work builds on this conceptual viewpoint, using tools from Hamiltonian learning and notions from complexity theory to formulate ML tasks based on Hamiltonian evolution that are classically hard yet quantumly tractable. In other words, we focus on showing a learning separation of the type considered in Refs.~\cite{gyurik2024exponentialseparationsclassicalquantum, Molteni, molteni2026identification, barthe2025quantumadvantagelearningquantum, bokov2026machinelearningminimaluse}, while leveraging ideas from quantum learning theory to inform the design of our learning procedure. 

A closely related work in this regard is Ref.~\cite{barthe2025quantumadvantagelearningquantum}, which has considered the problem of learning time-dynamics generated by an unknown Hamiltonian from classical measurement data. For this task, the authors of that work have shown a learning separation, premised on the widely believed complexity-theoretic conjecture that $\mathsf{BQP} \not\subseteq \mathsf{P/poly}$. However, their learning problem is formulated in terms of a relatively restrictive family of Hamiltonians, namely, ones parametrized by $\mathcal{O} \left( \log n \right)$-dimensional coefficient vectors. Consequently, this leads to an ML task that, while being physically motivated, does not capture the learning of physically plausible local Hamiltonians with a number of terms scaling, say, linearly or polynomially in the system size. By contrast, in this work, by assuming access to a stronger form of training data, we address the problem of predicting the dynamics of a more general family of Hamiltonians. Concretely, our contributions are as follows:
\begin{enumerate}
    \item We formulate a supervised learning task where one is required to predict expectation values of quantum states that have been time-evolved by some unknown, \textit{low-intersection} Hamiltonian, containing a number of constituent terms scaling as $\mathcal{O} \left( n \right)$. More precisely, our setting involves input data specifying probe states given by randomized stabilizer product states, as well as evolution times sampled uniformly at random, labeled by a list of expectation values evaluated on certain observables, which may be thought of as being morally equivalent to a classical description of the probe states after time-evolution by the unknown Hamiltonian. 
    \item We establish an exponential learning separation for the aforementioned learning problem, premised on the well-founded complexity-theoretic assumption $\mathsf{BQP} \not\subseteq \mathsf{P/poly}$. In particular, we provide an efficient quantum learning algorithm whose training phase is based on a simplified variant of the Hamiltonian learning protocol of Haah, Kothari, and Tang~\cite{Haah_2024}, and the deployment stage of which is a combination of Hamiltonian simulation and the classical shadow formalism. To complete the argument and prove classical hardness, we rely on the construction of Oliveira and Terhal~\cite{oliveira2005complexity}, who 
    have provided a low-intersection version of the Feynman-Kitaev clock Hamiltonian whose time-evolution realizes a BQP-complete computation for polynomially large evolution times.
\end{enumerate}
Intuitively, the set-up is as follows: our training dataset corresponds to the expectation values of time-evolved states for evolution times that are uniformly sampled over a time interval $\left[ 0, T \right]$, where $T \in \mathcal{O} \left( \mathsf{poly} \left( n \right) \right)$, so that the samples may be regarded as being divided into ``short"- and ``long"-time regimes. The short-time window constitutes an inverse-polynomial fraction of the time interval, and its corresponding samples 
are used during training to learn the data-generating Hamiltonian in a classical fashion. On the other hand, we require the presence of samples corresponding to polynomially long time-evolutions to ensure classical hardness. That is, 
given a time $t \sim \mathsf{Unif} \left( \left[ 0, T \right] \right)$ during the inference stage, the learner may be required to predict expectation values even for polynomially long evolution times, which is precisely the regime where the underlying dynamics can encode $\mathsf{BQP}$-complete computations.

We emphasize two aspects of our work. In our setting, we assume access to training data that is sufficiently rich to enable Hamiltonian learning, unlike that of Ref.~\cite{barthe2025quantumadvantagelearningquantum}, where the labels are scalars given by \emph{single} expectation values. In other words, while the formulation of their learning problem is substantially more elegant than ours, from a methodological perspective, it precludes the use of approaches based on Hamiltonian learning which requires a set of sufficient statistics to fully reconstruct the Hamiltonian. More fundamentally, as we explain further in Subsection \ref{subsec:distinction-bw-present-and-alice}, it also imposes complexity-theoretic limitations on the kinds of Hamiltonian families to which their approach can be applied \cite{arunachalam2021quantum}. 

We circumvent these difficulties by assuming stronger and more informative training data, naturally motivating the use of a simple Hamiltonian learning algorithm that is tailored to our randomized-time setting. At the same time, we draw attention to the fact that this leads to a training procedure that is somewhat trivial and algorithmically elementary, involving what amounts to classical processing of the training samples and a \emph{proper PAC} identification of the concept generating the training data. The source of our quantum learning advantage thus lies in the evaluation of the concept and not the identification thereof, leaving open the possibility of further identification-based 
separations in physically motivated learning tasks.
\begin{figure*}
    \label{fig:visualization-learning-problem}
    \includegraphics[width=0.9\linewidth]{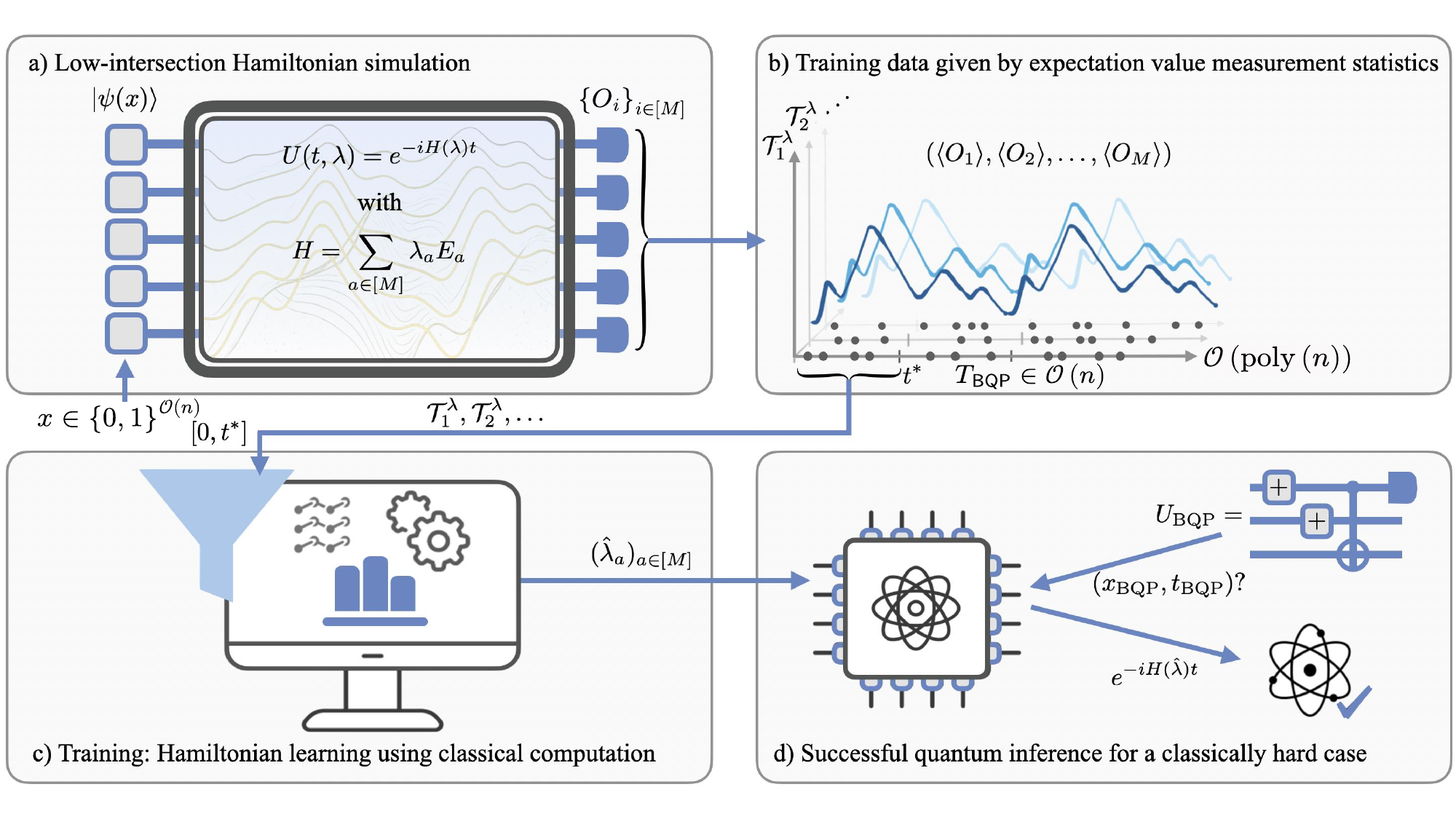}
    \caption{A visualization of the key ideas of the present work. (a) Our training samples are generated by the time-evolution of a low-intersection Hamiltonian $H \left( \lambda \right)$ for randomly chosen times $t \sim \mathsf{Unif} \left( \left[ 0, T \right] \right)$ acting on some initial probe state vector $\lvert \psi \left( x \right) \rangle$. (b) We consider as our labels a vector of expectation values evaluated on a set of certain pre-determined observables. This has been schematically depicted in the 
    above figure as a collection of measurement samples. (c) We filter out the short-time samples from our training data and use them to identify the underlying data-generating Hamiltonian via a simple Hamiltonian learning algorithm that is classical. (d) Finally, we use a quantum computer to successfully perform inference on a newly given data point. By universality of Hamiltonian evolution, this may involve predicting a $\mathsf{BQP}$-complete computation, which is believed to be hard for classical learners.} 
\end{figure*}

\vspace{2.5mm}

\noindent \textbf{Organization of this work.} This work is organized as follows. In Section~\ref{sec:related-works}, we provide an overview of the related literature on supervised learning of quantum many-body systems, learning separations, both information- and complexity-theoretic, and Hamiltonian learning. This is followed by Section~\ref{sec:prelims}, where we provide the requisite preliminaries on computational learning theory, complexity theory, and Hamiltonian learning protocols. We then rigorously formulate our learning problem in the language of the PAC-learning formalism in Section~\ref{sec:learning-problem-statement}. 
Then, in Subsection~\ref{sec:quantum-learnability}, we provide a detailed description of our quantum learning algorithm, following which we give an overview of our proof of classical hardness in Section~\ref{sec:quantum-advantage}. We conclude the main text with a discussion section where, in addition to enumerating a list of open questions and future directions, we outline in detail the main differences between our work and Ref.~\cite{barthe2025quantumadvantagelearningquantum}.

\vspace{2.5mm}

\section{Related works}
\label{sec:related-works}

\subsection{Supervised learning of quantum many-body physics via classical methods}
\label{subsec:supervised-learning-quantum-many-body-classical-methods}

In Ref.~\cite{ProvablyEfficientQML}, the authors employ classical ML techniques to learn classical representations of ground states of gapped Hamiltonians, as well as expectation values on local observables, with rigorous guarantees on sample and computational complexities. The success of classical ML in solving the aforementioned learning tasks is essentially due to the fact that, within the gapped regime, the physical properties of quantum systems and, in particular, their ground states are well-behaved, in the sense that they are smooth functions of the Hamiltonian specifications. Thus, the authors are able to use the $l_{2}$-Dirichlet kernel to learn such functions, using a number of training samples and computational runtime that scaling as $\mathcal{O} \bigl( n^{1/\varepsilon} \bigr)$, so that, for constant precision $\varepsilon \in \mathcal{O} \left( 1 \right)$, their algorithm is efficient. For the same learning task, follow-up papers considered the setting where the learner is equipped with additional knowledge of certain aspects of the quantum system being learned, such as its geometry and the knowledge of the observable being predicted. These include works by Lewis et al.~\cite{Lewis_2024} and Wanner et al.~\cite{wanner2024predictinggroundstateproperties}, which have improved the scaling of Ref.~\cite{ProvablyEfficientQML} to $\mathcal{O} \bigl( \log n \cdot 2^{\mathsf{poly} \log \left( 1/\varepsilon \right)} \bigr)$ and $\mathcal{O} \bigl( 2^{\mathsf{poly} \log \left( 1/\varepsilon \right)} \bigr)$, respectively, while retaining the quasi-polynomial dependence on system size for non-constant $\varepsilon$. As extensions of the gapped scenario, subsequent works have considered important classes of states beyond  ground states of gapped Hamiltonians. In Ref.~\cite{Rouz__2024}, the authors
have employed classical methods to learn properties of thermal states with exponentially decaying correlations, in addition to ground and thermal states satisfying the \emph{generalized approximate local indistinguishability} 
property. The setting of Ref.~\cite{Rouz__2024} has further been generalized in Ref.~\cite{onorati2023provablyefficientlearningphases} to Lindbladian phases of matter~\cite{Coser_2019}, which provably encompass the more conventional notions of phases. Finally, within the complementary setting of predicting dynamical properties of quantum systems, 
a noteworthy work is that of Huang, Chen, and Preskill~\cite{huang2023learningpredictarbitraryquantum}, where they address a supervised learning problem involving quantum states evolved by quantum channels of arbitrary complexity. Remarkably, by generalizing the notion of a low-degree approximation from classical PAC-learning theory to quantum observables, they have been able to devise a quasi-polynomial time classical ML algorithm.

\vspace{3mm}

\noindent\textbf{Relation of the present work to the above.} While our work is also situated within the context of the above literature, it should be noted that the vast majority of the aforementioned papers have to do with predicting \textit{static} properties of quantum systems, in that the learning problems they consider involve ground 
and thermal states. Moreover, the works mentioned above are able to successfully tackle such problems via methods that are either entirely or predominantly classical. Given the success of classical methods in this regime, in contrast, we consider the challenge of showing learning separations for the task 
of predicting learning quantum \textit{dynamics}. The advantage of this setting lies in the $\mathsf{BQP}$-completeness of time-evolution, a fact that allows us to prove complexity-theoretic learning separations for our task.   

\subsection{Learning separations}
\label{subsec:learning-separations}

Within the related field of quantum learning theory, information-theoretic learning separations have been shown for tasks involving genuinely quantum data, such as predicting properties of quantum states, learning quantum processes, quantum principal component analysis, as well as certain variants of quantum property testing~\cite{Huang_2021, chen2021exponentialseparationslearningquantum, Huang_2022_quantum_adv_exp, Aharonov_2022}. Importantly, such separations are predicated on being able to transduce quantum data onto an external quantum memory in a stable manner, followed by coherent processing of quantum information. A different strand of research has focused on the supervised learning setting, similar to that 
of Ref.~\cite{ProvablyEfficientQML}, involving primarily classical data obtained from quantum experiments~\cite{gyurik2024exponentialseparationsclassicalquantum, VedranExponential, barthe2025quantumadvantagelearningquantum, bokov2026machinelearningminimaluse}. In contrast to the previously discussed works on quantum learning separations, which involve unconditional information-theoretic lower bounds, the latter papers prove separations that rely on the classical hardness of simulating $\BQP$-complete processes. In particular, the authors of Ref.~\cite{gyurik2024exponentialseparationsclassicalquantum} have laid out a systematic framework for proving separations for quantum many-body learning tasks, by leveraging the $\mathsf{BQP}$-completeness of the data-generating function to prove classical heuristic hardness-of-learning in physically motivated settings under a fixed input distribution. In Ref.~\cite{VedranExponential}, premised on the assumption $\BQP\not\subseteq \mathsf{P\slash poly}$, learning separations were shown within distribution-free settings as well, marking an improvement over previously shown hardness statements based on $\mathsf{BQP} \not\subseteq \mathsf{HeurP/poly}$. The learning problem considered there is that of predicting expectation values of time-evolved quantum states on unknown observables, given as either linear combination of $k$-local Pauli operators as $O \left( \lambda \right) = \sum_{i} \lambda_{i} P_{i}$, or unitarily parametrized observables, given as $O \left( \lambda \right) = W \left( \lambda \right) O W\left( \lambda \right)^{\dagger}$, for some Hermitian operator $O$ and parametrized unitary $\lambda\mapsto W ( \lambda )$.  

In Ref.~\cite{barthe2025quantumadvantagelearningquantum}, the learning task is that of predicting expectation values of quantum states on a \emph{known} observable, which the authors tackle via a combination of Fourier coefficient extraction of parametrized Trotter circuits, linear regression, and kernel methods. However, in contrast to Ref.~\cite{VedranExponential}, the quantum states are time-evolved according to an \emph{unknown} parameterized Hamiltonian. The work of Ref.~\cite{bokov2026machinelearningminimaluse} introduces the \emph{learning under quantum privileged information} (LUQPI) framework, a quantum generalization of the classical formalism of Vapnik et al.~\cite{vapnik2009new, vapnik2015learning}. There, the authors envision a supervised learning scenario involving minimal quantum resources, whereby a quantum computer is used only in the extraction of certain features. Within this framework, the authors then construct examples of concept classes demonstrating exponential learning separations and provide numerical evidence of examples where the use of quantum privileged information is shown to enhance the performance of classical ML algorithms for learning quantum systems. To conclude, we also mention the work of Ref.~\cite{molteni2026identification}, wherein the authors asked: can quantum advantage in QML arise from the hardness of identifying the concept generating the data, as opposed to the hardness of evaluating it? There, the authors answered this question in the affirmative, by using the notion of random generatability to provide the first proofs of identification-based learning separations.

\vspace{3mm}

\noindent \textbf{Relation of the present work to the above.} Our work is in a similar vein to the works of Refs.~\cite{gyurik2024exponentialseparationsclassicalquantum,VedranExponential,barthe2025quantumadvantagelearningquantum,bokov2026machinelearningminimaluse}, in that we show a complexity-theoretic learning separation for the task of predicting expectation values of certain time-evolved quantum states. In terms of research direction, we reiterate that the current work is perhaps most closely aligned with that of Ref.~\cite{barthe2025quantumadvantagelearningquantum}, where the task of learning quantum real-time dynamics has also been considered. However, as we document in Subsection \ref{subsec:distinction-bw-present-and-alice}, our work differs from theirs in several key aspects. Namely, we consider Hamiltonian families satisfying properties such as low-intersection and geometric locality, leading to a learning task that is arguably more natural from a physical standpoint. Other distinctions have to do with details of how the learning problem is defined, in addition to the learning algorithm itself, which involves a combination of Hamiltonian learning and simulation during training and inference respectively. In connection with this methodological difference, we make note of the fact that the works of Refs.~\cite{gyurik2024exponentialseparationsclassicalquantum, molteni2024exponential, VedranExponential} were the first to envision the Hamiltonian learning problem as an identification problem. Our use of Hamiltonian learning in the training procedure may be interpreted as leveraging this insight.   

\subsection{Hamiltonian learning}
\label{subsec:rel-works-ham-learning}

Hamiltonian learning is the task of inferring the Hamiltonian corresponding to some quantum many-body system of interest, given measurement data obtained from the system. Historically, parameter estimation of specific families of Hamiltonians has been a well-studied problem in quantum control, quantum metrology, and quantum sensing~\cite{PhysRev.78.695, caves1981quantum, holland1993interferometric, Lee_2002, giovannetti2004quantum, de_Burgh_2005, degen2017quantum}. 
More recently, within the field of quantum learning theory, the literature on Hamiltonian learning has expanded substantially, with works on learning many-body systems from singular eigenstates~\cite{Garrison_2018, Bairey_2019, Qi_2019, Li_2020}, thermal states~\cite{SampleEfficientHamiltonianLearning, arunachalam2025testinglearningstructuredquantum, bakshi2026learningquantumhamiltonianstemperature, Haah_2024, Fawzi_2024, Garc_a_Pintos_2024, narayanan2024improvedalgorithmslearningquantum, Rouz__2024, chen2025quantumprobetomography}, and time-evolution~\cite{HamiltonianLearning,arunachalam2025testinglearningstructuredquantum, WildeLearning,Bairey_2019, Bakshi_2024, Caro_2024, da_Silva_2011, de_Burgh_2005, Flynn_2022, franca2025learningcertificationlocaltimedependent, Gentile_2021, giovannetti2004quantum, Haah_2024, Hu_2025, Huang_2023_hl, Lee_2002, li2023heisenberglimitedhamiltonianlearninginteracting, ZollerLearning,ma2024learningkbodyhamiltonianscompressed, Mirani_2024, ni2024quantumhamiltonianlearningfermihubbard, Odake_2024, PhysRev.78.695, PhysRevA.84.012107, sinha2025improvedhamiltonianlearningsparsity, Wiebe_2014, Wiebe_2014_prl, Zhao_2025, zubida2021optimalshorttimemeasurementshamiltonian, ZollerLearning,shin2026heisenberglimitedhamiltonianlearningshorttime, Dutkiewicz_2024, abbas2025nearlyoptimalalgorithmslearn, chen2025quantumprobetomography}. Here, we focus on the third setting in particular, since our learning problem has to do with predicting expectation values on time-evolved states. This setting is particularly relevant for experimentally learning large-scale Hamiltonians
from realistic data in a robust fashion \cite{HamiltonianLearning}.
For present purposes, however, rigorous arguments about precise
sample complexities in idealized situations are particularly 
important. A landmark result within that literature has been  the work of Haah, Kothari, and Tang, wherein rigorous guarantees were provided for the first time on Hamiltonian learning from high-temperature Gibbs states, as well as query-access to time evolution~\cite{Haah_2024}. Subsequent works also considered a version of Hamiltonian learning where the interaction structure of the underlying many-body system is unknown to the learner~\cite{Bakshi_2024, Hu_2025, abbas2025nearlyoptimalalgorithmslearn}. By this, we mean that the interaction terms constituting the Hamiltonian are not provided as inputs to the problem, so that one is required to perform \textit{structure learning} in addition to learning the interaction strengths. A key requirement is Heisenberg-limited scaling, where one is interested in achieving a total query time of $t_{\text{time}} \in \mathcal{O} \left( 1 / \varepsilon  \right)$, where $\varepsilon$ is the desired precision on the Hamiltonian coefficients. Numerous works exist in this direction, such as those of Refs.~\cite{Huang_2023_hl, Hu_2025, li2023heisenberglimitedhamiltonianlearninginteracting, Mirani_2024, ni2024quantumhamiltonianlearningfermihubbard, shin2026heisenberglimitedhamiltonianlearningshorttime}, as well as the related work of Dutkiewicz et al., where they show the fundamental necessity of a certain amount of quantum control in achieving Heisenberg-limited scaling~\cite{Dutkiewicz_2024}.

\vspace{3mm}

\noindent \textbf{Relation of present work to the above.} As we will see later, our learning algorithm will involve a simplification of the Haah-Kothari-Tang protocol, an adjustment that will prove appropriate for our setting, since it involves training data corresponding to randomized time instances. So, while the learning problem considered in this work is a supervised one, we borrow heavily from the set-up of Ref.~\cite{Haah_2024}, and our analysis of quantum learnability will involve numerous mathematical objects considered in that work.

While the core aim of our work is the physically motivated learning separation as such, there are a few other implications that are of independent interest in the context of Hamiltonian learning. One interpretation of our results is in \emph{learning-assisted certified quantum simulation}. Instead of attempting to certify a difficult long-time quantum evolution directly, we characterize an experimentally accessible short-time regime, infer the governing Hamiltonian, and use this learned model as the basis for long-time quantum prediction. In this way, Hamiltonian learning serves as a calibration stage for subsequent quantum simulation, yielding rigorous guarantees on the resulting predictions. This perspective is somewhat reminiscent of the approach taken in Ref.~\cite{ZollerLearning}, but here comes with explicit complexity-theoretic and learning-theoretic guarantees.

\section{Preliminaries}
\label{sec:prelims}

\subsection{PAC-learning}
\label{subsec:pac-learning}

The PAC-learning framework provides a formal and mathematically precise language for describing supervised learning problems. Within this framework, a learning problem is specified by a set of functions, called the \textit{concept class}, usually denoted by $\mathcal{C}$. The individual functions contained in the concept class are called \textit{concepts}. Each concept $c \colon \mathcal{X} \rightarrow \mathcal{Y}$ maps an input/instance space $\mathcal{X}$ to an output/label space $\mathcal{Y}$. In the supervised learning setting, a learner has access to a \textit{training dataset} of the form $\mathcal{T} = \left\{ \left( x_{i}, c \left( x_{i} \right) \right) \right\}_{i = 1}^{N}$. This is a set of input-output pairs, where the inputs $x_{i}$ are drawn independently from an unknown but fixed \textit{target distribution} $\mathcal{D}$ over the input space $\mathcal{X}$, and the outputs are examples that are labeled by a concept $c \in \mathcal{C}$. Given this dataset, the learner must then output a \textit{hypothesis} $h \colon \mathcal{X} \rightarrow \mathcal{Y}$ that approximately ``agrees" with the labeling concept $c$ with high probability over the target distribution $\mathcal{D}$. To formalize the aforementioned picture, we will now give a definition of \textit{efficient PAC-learnability}, which requires that the learner outputs a hypothesis in polynomial time.

    \begin{definition}[Efficient PAC-learnability]
        \label{def:eff-pac-learnability}
        Consider a concept class $\mathcal{C} = \left\{ c_{j} \right\}_{j}$, comprised of concepts $c_{j} \colon \mathcal{X} \rightarrow \mathcal{Y}$, where $\mathcal{X}$ is the instance space and $\mathcal{Y}$ is the output space. Let $\mathcal{L} \colon \mathcal{Y} \times \mathcal{Y} \rightarrow \mathbb{R}_{\geqslant 0}$ be a loss function. Then, $\mathcal{C}$ is efficiently PAC-learnable with respect to a fixed distribution $\mathcal{D}$ over $\mathcal{X}$ if there exists an algorithm $\mathcal{A}$ that takes in as input a dataset $\mathcal{T} = \left\{ \left( x_{i}, c(x_{i}) \right) \right\}_{i = 1}^{N}$, accuracy and confidence parameters $\varepsilon, \delta > 0$, such that, for all concepts $c \in \mathcal{C}$, the size of the dataset $N \in \mathcal{O} \left( \mathsf{poly} \left( n, 1/\varepsilon, 1/\delta \right) \right)$, where $n$ is the dimension of the instance space, the algorithm $\mathcal{A}$ outputs a hypothesis function $h$ satisfying
        \begin{align}
            \label{eq:efficient-pac-learnability}
            \underset{x \sim \mathcal{D}}{\mathbb{E}} \left[ \mathcal{L} \left( h \left( x \right), c \left( x \right) \right) \right] \leq \varepsilon, 
        \end{align}
        with high probability $1 - \delta$ over $\mathcal{D}$ over $\mathcal{X}$, and runs in time $\mathcal{O} \left( \mathsf{poly} \left( n, 1/\varepsilon, 1/\delta \right) \right)$. 
    \end{definition}

\subsection{Complexity theory}
\label{subsec:complexity-theory}

The central result of our work is a learning separation premised on the widely-held complexity-theoretic assumption that $\mathsf{BQP} \not\subseteq \mathsf{P/poly}$. We therefore begin with the definition of the class $\mathsf{BQP}$.
\begin{definition}[$\mathsf{BQP}$]
    \label{def:BQP}
    Let $\mathcal{L} \coloneqq \left( \mathcal{L}_{\textnormal{yes}}, \mathcal{L}_{\textnormal{no}}  \right)$ be a promise problem, where $\mathcal{L}_{\textnormal{yes}}, \mathcal{L}_{\textnormal{no}}  \subseteq \left\{ 0, 1 \right\}^{*}$ and $\mathcal{L}_{\textnormal{yes}} \cap \mathcal{L}_{\textnormal{no}} = \emptyset$. Then, $\mathcal{L} \in \mathsf{BQP}$ if and only if there exists a polynomial-time, uniform family of quantum circuits $\left\{ U_{n} \colon n \in \mathbb{N} \right\}$, where, for all $n \in \mathbb{N}$, $U_{n}$ takes in as input an $n$-qubit computational basis state and outputs a single bit obtained by measuring the first output qubit in the computational basis, such that the following properties are satisfied:
    \begin{enumerate}
        \item For all $x \in \mathcal{L}_{\textnormal{yes}}$, the probability that $U_{\left\vert x \right\vert}$ accepts $x$ is greater than or equal to $2/3$; that is, $\Pr \left[ U_{\left\vert x \right\vert} \left( x \right) = 1 \right] \geqslant 2/3$.
        \item For all $x \in \mathcal{L}_{\textnormal{no}}$, the probability that $U_{\vert x \rvert}$ does not accept $x$ is greater than or equal to $2/3$; that is, $\Pr \left[ U_{\left\vert x \right\vert} \left( x \right) = 0 \right] \geqslant 2/3$.
    \end{enumerate}
\end{definition}
An important category of complexity classes we will consider is that of advice classes, wherein a Turing machine is provided, in addition to the input, a polynomially sized \textit{advice} string. Intuitively, the advice string should be thought of as enhancing the ``power" of the computation being performed, compared to the setting where it is only supplied with the problem input. Advice classes are used to model the complexity of computational problems which involve a resource intensive pre-processing stage, followed by a post-processing stage that is computationally less expensive. This makes them suitable for analyzing the complexity of ML problems. We provide a definition of the class $\mathsf{P/poly}$, as follows.

\begin{definition}[$\mathsf{P/poly}$, Def. 6.5,~\cite{arora2009computational}]
    \label{def:p-slash-poly}
    A decision problem $\mathcal{L} \colon \left\{ 0, 1 \right\}^{*} \rightarrow \left\{ 0, 1 \right\}$ is contained in the complexity class $\mathsf{P}/\mathsf{poly}$ if there exists a polynomial-time, classical algorithm, $\mathcal{A}$, satisfying the following property: for every $n \in \mathbb{N}$, there exists a polynomially sized bit-string, $\lambda_{n} \in \left\{ 0, 1  \right\}^{\mathsf{poly} \left( n \right)}$, such that, for all $x \in \left\{ 0, 1 \right\}^{n}$, $\mathcal{A} \left( x, \lambda_{n} \right) = \mathcal{L}(x)$.
\end{definition}

\subsection{Hamiltonian learning}
\label{subsec:ham-learning}
Consider a quantum system defined on $n$ qubits. The interactions characterizing the system are governed by a $2^{n} \times 2^{n}$ Hermitian operator called the \textit{Hamiltonian}, which we will denote by $H$. Such an operator is expressible as a linear combination of Pauli operators, with their coefficients representing bounded interaction strengths. The task of \textit{learning} a Hamiltonian, then, may be informally described as recovering $\varepsilon$-accurate estimates of the Hamiltonian coefficients with high probability, 
given as input either of the following:
\begin{enumerate}
    \item Query access to the real-time evolution of the Hamiltonian, $e^{-iHt}$, for some $t > 0$.
    \item Several, identically prepared copies of the thermal state of the Hamiltonian 
\begin{equation}\rho_{\beta} \left( H \right) = \frac{e^{-\beta H}}{\operatorname{Tr}\left[ e^{-\beta H} \right]}, 
    \end{equation}
    at a known inverse temperature $\beta > 0$.
\end{enumerate}
The setting of the present work will involve the first access model, in that the input to our supervised learning problem will be a training dataset containing expectation values of randomized probe states that have been evolved under an unknown, low-intersection Hamiltonian, which we define below. 
\begin{definition}[Hamiltonian]
    \label{def:hamiltonian}
    A Hamiltonian $H$ is given by a set of tuples of the form $\left\{ \left( \lambda_{a}, E_{a} \right) \right\}_{a \in \left[ M \right]}$, where $M \in \mathbb{N}$, so that $a \in \left[ M \right]$ is an index ranging over $\left[ M \right] = \left\{ 1, 2, \ldots, M \right\}$, $\lambda_{a} \in \left[ -1, 1 \right]$ for all $a \in \left[ M \right]$, and $\left( E_{a} \right)_{a \in \left[ M \right]}$ are unique, non-identity Pauli operators. The associated Hamiltonian is then given as 
    \begin{align}
        H \coloneqq \sum_{a \in \left[ M \right]} \lambda_{a} E_{a}.
    \end{align}
    On occasion, we will also use $\lambda \coloneqq \left( \lambda_{1}, \lambda_{2}, \ldots, \lambda_{M} \right)$ to denote the entire vector of coefficients, and $H \left( \lambda \right)$ to express the Hamiltonian as a function of $\lambda$. We also say that $H$ is $\mathfrak{K}$-local if $\left\vert \operatorname{supp} \left( E_{a} \right) \right\vert \leqslant \mathfrak{K}$ for all $a \in \left[ M \right]$.
\end{definition}

\begin{definition}[Low-intersection Hamiltonian~\cite{Haah_2024}]
    \label{def:low-intersection-hamiltonians}
    Let $H = \sum_{a \in \left[ M \right]} \lambda_{a} E_{a}$ be an $n$-qubit, $\mathfrak{K}$-local Hamiltonian as in Def.~\ref{def:hamiltonian}. The dual-interaction graph $\mathfrak{G}$ of $H$ is an undirected graph whose vertices are identified with elements of $\left[ M \right]$, with an edge connecting vertices $a, b \in \left[ M \right]$ if
    \begin{align}
        \operatorname{supp} \left( E_{a} \right) \cap \operatorname{supp} \left( E_{b} \right) \neq \emptyset. 
    \end{align}
    Let $\mathfrak{d}$ denote the maximum degree of $\mathfrak{G}$ over all vertices. Then, we say that $H$ is low-intersection if both $\mathfrak{K}, \mathfrak{d} \in \mathcal{O} \left( 1 \right)$.
\end{definition}
Low-intersection Hamiltonians have also been referred to in the literature as \textit{sparsely-interacting Hamiltonians}~\cite{Gu_2024} and \textit{low-interaction Hamiltonians}~\cite{Huang_2023_hl}. We note here that low-intersection $\mathfrak{K}$-local Hamiltonians subsume the more familiar class of geometrically local $\mathfrak{K}$-local Hamiltonians, which constitute, by far, the most physically plausible family of Hamiltonians, since the latter must satisfy the additional locality constraint that each interaction term be supported on a spatially contiguous region. We now formally state the problem of Hamiltonian learning from query access to time-evolution. 
\begin{problem}[Hamiltonian learning from time-evolution,~\texorpdfstring{\cite{Haah_2024}}{Haah_2024}]
    Let $H \coloneqq \left\{ \left( \lambda_{a}, E_{a} \right) \colon a \in \left[ M \right] \right\}$ be an $n$-qubit Hamiltonian as defined in Def.~\ref{def:hamiltonian}. Let $t \leqslant t_{c}$ be a known time, where $t_{c} \in \mathcal{O} \left( 1 \right)$ is a constant and $U \coloneqq e^{-iHt}$. Then, assuming knowledge of $\left( E_{a} \right)_{a \in \left[ M \right]}$, the learning problem involves taking in as inputs precision and confidence parameters $\varepsilon, \delta > 0$, and $\mathcal{O} \left( \mathsf{poly} \left( n, 1 / \varepsilon, 1 / \delta \right) \right)$ queries to the time-evolution unitary $U$, and producing as outputs $\hat{\lambda} \coloneqq \bigl( \hat{\lambda}_{a} \bigr)_{a \in \left[ M \right]}$ such that, with probability at least $1 - \delta$, $\bigl\vert \lambda_{a} - \hat{\lambda}_{a} \bigr\vert \leqslant \varepsilon$ holds for all $a \in \left[ M \right]$.
\end{problem}

\subsubsection{Background on the methods of Haah et al.}
\label{subsubsec:background-haah-kothari-tang}
We now provide some background on the results of Ref.~\cite{Haah_2024}. Let $H = \left\{ \left( \lambda_{a}, E_{a} \right)~\vert~a \in \left[ M \right] \right\}$ be a low-intersection Hamiltonian as in Defs.~\ref{def:hamiltonian} and \ref{def:low-intersection-hamiltonians}. The starting point of their Hamiltonian learning algorithm is an expectation value we denote by $\mathcal{F}_{a, t}$, which, for $a \in \left[ M \right]$, they define as
\begin{align}
    \label{eq:definition-of-mathcal-f-a-t}
    \mathcal{F}_{a, t} \coloneqq \operatorname{Tr} \left[ Q_{a} e^{-iHt} \left( \frac{I + P_{a}}{D} \right) e^{iHt} \right] = \frac{1}{D} \operatorname{Tr} \left[ Q_{a} e^{-iHt} P_{a} e^{iHt} \right].
\end{align}
In the above, $P_{a}$ is chosen to be any single-qubit Pauli operator that anti-commutes with the Hamiltonian term $E_{a}$, $Q_{a} \coloneqq i \left[ P_{a}, E_{a} \right] = 2 i P_{a} E_{a}$, $D \coloneqq 2^{n}$, where $n$ is the number of qubits constituting the many-body system, and $t \leqslant t_{c} \in \mathcal{O} \left( 1/\mathsf{poly} \left( \mathfrak{d} \right) \right)$. It can be shown, as in the proof of Lemma \ref{lem:tail-bound-higher-order} or Eq.\ (8) of Ref.~\cite{Haah_2024}, that the quantity $\mathcal{F}_{a, t}$ admits the
expansion
\begin{align}
    \label{eq:mathcal-f-a-t-taylor-expansion-simple}
    \mathcal{F}_{a, t} = \frac{1}{D} \operatorname{Tr} \left[ Q_{a} e^{-iHt} P_{a} e^{iHt} \right] = 4\lambda_{a} t + \mathcal{O} \left( t^{2} \right),
\end{align}
so that the coefficient $\lambda_{a}$ is exposed in the first-order term. This is a consequence of the Hadamard formula (see Lemma \ref{lem:hadamard-formula}), orthonormality of Pauli operators with respect to the Hilbert-Schmidt inner product, as well as the specific choices of $P_{a}$ and $Q_{a}$. The authors of Ref.~\cite{Haah_2024} envision the Hamiltonian learning problem as one of inverting the non-linear map
\begin{align}
    \label{eq:ham-learning-hkt-non-linear-map}
    \mathcal{F}_{t} \colon \left[ -1, 1 \right]^{M} \ni \left( \lambda_{a} \right)_{a \in \left[ M \right]} \mapsto \left( \mathcal{F}_{b, t} \coloneqq \operatorname{Tr} \left[ Q_{b} e^{-iHt} P_{b} e^{iHt} \right] \right)_{b \in \left[ M \right]} \in \mathbb{R}^{M},
\end{align}
for all $a \in \left[ M \right]$. In particular, their protocol is carried out via the following two steps:
\begin{enumerate}
    \item Computing estimates $\tilde{\mathcal{F}}_{a, t}$ of $\mathcal{F}_{a, t}$ up to some precision $\xi$, such that $\bigl\vert \tilde{\mathcal{F}}_{a, t} - \mathcal{F}_{a, t} \bigr\vert \leqslant \xi$, for all $a \in \left[ M \right]$. This is straightforwardly performed by measuring time-evolved randomized stabilizer states on the observable $Q_{a}$.
    \item Using estimates $\tilde{\mathcal{F}}_{a, t}$, inverting the map $\mathcal{F}_{t}$ via the Newton-Raphson method.
\end{enumerate}
Their use of the Newton-Raphson method is necessitated by the fact that the map $\mathcal{F}_{t}$ is non-linear, in that the forward map from the coefficients $\left( \lambda_{a} \right)_{a \in \left[ M \right]}$ to $\left( \mathcal{F}_{a, t} \right)_{a \in \left[ M \right]}$ involves exponentials in the Hamiltonian. Furthermore, due to their lack of control over $t$, they are unable to utilize a first-order approximation of $\mathcal{F}_{a, t}$ to arbitrarily small error. However, as we explain later, our setting \textit{will} allow us to take advantage of a linear approximation of $\mathcal{F}_{a, t}$, leading to a drastically simpler Hamiltonian learning algorithm.

\section{Learning problem}
\label{sec:learning-problem-statement}

In this section, we formalize the central learning problem behind our quantum advantage results. At a high level, the goal of this task is to predict, for various given times and input states, a classical description of these states after they have been time evolved under an unknown Hamiltonian. The training data for this task is generated as follows: for different inputs $x\in\{0,1\}^{\mathsf{poly}(n)}$ and evolution times $t \in[0,T]$, an $n$-qubit input state vector $\ket{\psi (x)}$ is prepared and time-evolved by a low-intersection Hamiltonian $H \left( \lambda \right)$ for time $t$, i.e., it evolves by $U( t, \lambda) = e^{- i H \left( \lambda \right) t}$. The classical description of this time-evolved state then takes the form of expectation values of a predetermined set of certain observables that are gathered via repeated preparations and measurements. The Hamiltonian parameters $\lambda \in[-1,1]^M$ are fixed but unknown, with $M \in \mathcal{O} \left( n \right)$, which follows from the low-intersection property. In particular, we consider Hamiltonians 
\begin{equation}\label{eq: hamiltonian}
    H \left( \lambda \right) = \sum_{a \in \left[ M \right]} \lambda_a E_a
\end{equation}
of the kind described in Def.~\ref{def:hamiltonian}. 
The task of a learner will then be to use these classical snapshots of time-evolved states from polynomially many different $(x,t)$ pairs to generalize to new inputs and evolution times. This learning problem can be rigorously formalized by means of the following concept class.

\begin{definition}[Concept class]
    \label{def:learning_problem_concept_class}
    Let $\mathsf{H} \coloneqq \bigl\{ H \left( \lambda \right) \vert \lambda \in \left[ -1, 1 \right]^{M} \bigr\}$ be a family of $n$-qubit low-intersection Hamiltonians. Let $x \in \left\{ 0, 1 \right\}^{p \left( n \right)}$, where $p \colon \mathbb{N} \rightarrow \mathbb{N}$ is a function mapping $n$ to some integer and $p  \left( n \right) \in \mathcal{O} \left( \mathsf{poly} \left( n \right) \right)$. Also, consider $\left( Q_a \right)_{a \in \left[ M \right]}$, the operators defined in Subsection \ref{subsubsec:background-haah-kothari-tang}. Then, we define the concept class
    \begin{align}
        \label{eq:main-concept-class-time-dynamics}
        \mathcal{C}_{U_{\textnormal{enc}}, T}^{\mathsf{H}} \coloneqq \left\{ c_{\lambda} \colon \left\{ 0, 1 \right\}^{p \left( n \right)} \times \left[ 0, T \right] \ni \left( x, t \right) \mapsto \left( \langle \psi \left( x \right) \rvert U^{\dagger} \left( t, \lambda \right) Q_{a} U \left( t, \lambda \right) \lvert \psi \left( x \right) \rangle \right)_{a \in \left[ M \right]} \in \mathbb{R}^{M} \bigl\vert \lambda \in \left[ -1, 1 \right]^{M} \right\},
    \end{align}
    where $U \left( t, \lambda \right) \coloneqq e^{-iH \left( \lambda \right) t}$, such that $\lvert \psi \left( x \right) \rangle = U_{\textnormal{enc}} \left( x \right) \lvert 0 \rangle^{\otimes n}$, where $U_{\textnormal{enc}}$ is a data-encoding unitary and $T \in \mathcal{O} \left( \mathsf{poly} \left( n \right) \right)$. 
\end{definition}

Note that the concept class is specified by a fixed choice of encoding unitary $U_{\text{enc}}$, the Hamiltonian family $\mathsf{H} = \{H(\lambda)\}_\lambda$ under which we perform the time evolution, and a maximal evolution time $T$. In this work, we are interested in an encoding unitary of a particular kind, whose action is determined by the first bit of the input bit-string $x$. For some input pair $\left( x^{(j)}, t^{(j)} \right)$, we define it as
\begin{equation}
    \label{eq:data-encoding-unitary}
	U_{\text{enc}}\bigl( x^{(j)} \bigr)\ket{0^n} \coloneqq
	\begin{cases}
		\ket{\psi \left( x^{(j)} \right)}=\ket{s^{(j)}} \in \{\ket{0},\ket{1},\ket{+},\ket{-},\ket{i},\ket{-i}\}^{\otimes n},  &\text{if}~x^{(j)}_1=0,\\
	       \ket{\psi \left( x^{(j)} \right)}=\ket{x^{(j)}_2, \dots, x^{(j)}_{n + 1}}, &\text{if}~x^{(j)}_1=1,
	\end{cases}
\end{equation}
where, in the case when $x^{(j)}_1=0$, the choice of the state vector $\ket{s^{(j)}}$ is fully determined by the bit-string $x^{(j)}_2, \dots, x^{(j)}_{p(n)}$. In particular, the state vectors $\ket{s^{(j)}}$ are tensor products of single-qubit stabilizer states and can therefore be described by bitstrings of length $p(n) = 1 + \lceil n \log_{2} 6 \rceil$, where $n$ is the number of qubits on which the state vector $\ket{\psi \left( x^{(j)} \right)}$ is defined. In the following, we denote by $s^{(j)}$ the bit-strings $x^{(j)}$ with $x^{(j)}_1=0$, where the remaining bits $x^{(j)}_2,\ldots,x^{(j)}_{p(n)}$ specify the particular stabilizer state vector $\ket{s^{(j)}}$. The goal of the ML algorithm, then, is to learn a model $h(x,t)$ which approximates the unknown concept
\begin{align}
    c_\lambda(x,t) = \left( \langle \psi(x) \rvert U^{\dagger} \left( t, \lambda \right) Q_{a} U \left( t, \lambda \right) \lvert \psi(x) \rangle\right )_{a \in \left[ M \right]},    
\end{align}
given training data of the form $\mathcal{T}^{\lambda} \coloneqq \left\{ \left( x^{(j)},t^{(j)},y^{(j)} \right) \right\}_{j=1}^N$, where $y^{(j)} \coloneqq c_\lambda(x^{(j)},t^{(j)})$. On occasion, in the context of quantum learnability, i.e., when $x_{1} = 0$, we will also use 
\begin{align}
    B_{a}^{(j)} \coloneqq \operatorname{Tr} \left[ Q_{a} e^{-iH\left( \lambda \right)t} \bigotimes_{i = 1}^{n} \bigl\lvert s_{i}^{(j)} \bigr\rangle \bigl\langle s_{i}^{(j)} \bigr\rvert e^{i H\left( \lambda \right)t} \right]
\end{align}
to denote the $a^{\text{th}}$ component of the $j^{\text{th}}$ training sample, for notational convenience.
To fully formalize the learning task, we also need to specify the family of distributions $\{ \mathcal{D}_i \}_i$ from which the training data points $(x^{(j)},t^{(j)})$ are sampled. These distributions will also govern the measure of approximation (or generalization performance) of the hypothesis $h(x,t)$ over the entire space $(x,t)$. We will consider in particular input distributions of the following form:

\begin{definition}[Allowed input distributions]
    \label{def: input distribution}
	Let $p > 0$ and $\mathcal{D}_i\in\{\mathcal{D}_i\}_i$ be a family of distributions over $(x,t) \in  \{0,1\}^{p}\times [0,T]$ defined as:
	\begin{itemize}
        \item The time $t$ is uniformly sampled over $[0,T]$.
		\item The first bit $x_1$ of $x$ is randomly selected with equal probability between 0 and 1.
		\item If $x_1=0$ then the other $p-1$ bits $x_2, x_3, \ldots, x_p$ are sampled from the uniform distribution over $p-1$ bit-strings $\mathsf{Unif}(\{0,1\}^{p-1})$.
		\item If $x_1=1$ then the $p-1$ bits in $x_2, \dots, x_p$ follow any possible distribution over $\{0,1\}^{p - 1}$. For each $i$, the distribution over these bit-strings defines the specific overall distribution $\calD_i$.
	\end{itemize}
	In our setting, we will consider $p = p(n)$.
\end{definition}
Formally, our learning condition is defined as follows.
\begin{definition}[Efficient PAC-learning condition]\label{def: learning condition}
The concept class $\mathcal{C}^{\mathsf{H}}_{U_{\textnormal{enc}},T}$ is (efficiently) PAC-learnable with respect to the set of distributions  $\{ \mathcal{D}_i \}_i$ over $(x,t)\in\{0,1\}^{p ( n )}\times[0,T]$ if there exists an algorithm $\mathcal{A}$ such that, for all $ \varepsilon, \delta > 0$, for any $c_{\lambda}$ in $\calC^{\mathsf{H}}_{U_{\textnormal{enc}},T}$, and any input distribution $\calD_i \in \left\{ \mathcal{D}_{i} \right\}_{i}$, $\mathcal{A}$ receives as input a training set $\calT^\lambda$ of size $\mathcal{O} \left( \mathsf{poly} \left( n, 1/\varepsilon, 1/\delta \right) \right)$ and outputs a hypothesis $h( \cdot )= \mathcal{A} (\calT^\lambda, \varepsilon,\delta, \cdot)$ which satisfies 
    \begin{equation}\label{eq:learning}
    \underset{\substack{(x,t) \sim \mathcal{D}_i}}{\mathbb{E}} \left[ \left\Vert c_{\lambda} \left( x,t \right) - h \left( x,t \right) \right\Vert_{\infty} \right] \leqslant \varepsilon
    \end{equation}
    with probability greater than $1-\delta$. The concept class is said to be efficiently PAC-learnable if the time complexity of $\mathcal{A}$ is $\mathcal{O} \left( \mathsf{poly} \left( n, 1/\varepsilon, 1/\delta \right) \right)$.
\end{definition}

\section{Quantum learnability}
\label{sec:quantum-learnability}

Having properly formalized the learning task considered in this work, we now provide a quantum learning algorithm that is able to efficiently PAC-learn the concept class described in Def.~\ref{def:learning_problem_concept_class} with respect to a certain family of input distributions. This entails: i) learning the Hamiltonian parameters $\lambda$ using short-time-evolved states sampled from the $x_1=0$ subspace, and ii) guaranteeing that the learned parameters $\hat{\lambda}$ allow time-evolution of new input states up to good approximation error, even for long evolution times $t \sim T$. Recall from the previous section that our training data is of the form
\begin{align}
    \label{eq:training-dataset}
    \mathcal{T}^{\lambda} &= \left\{ \left( x^{(j)}, t^{(j)}, c_{\lambda} \bigl( x^{(j)}, t^{(j)} \bigr) \right) \right\}_{j = 1}^{N}, \\
    \text{where}\quad c_{\lambda} \bigl( x^{(j)}, t^{(j)} \bigr) &= \left( \langle \psi(x) \rvert U^{\dagger} \left( t, \lambda \right) Q_{a} U \left( t, \lambda \right) \lvert \psi(x) \rangle\right )_{a \in \left[ M \right]}. 
\end{align}
Given such a training dataset as input, our learning algorithm is divided into the following two stages:
\begin{enumerate}
    \item A training phase, where, using the expectation values provided in $\mathcal{T}^{\lambda}$, we identify the concept generating the data via Hamiltonian learning. This phase is specified by Algorithm \ref{alg:training-via-ham-learning-short-times}.
    \item An inference stage, involving evaluation of the learned concept, which is performed by means of Hamiltonian simulation and the classical shadows protocol. This phase is specified by Algorithm \ref{alg:inference-via-hamiltonian-simulation}.
\end{enumerate}
We will proceed by considering the expansion of $\mathcal{F}_{a, t}$ shown in Eq.\  (\ref{eq:mathcal-f-a-t-taylor-expansion-simple}). Following an alternative idea sketched in Ref.~\cite{Haah_2024}, we note that one could choose to determine an $\varepsilon$-dependent $t$ such that higher-order terms satisfy an appropriate tail bound, in which case the coefficient becomes recoverable through simply division of an estimate of $\mathcal{F}_{a, t}$ by $4t$. For the scenario considered in this work, where the times contained in the training dataset are sampled from a uniform distribution over $\left[ 0, T \right]$, where $T \in \mathcal{O} \left( \mathsf{poly} \left( n \right) \right)$, we consider the map 
\begin{align}
    \tilde{\mathcal{F}}_{t^{*}} \colon \left[ -1, 1 \right]^{M} \ni \left( \lambda_{a} \right)_{a \in \left[ M \right]} \mapsto \left( \tilde{\mathcal{F}}_{b, t^{*}} \coloneqq \frac{1}{t^{*}} \int_{0}^{t^{*}} \mathcal{F}_{b, t} dt \right)_{b \in \left[ M \right]} \in \mathbb{R}^{M}
\end{align}
(instead of the one defined in Eq.\ (\ref{eq:ham-learning-hkt-non-linear-map})),
where $t^{*}$ is chosen such that 
\begin{align}
    \label{eq:ideal-estimator-series-expansion}
    \frac{1}{t^{*}} \int_{0}^{t^{*}} \mathcal{F}_{b, t}~dt = 2 \lambda_{a} t^{*} + \underset{\leqslant \varepsilon_{\text{tail}}}{\underbrace{\ldots}}
\end{align}
can be shown to hold,
where $\varepsilon_{\text{tail}}$ will be some allocated error budget. Algorithmically, we will compute the time-averaged integral via Monte-Carlo estimation, which we perform simply by taking the empirical average over training samples contained in the short-time window $\left[ 0, t^{*} \right] \subset \left[ 0, T \right]$. Dividing the empirical average by $2t^{*}$ gives us an estimate $\hat{\lambda}_{a}$ of $\lambda_{a}$, up to some precision controlled by $\varepsilon_{\text{tail}}$ and the accuracy of Monte-Carlo estimation. We then calculate the sample complexity of short-time samples, which we denote by $N_{t^{*}}$. We do this by bounding the difference between the Monte-Carlo estimate and the time-averaged integral, which we show to be the expectation of the random variable from which the training samples are drawn, conditioned on $t \in \left[ 0, t^{*} \right]$, thereby allowing us to use Hoeffding's inequality. Finally, we guarantee that enough training samples are contained in the short-time window, as well as in the $x_{1}^{(j)} = 0$ subspace, so as to ensure efficient execution of the training algorithm. To address these requirements, we use the Chernoff bound to compute the total sample complexity $N$, and find that we incur only a polynomial blow-up with respect to $N_{t^{*}}$.

The sample complexity of the inference stage is determined by the cost of implementing the classical shadows protocol, which is logarithmic in the number of observables being predicted and the inverse of the failure probability, and inverse-polynomial in the desired precision. 
We also find that the overall runtime of our learning procedure, a detailed analysis of which we defer to Appendix \ref{app:quantum-learnability}, is polynomial in the relevant parameters. We now give an informal statement of our main quantum learnability result.

\begin{figure}[t]
\refstepcounter{algorithm}
\label{alg:training-via-ham-learning-short-times}
\hrule
\vspace{0.5ex}
\noindent\textbf{Algorithm~\thealgorithm. Training via Hamiltonian learning from short-time randomized probes}
\vspace{0.5ex}
\hrule
\vspace{0.75ex}
\begin{algorithmic}[1]
    \Input 
    \Statex \hspace{\algorithmicindent} 1. Final precision and confidence parameters $\varepsilon, \delta > 0$. 
    \Statex \hspace{\algorithmicindent} 2. A training dataset of the form $\mathcal{T}^{\lambda}$ of size $N$ as defined in Eq.\ (\ref{eq:training-dataset}).
    \Statex \hspace{\algorithmicindent} 3. Knowledge of the maximum degree $\mathfrak{d}$ of the dual-interaction graph $\mathfrak{G}$ of $H$, as well as the Hamiltonian terms $\left( E_{a} \right)_{a \in \left[ M \right]}$.
    \Output Estimates $\hat{\lambda} \coloneqq \bigl( \hat{\lambda}_{a} \bigr)_{a \in \left[ M \right]}$ of $\lambda \coloneqq \left( \lambda_{a} \right)_{a 
    \in \left[ M \right]}$ such that, for all $a \in \left[ M \right]$, $\bigl\vert \lambda_{a} - \hat{\lambda}_{a} \bigl\vert \leqslant \varepsilon/2MT$ with probability at least $1 - \delta/2$.
    \vspace{2mm}
    \State $t^{*} \leftarrow \mathcal{O} \left(  \varepsilon/M T \left\Vert O \right\Vert_{\infty} \left( \mathfrak{d} + 1 \right)^{2} \right)$, where $O$ is the observable being predicted
    \For{$a = 1$ to $M$}
        \State $S_{a, t^{*}} \leftarrow 0$  \Comment{Initialize sum variable.}
        \State $\mathrm{count}_{a, t^{*}} \leftarrow 0$ \Comment{Initialize counter variable.}
        \For{$j = 1$ to $N$}    \Comment{Loop over all data points in training set.}
        \If{$x^{(j)}_{1} = 0$} \Comment{Check whether sample can be used for learning.}
            \If{$t^{(j)} \in \left[ 0, t^{*} \right]$} \Comment{Check if data point is contained in short-time window}
                \If{$P_{a} \bigl\lvert s_{i}^{(j)} \bigr\rangle  = + \bigl\lvert s_{i}^{(j)} \bigr\rangle$} \Comment{Check if $\bigl\vert s_{i}^{(j)} \bigr\rangle$ is a $+1$ eigenstate of $P_{a}$.}
                \State $S_{a, t^{*}} \leftarrow S_{a, t^{*}} + 6B_{a}^{(j)}$
            \EndIf
            \State $\mathrm{count}_{a, t^{*}} \leftarrow \mathrm{count}_{a, t^{*}} + 1$ \Comment{Count number of points in short-time window.}
            \EndIf
        \EndIf
        \EndFor
        \State $\hat{\mathcal{F}}_{a, t^{*}} \leftarrow S_{a, t^{*}}/\mathrm{count}_{a, t^{*}}$ \Comment{Compute empirical average.}
        \State $\hat{\lambda}_{a} \leftarrow \hat{\mathcal{F}}_{a, t^{*}}/2t^{*}$   \Comment{Invert the map $\hat{\lambda}_{a} \mapsto \hat{\mathcal{F}}_{a, t^{*}}$ via division by $2t^{*}$.}
    \EndFor
    \State \Return $\bigl( \hat{\lambda}_{a} \bigr)_{a \in \left[ M \right]}$ \Comment{Return estimates of Hamiltonian coefficients.}
\end{algorithmic}
\vspace{0.75ex}
\hrule
\end{figure}

\begin{figure}[t]
\refstepcounter{algorithm}
\label{alg:inference-via-hamiltonian-simulation}
\hrule
\vspace{0.5ex}
\noindent\textbf{Algorithm~\thealgorithm. Inference via Hamiltonian simulation and classical shadows}
\vspace{0.5ex}

\hrule
\vspace{0.75ex}
    \begin{algorithmic}[1]
        \Input 
        \Statex \hspace{\algorithmicindent} 1. Final precision and confidence parameters $\varepsilon, \delta > 0$.
        \Statex \hspace{\algorithmicindent} 2. Learned coefficients $\bigl( \hat{\lambda}_{a} \bigr)_{a \in \left[ M \right]}$ via Algorithm \ref{alg:training-via-ham-learning-short-times}. 
        \Statex \hspace{\algorithmicindent} 3. Inference point $\bigl( x, t \bigr) \sim \mathsf{Unif} \left( \left\{ 0, 1 \right\}^{p \left( n \right)} \times \left[ 0, T \right] \right)$.
        \vspace{1mm}
        \Statex \hspace{\algorithmicindent} 4. The Hamiltonian terms $\left( E_{a} \right)_{a \in \left[ M \right]}$.
        \Statex \hspace{\algorithmicindent} 5. The set of observables $\bigl( Q_{a} \bigr)_{a \in \left[ M \right]}$.
        \Statex \hspace{\algorithmicindent} 6. Sample-access to the uniform distribution over the ensemble of unitaries $\mathcal{U} \coloneqq \left\{ \mathbbm{1}, H, HS^{\dagger} \right\}^{\otimes n}$.
        \vspace{1mm}
        \Output Estimates $\left( \langle \hat{Q}_{a} \rangle_{\rho \left( x, t \right)} \right)_{a \in \left[ M \right]}$ such that, for all $a \in \left[ M \right]$, $\left\vert \langle \hat{Q}_{a} \rangle_{\rho \left( s, t \right)} - \langle Q_{a} \rangle_{\rho \left( s, t \right)} \right\vert \leqslant \varepsilon$ with probability at least $1 - \delta/2$.
        \vspace{1mm}
        \State $\hat{H} \leftarrow \sum_{a \in \left[ M \right]} \hat{\lambda}_{a} E_{a}$
        \vspace{1mm}
        \State Set $G \leftarrow \lceil 2 \log \left( 4M/\delta \right) \rceil$ \Comment{Number of classical shadow segments.}
        \vspace{1mm}
        \State Set $\tilde{N}$ to be the following: \Comment{Number of copies of time-evolved states required for classical shadows protocol.}
        \begin{align}
            \tilde{N} \leftarrow \frac{136G}{\varepsilon^{2}} \underset{a \in \left[ M \right]}{\max} \hspace{-0.6cm} \underset{\hspace{1cm}\leqslant 4\cdot 3^{\left\vert\text{supp}\left( E_{a} \right)  \right\vert} ~\text{(for Pauli measurements)}}{\underbrace{\biggl\Vert Q_{a} - \overset{=0}{\overbrace{\frac{\operatorname{Tr} \left[ Q_{a} \right]}{2^{n}}}} \mathbbm{1} \biggr\Vert_{\text{shadow}}^{2}}}\hspace{-1cm}
        \end{align}
        \vspace{2mm}
        \For{$\ell = 1$ to $\tilde{N}$}  \Comment{Loop over all copies for data acquisition phase.}
            \vspace{1mm}
            \State $\rho_{\mathrm{init}}^{(\ell)}(x)
            \leftarrow
            U_{\mathrm{enc}}(x)\ket{0^n}\!\bra{0^n}U_{\mathrm{enc}}(x)^{\dagger}$
            \Comment{Encoded input state preparation.}

            \State $\hat{\rho}^{(\ell)}(x,t) \leftarrow e^{-i\hat{H}t} \rho_{\mathrm{init}}^{(\ell)}(x)e^{i\hat{H}t}$
            \Comment{Hamiltonian simulation.}
        
        \EndFor \Comment{Obtain $\tilde{N}$ copies of the time-evolved probe state.}
        \vspace{-1mm}
        \State Run Subroutine \ref{alg:subroutine-data-acquisition-classical-shadows} on the states $\left( \hat{\rho}^{(\ell)}(x,t) \right)_{\ell = 1}^{\tilde{N}}$ \Comment{Data acquisition phase of classical shadow protocol, outputs $\bigl( \hat{\sigma}^{(\ell)} \bigr)_{\ell = 1}^{\tilde{N}}$.}
        \State Run Subroutine \ref{alg:subroutine-median-of-means-classical-shadows} using $\bigl( \hat{\sigma}^{(\ell)} \bigr)_{\ell = 1}^{\tilde{N}}$, $\left( Q_{a} \right)_{a \in \left[ M \right]}$, and $G$ as inputs   \Comment{Median-of-means estimation of expectation values on $\bigl( Q_{a} \bigr)_{a \in \left[ M \right]}$.}
        \State \Return $\left( \langle \hat{Q}_{a} \rangle_{\rho \left( s, t \right)} \right)_{a \in \left[ M \right]}$ \Comment{Predicted expectation values.}
    \end{algorithmic}
\vspace{0.75ex}
\hrule
\end{figure} 
\begin{theorem}[Quantum learnability, informal]
\label{thm:quantum-learnability-informal}
    There exists a family of distributions, namely, the one defined in Def.~\ref{def: input distribution}, such that, with respect to all distributions in that family, for precision and confidence parameters $\varepsilon, \delta > 0$, the concept class in Def.~\ref{def:learning_problem_concept_class} is efficiently PAC-learnable using a quantum learning algorithm whose sample- and time-complexity scales as $\mathcal{O} \left( \mathsf{poly} \left( n, 1/\varepsilon, 1/\delta \right) \right)$.
\end{theorem}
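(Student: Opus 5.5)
The plan is to analyze Algorithm~\ref{alg:training-via-ham-learning-short-times} and Algorithm~\ref{alg:inference-via-hamiltonian-simulation} separately and then combine the two error budgets via a union bound. The overall target is the learning condition of Def.~\ref{def: learning condition}. Throughout, write $\rho(x,t)$ for the true time-evolved state and $\hat\rho(x,t)$ for the state produced by the learner. The expected $\ell_\infty$ error splits into two pieces. The first is a \emph{model error}, $\max_a|\operatorname{Tr}[Q_a(\rho-\hat\rho)]|\le \|Q_a\|_\infty\|e^{-iHt}-e^{-i\hat Ht}\|_\infty\cdot 2$. The second is a \emph{shadow estimation error}. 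For the model error, the standard Duhamel bound gives $\|e^{-iHt}-e^{-i\hat Ht}\|\le t\|H-\hat H\|\le T\sum_a|\lambda_a-\hat\lambda_a|\le TM\max_a|\lambda_a-\hat\lambda_a|$. Therefore accuracy $\varepsilon/(2MT)$ on each coefficient, as promised in the output of Algorithm~\ref{alg:training-via-ham-learning-short-times}, suffices for model error $O(\varepsilon)$ uniformly over all $(x,t)$, irrespective of the distribution $\mathcal{D}_i$. This uniformity is why the arbitrary $x_1=1$ marginal causes no difficulty.

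For training, I would first establish the identity linking the data to $\mathcal{F}_{a,t}$. Average $B_a^{(j)}$ over the uniform product stabilizer input, restricted to inputs where the qubit acted on by $P_a$ is a $+1$ eigenstate. The single-qubit identity $\mathbb{E}_{s}[|s\rangle\langle s|]=I/2$ kills all other qubits, and the restriction produces $(I+P_a)/2$ on the relevant qubit. Since $\operatorname{Tr}[Q_a]=0$, the conditional expectation is proportional to $\mathcal{F}_{a,t}$. The factor $6$ in the algorithm absorbs the $1/6$ selection probability together with this normalization. Next I would prove a tail bound for the expansion Eq.~(\ref{eq:mathcal-f-a-t-taylor-expansion-simple}). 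Expanding $e^{-iHt}P_ae^{iHt}$ via the Hadamard formula (Lemma~\ref{lem:hadamard-formula}), the first-order term gives $4\lambda_a t$ by Pauli orthonormality. The $k$-th nested commutator is supported on a connected cluster in the dual-interaction graph $\mathfrak{G}$ containing $a$, and the number of such clusters grows like $(e(\mathfrak{d}+1))^k$. The remainder is then bounded by $\sum_{k\ge2}(c(\mathfrak{d}+1)t)^k=O((\mathfrak{d}+1)^2t^2)$ for $t\le t_c$. Integrating over $[0,t^*]$ and dividing by $t^*$ gives $2\lambda_at^*+O((\mathfrak{d}+1)^2t^{*2})$. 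After dividing by $2t^*$, the bias is $O((\mathfrak{d}+1)^2t^*)$, which the choice of $t^*$ in Algorithm~\ref{alg:training-via-ham-learning-short-times} makes at most $\varepsilon/(4MT)$.

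The statistical part comes next. Conditioned on $x_1=0$ and $t\le t^*$, the samples $6B_a^{(j)}\mathbbm{1}[\cdot]$ are i.i.d., bounded by $O(1)$ since $\|Q_a\|\le2$, and have mean $\frac{1}{t^*}\int_0^{t^*}\mathcal{F}_{a,t}\,dt$. Hoeffding's inequality plus a union bound over the $M$ coefficients then shows that $N_{t^*}=O\bigl(\log(M/\delta)/(t^*\cdot\varepsilon/MT)^2\bigr)=\mathsf{poly}(n,1/\varepsilon,\log(1/\delta))$ short-time samples give estimation error $\varepsilon/(4MT)$ after dividing by $2t^*$. The short-time $x_1=0$ samples form a fraction $t^*/(2T)=1/\mathsf{poly}$ of the data. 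A Chernoff bound therefore shows that $N=O\bigl((T/t^*)(N_{t^*}+\log(1/\delta))\bigr)$ total samples contain enough of them with probability $1-\delta/4$.

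For inference, the algorithm prepares $U_{\rm enc}(x)|0^n\rangle$, which is a product state or a basis state, and simulates $e^{-i\hat Ht}$. The Hamiltonian $\hat H$ has $M=O(n)$ local terms and norm $O(n)$, and evolution time is at most $T=\mathsf{poly}(n)$. Trotter or any standard simulation method achieves operator error $\varepsilon/4$ in polynomial time. I would then invoke the classical-shadow guarantee with random Pauli measurements: each $Q_a$ is $(\mathfrak{K}+1)$-local, so its shadow norm is $O(3^{\mathfrak{K}+1})=O(1)$. With $\tilde N=O(\log(M/\delta)/\varepsilon^2)$ copies, median-of-means yields all $M$ estimates within $\varepsilon/4$ with probability $1-\delta/2$. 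Adding the three error terms bounds $\|c_\lambda-h\|_\infty$ by $\varepsilon$ pointwise, which implies the expectation bound. One caveat is that the hypothesis is randomized. I would handle this either by noting that the failure probability is per query, with error at most $2\max\|Q_a\|\cdot\delta$ contributing to the expectation, or by choosing $\delta$ accordingly. The main obstacle is the tail bound: showing that the higher-order terms of $\mathcal{F}_{a,t}$ are controlled by $(\mathfrak{d}+1)^2t^2$ uniformly in $n$. This requires a careful cluster-counting argument on $\mathfrak{G}$, following Ref.~\cite{Haah_2024}, in place of a naive $\|H\|$-based bound, which would scale with $n$.
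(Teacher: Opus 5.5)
Your proposal is correct and follows the paper's proof essentially step for step. The shared steps are: Duhamel error propagation, uniform in $(x,t)$; the stabilizer-averaging identity $\mathbb{E}_s[6C_a \mid t]=\mathcal{F}_{a,t}$; a Hadamard-series tail bound via cluster counting; time-averaging over $[0,t^*]$; Hoeffding plus a union bound over $a\in[M]$; a Chernoff bound on the $t^*/2T$ fraction of usable samples; and Hamiltonian simulation with classical shadows at inference.

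The one step to tighten is the counting. The nested commutators are indexed by ordered sequences, of which there are up to $k!(\mathfrak{d}+1)^k$ (Lemma~\ref{lemma:cluster-counting-blmt}); it is the $1/k!$ of the Hadamard series that cancels this to give $(2(\mathfrak{d}+1)t)^k$, not a lattice-animal count $(e(\mathfrak{d}+1))^k$. Your explicit handling of the randomized inference output is a detail the paper leaves implicit.
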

\begin{proof}[Proof-sketch]
    The above theorem is formally restated as Theorem \ref{thm:quantum-learnability-formal}, which may be found in Subsection \ref{subsec:sample-time-complexity-analysis} of Appendix \ref{app:quantum-learnability}. In order to show that we can efficiently PAC-learn the concept class using Algorithms \ref{alg:training-via-ham-learning-short-times} and \ref{alg:inference-via-hamiltonian-simulation}, we need to prove that the overall procedure for solving the learning task is both sample- and time-efficient. Our sample complexity analysis proceeds by analyzing the error decomposition 
    \begin{align}
        \left\vert \hat{\lambda}_{a} - \lambda_{a} \right\vert &= \left\vert \frac{\hat{\mathcal{F}}_{a, t^{*}}}{2t^{*}} - \lambda_{a} \right\vert = \left\vert \frac{\hat{\mathcal{F}}_{a, t^{*}}}{2t^{*}} - \frac{\tilde{\mathcal{F}}_{a, t^{*}}}{2t^{*}} + \frac{\tilde{\mathcal{F}}_{a, t^{*}}}{2t^{*}} - \lambda_{a}  \right\vert \leqslant \left\vert \frac{\hat{\mathcal{F}}_{a, t^{*}}}{2t^{*}} - \frac{\tilde{\mathcal{F}}_{a, t^{*}}}{2t^{*}} \right\vert + \left\vert \frac{\tilde{\mathcal{F}}_{a, t^{*}}}{2t^{*}} - \lambda_{a} \right\vert,
    \end{align}
    where $\bigl( \hat{\lambda}_{a} \bigr)_{a \in \left[ M \right]}$ are the coefficients retrieved in the training phase, $\bigl( \lambda_{a} \bigr)_{a \in \left[ M \right]}$ are the actual coefficients corresponding to the ground-truth Hamiltonian, and $\hat{\mathcal{F}}_{a, t^{*}}$ is the Monte-Carlo estimator used to approximate the time-averaged integral $\tilde{\mathcal{F}}_{a, t^{*}}$. As we have already discussed, we bound the difference between $\hat{\mathcal{F}}_{a, t^{*}}/2t^{*}$ and $\tilde{\mathcal{F}}_{a, t^{*}}/2t^{*}$ using Hoeffding's inequality, and that between $\tilde{\mathcal{F}}_{a, t^{*}}/2t^{*}$ and $\lambda_{a}$ via tail bounds on $\mathcal{F}_{a, t^{*}}$. This is followed by an application of the Chernoff bound to compute the total sample complexity, which, for precision and failure probability $\varepsilon, \delta > 0$ respectively, we find to be polynomial in $n$ and $1/\varepsilon$, and logarithmic in $1/\delta$. The sample complexity of the inference stage is dominated by the cost of running the classical shadows formalism, which is inverse-polynomial in $\varepsilon$ and logarithmic in $1/\delta$ and the number of observables being predicted. The time complexity we show to be polynomial in the relevant parameters as well. The training stage involves elementary constant-time arithmetic operations for each data sample and observable $Q_{a}$, whereas the runtime of the deployment phase is dictated by the cost of Hamiltonian simulation, which, for polynomial evolution time, is also polynomial, and that of the classical shadows protocol, which involves preparation of logarithmically many copies of time-evolved states, followed by classical polynomial-time post-processing.
\end{proof}

\section{Quantum advantage}
\label{sec:quantum-advantage}
In this section, we instantiate the learning problem described by the concept class in Def.~\ref{def:learning_problem_concept_class} in a way that shows a quantum learning advantage. Specifically, we define a family of input distributions $\mathcal{D}_i \in \{\mathcal{D}_i\}_i$ and Hamiltonians $H(\lambda)$ for which no classical algorithm can satisfy the learning condition in Def.~\ref{def: learning condition}. Together with the fact that the quantum algorithm described in Section \ref{sec:quantum-learnability} can still learn the particular instantiation described here, this establishes a quantum learning advantage. Our construction relies on mainly two tools: i) a Hamiltonian family $\{H_{\BQP}(\lambda)\}_\lambda$ due to Oliveira and Terhal~\cite{oliveira2005complexity} that satisfies the low intersection property and whose time-evolution for time $T_{\mathsf{BQP}} \in \mathcal{O} \left( n \right)$ is $\BQP$-complete with high probability, ii) and the seminal boosting result of Schapire~\cite{schapire1990strength} that allows us to promote any efficient classical PAC-learner under our choice of input distributions to polynomial-size classical circuits for deciding $\BQP$ languages (i.e., that would imply $\BQP\subseteq\mathsf{P\slash poly}$).

\subsection{Classical hardness}
\label{sec: circuit}
We start by stating our classical hardness result, which is based on the widely-believed conjecture that $\BQP\not\subseteq\mathsf{P\slash poly}$, and provide a proof-sketch. A detailed proof is given in Appendix~\ref{app:classical-hardness}.
\begin{theorem}[Classical hardness, informal]
    \label{thm: classical hardness}
	For any $(\mathsf{Promise})\BQP$-complete language, there exists a Hamiltonian family $\mathsf{H}_{\BQP}$ specifying a concept class $\mathcal{C}_{U_{\text{enc}}, T}^{\mathsf{H}_{\BQP}}$ as per Def.~\ref{def:learning_problem_concept_class} and a family of input distributions following the specification of Def.~\ref{def: input distribution} such that no randomized polynomial-time classical algorithm $\mathcal{A}_C$ satisfies the PAC-learning condition of Def.~\ref{def: learning condition} for this concept class, even when restricted to a constant learning error $\varepsilon < 1/48$, unless $\BQP \subseteq \mathsf{P/poly}$.
\end{theorem}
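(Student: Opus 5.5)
We argue by contradiction: assuming an efficient randomized classical learner $\mathcal{A}_C$ for $\mathcal{C}_{U_{\text{enc}}, T}^{\mathsf{H}_{\BQP}}$, we build polynomial-size classical circuits deciding an arbitrary $\mathsf{PromiseBQP}$-complete language $L$. The overall route is: fix a hard Hamiltonian family, hide the computational answer in one label coordinate at some evolution time, use a sampler that is classically implementable, and boost the learner's average-case success into worst-case correctness.

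\textbf{Step 1 (Hamiltonian family).} For each input length, take the low-intersection Feynman–Kitaev-type construction of Oliveira and Terhal. It yields $\mathsf{H}_{\BQP}=\{H(\lambda)\}_\lambda$ with $\mathfrak{K},\mathfrak{d}\in\mathcal{O}(1)$ and $M\in\mathcal{O}(n)$. For a fixed $\lambda^\star$ encoding the uniform $\BQP$ circuit, evolving the computational-basis state $\lvert 1,x_2,\dots\rangle$ for a time $t$ drawn from a polynomially large interval reaches the history state with the clock at "end" with probability bounded below by a constant. Measuring an output-qubit Pauli then has expectation shifted by a constant gap, whose sign is set by the acceptance probability of $U_{|x|}(x)$.

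I will arrange, by possibly padding the construction, that this output observable coincides with (or is a known function of) one of the $Q_a$'s. The $Q_a=2iP_aE_a$ are local Paulis, so we can append a term $E_a$ together with an ancillary qubit for this purpose. I will also choose $T$ so that a constant fraction of $t\in[0,T]$ lies in this good window, giving a time-averaged separation $\ge c$ between yes and no instances.

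\textbf{Step 2 (distributions).} For $L$ and each length $n$, define $\mathcal{D}_i$ by sending $x_1=1$ bits to the uniform distribution over valid encodings of length-$n$ instances. Alternatively, for nonuniform hardness, use any distribution; Schapire boosting will handle arbitrary ones. The $x_1=0$ half is uniform stabilizer strings, and its labels are classically computable only for short times. However, the training set can be supplied as advice: for fixed $\lambda^\star$, a polynomial-size sample from $\mathcal{D}_i$ with its labels is a valid advice string.

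\textbf{Step 3 (from learning to deciding).} Fix $\varepsilon<1/48$. By Markov's inequality, $h$ has $\|h-c\|_\infty\le$ a constant below the gap on all but a small constant fraction of $(x,t)$ with $x_1=1$; note that conditioning on $x_1=1$ costs a factor $2$. We take a majority vote over random $t$ of the thresholded relevant coordinate. This gives a weak-then-strong learner for the Boolean function $x\mapsto L(x)$ under arbitrary distributions on instances. Schapire's boosting then converts it into a hypothesis correct on all $x$ of length $n$: we run it against a distribution uniform on the (exponentially many, but boosting only needs sample access) inputs, or use the standard argument that strong learnability under every distribution yields a circuit consistent with every input. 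Fixing the learner's random coins and the training samples via advice gives a $\mathsf{P/poly}$ circuit, so $\BQP\subseteq\mathsf{P/poly}$.

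\textbf{Main obstacle.} I expect the hardest part to be Step 1: showing the output is exposed with a constant gap in one $Q_a$ coordinate, averaged over uniform $t\in[0,T]$. This matters because the Oliveira–Terhal clock dynamics are a quantum walk, and the end-of-clock probability is only constant on average. My first attempt would be to pad the circuit with identity gates so that a constant fraction of clock positions carry the answer, then use the walk's time-averaged spreading bound to get the constant, tuning the constants to reach $1/48$.
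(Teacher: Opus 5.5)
Your overall architecture matches the paper's. That includes the Oliveira--Terhal low-intersection clock Hamiltonian with identity padding, the answer exposed in one $Q_a$ coordinate, and an arbitrary $\mu$ on the $x_1=1$ half. It also includes paying a factor $2$ for conditioning on $x_1=1$, applying Markov, and finishing with Schapire boosting with coins and training data hardwired as advice.

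\textbf{The gap: Step~1 targets the wrong quantity.} A constant completed-clock probability does not suffice. Write the relevant label coordinate as $q(z,t)=m(z)\,p_{\mathrm{done}}(t)-r(z,t)$. Here $m(z)$ is the output-qubit expectation after the circuit, and $r(z,t)$ collects the contributions of clock positions $j<K$. The only control on $r$ is $|r(z,t)|\le 1-p_{\mathrm{done}}(t)$, and it depends on $z$ through the intermediate states $\lvert\psi_j(z)\rangle$.

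Suppose the clock is in the completed region only with constant time-averaged probability, say $\bar p_T=0.3$ with $\gamma=1/3$. Then the signal $m(z)\bar p_T$ can be as small as $0.1$, while the averaged garbage can be up to $0.7$ of either sign. So the sign is not set by the acceptance probability, and no common threshold separates yes from no instances.

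What is needed, and what the paper proves in its lemma on the time-averaged completed-clock mass, is $\bar p_T\ge 1-\delta'$ with $\delta'$ small relative to the $\mathsf{BQP}$ margin. The paper takes $\delta'=\gamma/4$, which gives $|\bar q(z)-m(z)|\le 2\delta'=\gamma/2$. This comes from the exact spectrum of $H_{\mathrm{path}}$. The diagonal part of the time average equals $(2R+3)/(2(L+2))$, essentially the fraction of completed clock positions. The off-diagonal part is $\mathcal{O}(R\log L/T)$. Hence padding $R=\Theta(K/\delta')$ and $T=\mathcal{O}(K\delta'^{-2}\log(K/\delta'))$ suffice.

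Your padding idea is the right mechanism, but the completed fraction must be $1-\mathcal{O}(\gamma)$, not merely some constant. Alternatively, a clock-projected observable $Z_1\otimes\Pi_{\mathrm{done}}=\tfrac12(Z_1-Z_1Z_K)$ removes $r$ entirely and makes a constant $\bar p_T$ enough. This requires both $Z_1$ and $Z_1Z_K$ to be among the $Q_a$, and changes the $1/48$ constant.

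\textbf{The same issue breaks Step~3 as written.} A majority vote over $t$ of pointwise signs requires that, for most $z$, more than half of the times have $p_{\mathrm{done}}(t)$ close to $1$. With a constant good fraction below $1/2$, the vote is simply wrong. Once $\bar p_T$ is near $1$, Markov over $t$ bounds the bad times and your vote does work. With $\varepsilon<1/48$ the joint $(z,t)$ error is about $1/8$, hence about $1/4$ over $z$. The paper instead averages $h$ over $t$ first and applies Markov once to $|\bar h(z)-\bar q(z)|$, which avoids the second Markov step.

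\textbf{A smaller point.} If you expose the observable by appending an ancilla term $E_a$, that term must carry coefficient $0$ in the hard instance so it does not perturb the clock walk. This is exactly the paper's device: an idle-phase transition term whose gate is effectively set to identity.
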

\begin{proof}[Proof-sketch]
	Here, we summarize the main argument behind the proof. We begin by defining a Hamiltonian family $\mathsf{H}_{\BQP} = \{ H_{\BQP}(\lambda)\}_\lambda$ such that $\BQP$ computations can be embedded into the Hamiltonian evolution of $H_{\BQP}(\lambda)$ by properly choosing $\lambda$. Let $\mathcal{L}$ be a $(\mathsf{Promise})\BQP$-complete language over input bit-strings $z\in\{0,1\}^m$. Since $\mathcal{L}\in\BQP$, there exists a polynomial-depth quantum circuit $U_{\BQP}$ deciding $\mathcal{L}$, in the sense that
	\begin{itemize}
		\item if $z\in\mathcal{L}$, then applying $U_{\BQP}$ to $\ket{z}$ outputs $1$ with probability at least $2/3$;
		\item if $z\notin\mathcal{L}$, then applying $U_{\BQP}$ to $\ket{z}$ outputs $0$ with probability at least $2/3$.
	\end{itemize}
	Using the construction of Ref.~\cite{oliveira2005complexity}, one obtains a low-intersection local Hamiltonian $H_{\BQP}$ acting on $n \in \mathcal{O} \left( \mathsf{poly} \left( m \right) \right)$ qubits whose real-time evolution for polynomially or linearly long times $T_{\BQP} \in \mathcal{O} \left( n \right)$ implements the circuit $U_{\BQP}$ with high probability (see Appendices~\ref{subsec:pauli_decomposition} and~\ref{subsec:time-evolution-feynman-kitaev-hamiltonian}). The key to making this Hamiltonian low-intersection compared to the standard Feynman/Kitaev construction is to expand the simulated circuit $U_{\BQP}$ over $\mathcal{O} \left( \mathsf{poly} \left( m \right) \right)$ many qubits (or, more precisely, $mK$ qubits for a depth $K$ circuit) such that any qubit involved in this circuit gets only acted upon by a constant number of gates. In the Feynman-Kitaev construction, this translates into each qubit being involved in only a constant number of local Hamiltonian terms, i.e., the Hamiltonian satisfying the low-intersection property. Moreover, the construction can be easily modified so that $H_{\BQP}$ simulates $U_{\BQP}$ with arbitrarily high probability for a randomly sampled evolution time in $[0,T_{\BQP}]$. Expanding each local term of $H_{\BQP}$ in the Pauli basis gives a Hamiltonian of the form in Eq.~\eqref{eq: hamiltonian}. The Hamiltonian family $\mathsf{H}_{\BQP} = \{ H_{\BQP}(\lambda)\}_\lambda$ is then defined by considering all Pauli expansions resulting from this construction, for arbitrary circuits $U_{\BQP}$. In this way, the Pauli coefficients $\lambda\in[-1,1]^M$ specify the particular Hamiltonian $H_{\BQP}$ that simulates a given circuit $U_{\BQP}$.

	We now relate this Hamiltonian evolution to the decision problem for $\mathcal{L}$. For inputs $x$ with $x_1=1$, we identify the bits $z~=~x_2\cdots x_{m+1}$
	with an input to the $\BQP$ computation. Measuring the observable $O = Z\otimes I\otimes\cdots\otimes I$ on the state obtained after the evolution generated by $H_{\BQP}$ distinguishes whether $z\in\mathcal{L}$ or $z\notin\mathcal{L}$. More precisely, the concept associated with $\lambda$ contains a component of the form
	\begin{equation}
		\left( c_{\lambda}(x,t) \right)_{i^{\star}}
		=
		\operatorname{Tr}\!\left[
		O e^{-iH_{\BQP}(\lambda)t}\rho(x)e^{iH_{\BQP}(\lambda)t}
		\right],
	\end{equation}
	where $i^{\star}$ is the index associated with component corresponding to $O$. Averaging this component over $t\in[0,T_{\BQP}]$, and restricting our attention to the subset of inputs with $x_1=1$, we find that the resulting function
	\begin{equation}
		\overline{q}(z) = \frac{1}{T}\int_{t=0}^{T} \left( c_{\lambda}((\bm1, z, \bm0),t) \right)_{i^{\star}} 
	\end{equation}
	approximates the $\BQP$ computation deciding $\mathcal{L}$ up to good margin for all $z\in\{0,1\}^m$. Armed with this observation, the only difficulty from here onward is that a classical learner achieving non-trivial \emph{average} performance with respect to some input distributions $\{\mathcal{D}_i\}_i$ does not immediately guarantee that it can solve the $\BQP$ problem. We need additional machinery to leverage this hypothetical efficient classical learner to construct an algorithm in $\mathsf{P/poly}$ that can decide the $\mathsf{BQP}$ language.

	Suppose now, toward a contradiction, that there were a randomized polynomial-time classical algorithm satisfying the learning condition for the concept class $\mathcal{C}_{U_{\text{enc}}, T}^{\mathsf{H}_{\BQP}}$ with small overall error $\varepsilon \in \mathcal{O} \left( 1 \right)$. Applying this learner to the concept specified by $\lambda$ would produce, with high probability, a hypothesis that approximates the component $\left( c_{\lambda}(x,t) \right)_{i^{\star}}$ with average error $\mathcal{O} \left( \varepsilon \right)$ over the subspace $x_1 = 1$ and $t\in[0,T_{\BQP}]$. By further averaging the hypothesis over the time window $t\in[0,T_{\BQP}]$ and suitably setting the threshold error $\varepsilon$ to $1/48$, one obtains a classical hypothesis that correctly decides $\mathcal{L}$ with probability strictly more than $1/2$ with respect to any distribution $\mu$ over $z\in\{0,1\}^m$.

	The final step uses the result of Ref.~\cite{schapire1990strength}, which states that if a function is PAC-learnable under arbitrary distributions, then it can be computed by a polynomial-size classical circuit on every input. We recall this result and explain its consequences in Appendix~\ref{app:classical-hardness}. Applying it here would imply the existence of a polynomial-size classical circuit deciding the $(\mathsf{Promise})\BQP$-complete language $\mathcal{L}$. Therefore $\BQP\subseteq\mathsf{P/poly}$, contradicting the assumed hardness. Hence no randomized polynomial-time classical learner can satisfy the learning condition for $\mathcal{C}_{U_{\text{enc}}, T}^{\mathsf{H}_{\BQP}}$, unless $\BQP\subseteq\mathsf{P/poly}$.
\end{proof}

\subsection{Quantum learnability for a classically hard case}
\label{subsec:quantum-learnability-for-hard-case}

Recall that the objective of our quantum learner is to output an estimate $\hat{\lambda}$ such that the hypothesis function $h$, defined by $h(x,t)=c_{\hat{\lambda}}(x,t)$, satisfies the learning condition in Eq.~(\ref{eq:learning}). For this, the learning algorithm devised in the last section (Algorithm \ref{alg:training-via-ham-learning-short-times}) is directly applicable, as per Theorem \ref{thm:quantum-learnability-informal}. An interesting observation however is that, in the case of our Hamiltonian family $\mathsf{H}_{\BQP}$, it is sufficient to recover an $\hat{\lambda}$ such that $\bigl\Vert \hat{\lambda} - \lambda \bigr\Vert_{\infty} \leqslant \varepsilon \in \mathcal{O}(1)$ in order to recover $H_{\BQP}(\lambda)$ exactly. In turn, this means that we can learn the concept class perfectly, i.e., with \emph{zero} generalization error, and that too using less samples than stated in Theorem \ref{thm:quantum-learnability-informal}.

\begin{lemma}[Quantum learnability of $\mathcal{C}^{\mathsf{H}_{\mathsf{BQP}}}_{U_{\text{enc}}, T}$]
    \label{lemma: quantum learnability const error}
	Consider the learning problem corresponding to the concept class $\mathcal{C}_{U_{\text{enc}}, T}^{\mathsf{H}_{\BQP}}$ that is defined by the Hamiltonian family $\mathsf{H}_{\BQP}$ presented in Theorem~\ref{thm: classical hardness}. Then, Algorithm~\ref{alg:training-via-ham-learning-short-times} is able to meet the learning condition in Def.~\ref{def: learning condition} with zero error, i.e., $h( \cdot ) = \mathcal{A} (\mathcal{T}^\lambda,\varepsilon, \delta, \cdot~)$ is such that with probability $1-\delta$,
    \begin{equation}
        h(x,t)=c_\lambda(x,t) \quad \forall(x,t)\in\{0,1\}^{p(n)}\times[0,T],
    \end{equation}
    for any concept $c_{\lambda}$ in $\mathcal{C}_{U_{\text{enc}}, T}^{\mathsf{H}_{\BQP}}$. Moreover, the number of samples required by $\mathcal{A}$ is $N_{\mathsf{BQP}} \in \mathcal{O} \left( T \log \left( M / \delta \right) \right)$, and the time-complexity of $\mathcal{A}$ is given by $\mathcal{O} \left( M N_{\mathsf{BQP}} \right)$. 
\end{lemma}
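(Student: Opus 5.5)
The plan is to use the discreteness of the family $\mathsf{H}_{\BQP}$. By construction (Oliveira--Terhal, expanded in the Pauli basis), every coefficient $\lambda_a$ of a Hamiltonian in $\mathsf{H}_{\BQP}$ lies in a fixed finite set $\Lambda\subset[-1,1]$. This set is determined by the Pauli decompositions of the finitely many local gate/clock terms, up to a fixed normalisation. Consequently the minimal gap $\gamma \coloneqq \min\{|\mu-\mu'| : \mu\neq\mu'\in\Lambda\}$ is a constant independent of $n$. The first step is to record this explicitly from the Pauli decomposition in the appendix. The learner is then Algorithm~\ref{alg:training-via-ham-learning-short-times} run with a constant target accuracy $\varepsilon_0<\gamma/2$ on each coefficient, followed by a rounding step: each $\hat{\lambda}_a$ is replaced by the nearest element of $\Lambda$. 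On the event $\Vert\hat\lambda-\lambda\Vert_\infty<\gamma/2$, rounding returns $\lambda$ exactly. Then $h=c_{\hat\lambda}=c_\lambda$ pointwise on all of $\{0,1\}^{p(n)}\times[0,T]$, which is the zero-error claim. Inference is Algorithm~\ref{alg:inference-via-hamiltonian-simulation} with the exact Hamiltonian, in the same sense that "$h=c_{\hat\lambda}$" is meant in the statement.

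The second step is to redo the error decomposition from the proof sketch of Theorem~\ref{thm:quantum-learnability-informal} with constant accuracy. The tail bound on $\tilde{\mathcal{F}}_{a,t^*}/2t^*-\lambda_a$, which scales like $\mathcal{O}(t^*(\mathfrak d+1)^2)$ by the Hadamard-formula estimate, must be at most $\varepsilon_0/2$. This forces only a constant $t^*=\Theta(\varepsilon_0/(\mathfrak d+1)^2)=\Theta(1)$, rather than the $\varepsilon/MT$-dependent choice used before. This is where the savings come from. The Monte-Carlo term is handled by Hoeffding: the summands $6B_a^{(j)}$ are bounded by a constant, so the estimate divided by $2t^*$ is within $\varepsilon_0/2$ of its mean with failure probability $\delta/(2M)$ once $N_{t^*}\in\mathcal{O}(\log(M/\delta))$. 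Here the $+1$-eigenstate filter over uniformly random single-qubit stabilizers has constant probability $1/6$, which is absorbed in the factor $6$. A union bound over $a\in[M]$ then gives simultaneous accuracy.

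The third step is the Chernoff count. A sample lands in the usable set ($x_1=0$, $t\le t^*$) with probability $t^*/(2T)=\Theta(1/T)$. By a multiplicative Chernoff bound, $N_{\BQP}\in\mathcal{O}(T\log(M/\delta))$ total samples give at least $N_{t^*}$ usable ones with failure probability $\le\delta/2$. The run time is $\mathcal{O}(MN_{\BQP})$, since for each $a$ the algorithm makes one constant-time pass over the data; the rounding step adds only $\mathcal{O}(M|\Lambda|)=\mathcal{O}(M)$.

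The main obstacle I expect is the first step: justifying that the coefficient set $\Lambda$ is genuinely finite with an $n$-independent gap. This is needed uniformly over all circuits $U_{\BQP}$, including the constant normalisation that keeps $\lambda\in[-1,1]^M$. It is also needed even though the same Pauli string $E_a$ may receive contributions from several overlapping local terms. Low intersection should bound the number of such contributions by a constant, so sums of them still range over a finite set. If the gate set is not finite, I would first fix a finite universal gate set (e.g.\ Hadamard, Toffoli) in the construction of $\mathsf{H}_{\BQP}$, which preserves $\BQP$-completeness.
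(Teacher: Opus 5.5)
Your proposal is correct and follows essentially the same route as the paper. The paper likewise uses the finite, $n$-independent coefficient set $\Lambda$, runs Algorithm~\ref{alg:training-via-ham-learning-short-times} at constant precision below half the minimal gap and rounds to $\Lambda$, then reuses the tail bound (constant $t^{*}$), Hoeffding with $N_{t^{*}}\in\mathcal{O}(\log(M/\delta))$, and the Chernoff step with per-sample probability $t^{*}/2T$ to get $N_{\mathsf{BQP}}\in\mathcal{O}(T\log(M/\delta))$ and runtime $\mathcal{O}(MN_{\mathsf{BQP}})$; the only difference is that you deduce a constant gap abstractly from the finiteness of $\Lambda$, whereas the paper quotes an explicit lower bound $\varepsilon_P/32$ (with $\varepsilon_P\approx 0.66$) read off from the $T$- and $H$-gate Pauli expansions.
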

\begin{proof}[Proof]
    The proof follows from Lemmas~\ref{lemma: quantum learnability const error app} and~\ref{lem:samp-time-complexity-of-quantumly-learning-classically-hard-instance} in the Appendix. In particular, Lemma~\ref{lemma: quantum learnability const error app} guarantees that the Hamiltonian associated with the classically hard instance can be written as a $5$-local, low-intersection Hamiltonian $ H_{\mathsf{BQP}}(\lambda)=\sum_{a \in \left[ M \right]}\lambda_a E_a$, where $\left\{ E_a \right\}_{a \in \left[ M \right]}$ are local Pauli strings and the coefficients $\lambda_a$ belong to a fixed finite set $\Lambda\subset\mathbb{R}$ that is independent of the system size. Hence, to reconstruct $H_{\mathsf{BQP}}(\lambda)$ exactly, it suffices to estimate each coefficient with precision smaller than half the minimum separation between any two distinct admissible values. In the Pauli expansion provided in Lemma~\ref{lemma: quantum learnability const error app}, two identical Pauli terms can appear in $H_{\BQP}$ only in the case we consider the $Z$-dependent part of the gates \begin{equation}
    T = \left( \frac{1 + e^{i \pi / 4}}{2} \right) I + \left( \frac{1 - e^{i \pi / 4}}{2} \right) Z 
    \end{equation}
    and $H = ( X + Z )/\sqrt{2}$. In this case, the difference of the coefficients is
\begin{equation}\label{eq:precision_hl}  
        \varepsilon_P=\left|\frac{1-e^{i\pi/4}}{2}-\frac{1}{\sqrt{2}}\right|\approx 0.66.
    \end{equation}
    After accounting for the normalization factors introduced by the clock-register terms and the two $\mathrm{SWAP}$ operations, the minimum separation between two distinct admissible coefficients is lower-bounded by $\varepsilon'_P=\varepsilon_P/32$. We may therefore run $\mathcal{A}$ with coefficient-estimation precision strictly smaller than $\varepsilon'_P/2$. With probability at least $1-\delta$, each estimated coefficient $\hat{\lambda}_a$ can then be uniquely rounded to the corresponding exact value $\lambda_a$. Consequently, $H_{\BQP}(\hat{\lambda})=H_{\mathsf{BQP}}(\lambda)$, and the hypothesis returned by the learner coincides with the target concept 
    \begin{equation}
        h(x,t)=c_\lambda(x,t)
        \qquad
        \forall (x,t)\in\{0,1\}^{p(n)}\times[0,T]
    \end{equation}
    for every input and every evolution time.
    Moreover, the ability to learn the Hamiltonian exactly also has an impact on the sample complexity of our learning algorithm. Indeed, unlike the general case considered in Theorem~\ref{thm:quantum-learnability-informal}, the error propagation in Hamiltonian simulation (see Lemma~\ref{lem:error-prop-ham-coeff-exp-values}) is trivially accounted for, which significantly reduces the sample complexity of Algorithm~\ref{alg:training-via-ham-learning-short-times}. As detailed in Lemma~\ref{lem:samp-time-complexity-of-quantumly-learning-classically-hard-instance}, we can get away with a sample complexity $N_{\mathsf{BQP}} \in \mathcal{O} \left( T \log \left( M / \delta \right) \right)$, as opposed to  
    \begin{equation}
    N \in \mathcal{O} \left( \frac{M^{5} T^{6} \left\Vert O \right\Vert_{\infty}^{3} \left( \mathfrak{d} + 1 \right)^{6}}{\varepsilon^{5}} \log \left( \frac{M}{\delta} \right) \right) 
    \end{equation}
    in the general case of Theorem~\ref{thm:quantum-learnability-informal}, which gives $N = \mathcal{O} \left(M^{5} T^{6} \log \left( M/\delta \right) \right)$ upon plugging in the scaling $\varepsilon \in \mathcal{O} \left( 1 \right)$.
\end{proof}

\section{Discussion}
\label{sec:discussion}

\subsection{Distinction between the present work and earlier work}
\label{subsec:distinction-bw-present-and-alice}
We recall the PAC-learnable version of the concept class in Ref.~\cite{barthe2025quantumadvantagelearningquantum}:
\begin{definition}[PAC-learnable Hamiltonian dynamics concept class, Def.~4,~\cite{barthe2025quantumadvantagelearningquantum}]
    \label{def:hamiltonian-dynamics-concept-class-log-pac-learnable}
    Consider a sequence of $n$-qubit Hamiltonians $\left\{ H_{n} \left( x, \lambda \right) \right\}_{n \in \mathbb{N}}$, such that each Hamiltonian $H_{n} \left( x, \lambda \right)$ is parametrized by a bit-string $x \in \left\{ 0, 1 \right\}^{p \left( n \right)}$ of length $p \left( n \right)$, where $p \colon \mathbb{N} \rightarrow \mathbb{N}$, and $\lambda \in \left[ 0, 1 \right]^{d \left( n \right)}$, a vector of continuous, unknown parameters, with $d \left( n \right) \in \mathcal{O} \left( \log n \right)$. For $\left\{ O_{n} \right\}_{n \in \mathbb{N}}$ a sequence of observables, consider the concept class
    \begin{align}
        \mathcal{C}^{\mathsf{H}}_{n, d} \coloneqq \left\{ c_{\lambda} \colon \left\{ 0, 1 \right\}^{p \left( n \right)} \ni x \mapsto \langle 0 \rvert U_{n}^{\dagger} \left( x, \lambda \right) O_{n} U_{n} \left( x, \lambda \right) \lvert 0 \rangle \in \mathbb{R} ~\bigl\vert~\lambda \in \left[ 0, 1 \right]^{d \left( n \right)} \right\},
    \end{align}
    where $U_{n} \left( x, \lambda \right) \coloneqq e^{-i H_{n} \left( x, \lambda \right) \tau}$, where $\tau \in \mathbb{R}_{> 0}$ is a fixed, positive real number.
\end{definition}
Let us enumerate the key differences between the work of Barthe et al. and that of our paper. To begin with, in Ref.~\cite{barthe2025quantumadvantagelearningquantum}, for $n \in \mathbb{N}$, the unknown part of the Hamiltonian $H_{n} \left( x, \lambda \right)$, i.e., the part parametrized by $\lambda$, can contain only logarithmically many terms, since the dimensionality of $\lambda$ must be $d \left( n \right) \in \mathcal{O} \left( \log n \right)$. This is in contrast with our setting, since we impose no such restriction. In particular, for our learning task, the unknown Hamiltonian generating the training data satisfies the low-intersection property described in Def.~\ref{def:low-intersection-hamiltonians}, which leads to the number of constituent terms being $\mathcal{O} \left( n \right)$, a scenario that encompasses many-body systems of the kind typically encountered in condensed matter physics. 

Another difference between our work and Ref.~\cite{barthe2025quantumadvantagelearningquantum} lies in the form of the training data considered. Specifically, each of our training examples is labeled by a list of expectation values on the observables $\left( Q_{a} \right)_{a \in \left[ M \right]}$, whereas the authors of Ref.~\cite{barthe2025quantumadvantagelearningquantum} require only a single expectation value on some arbitrary observable $O$ per label. In this sense, our training data is richer and potentially more informative, which leads to a methodological difference between our work and Ref.~\cite{barthe2025quantumadvantagelearningquantum}---the stronger access model enables us to learn the unknown Hamiltonian generating the data, which in turn allows us to predict a wider range of observables than in Ref.~\cite{barthe2025quantumadvantagelearningquantum}. Put differently, our training procedure approximately identifies the concept generating the training data by learning the underlying Hamiltonian itself. Our approach is thus a form of proper PAC-learning, where the learner is required to output a hypothesis that is contained in the concept class underlying the learning problem. By contrast, in Ref.\  \cite{barthe2025quantumadvantagelearningquantum}, the hypothesis class differs from the concept class described in Def.~\ref{def:hamiltonian-dynamics-concept-class-log-pac-learnable}; in particular, it is given by a family of generalized trigonometric polynomials associated with the parametrized Trotter circuit implementing the time-evolution operator $U_{n} \left( x, \lambda \right) = e^{-i H_{n} \left( x, \lambda \right) t}$:
\begin{align}
    \mathcal{H}_{n, d}^{\mathsf{PQC}} \coloneqq \left\{ h_{\lambda} \colon \left\{ 0, 1 \right\}^{p \left( n \right)} \ni x \mapsto \hspace{-0.5mm}\sum_{\ell \in \mathcal{L}}b_{\ell} \hspace{-0.7mm}\left( x \right) e^{i \langle \lambda, \ell \rangle} \in \mathbb{R} ~\biggl\vert \hspace{0.25mm}\lambda \in \left[ 0, 1 \right]^{d \left( n \right)} \right\},
\end{align}
where $\mathcal{L}$ is the frequency spectrum of the parametrized quantum circuit (PQC) and $\left\vert \mathcal{L} \right\vert = d \left( n \right) \in \mathcal{O} \left( \log n \right)$. 

It has been observed in Ref.~\cite{barthe2025quantumadvantagelearningquantum} that the aforementioned logarithmic scaling of $d \left( n \right)$ is a fundamental limitation imposed by complexity theory. Specifically, this follows from the results of Arunachalam et al.~\cite{arunachalam2021quantum}, which showed conditional quantum hardness for learning certain classes of shallow classical circuits. More precisely, they define the following concept class:  
\begin{align}
    \mathcal{C}_{n} \subseteq \left\{ c \colon \left\{ 0, 1 \right\}^{n} \rightarrow \left\{ 0, 1 \right\} \right\},
\end{align}
where $\mathcal{C}_{n}$ corresponds to the circuit classes $\mathsf{TC}^{0}, \mathsf{AC}^{0}$, and $\mathsf{TC}_{2}^{0}$. Then, working within the framework of Ref.~\cite{bshouty1995learning}, they consider a \emph{quantum} PAC-learner $\mathcal{A}$, which is given examples of the form 
\begin{align}
    \lvert \psi_{c, \mathcal{D}} \rangle \coloneqq \sum_{x \in \left\{ 0, 1 \right\}^{n}} \sqrt{\mathcal{D} \left( x \right)} \lvert x, c \left( x \right) \rangle, 
\end{align}
where $\mathcal{D} \colon \left\{ 0, 1 \right\}^{n} \rightarrow \left[ 0, 1 \right]$ is an unknown distribution, taken to be fixed in Ref.~\cite{arunachalam2021quantum} as opposed to arbitrary, and the learning criteria to be satisfied by $\mathcal{A}$ are analogous to those in Def.~\ref{def:eff-pac-learnability}. Additionally provided in their setting is access to a quantum membership oracle $O_{c} \colon \lvert x, b \rangle \mapsto \lvert x, b \oplus c \left( x \right) \rangle$, where $b \in \left\{ 0, 1 \right\}$. Note that the quantum PAC-learning framework assumes stronger access to training data than its classical counterpart, since examples are provided in the form of the coherent superposition $\sum_{x} \sqrt{\mathcal{D} \left( x \right)} \lvert x, c \left( x \right) \rangle$, so that one exactly recovers the ordinary classical example $\left( x, c \left( x \right) \right)$, with $x \sim \mathcal{D}$, via measurement in the computational basis. Consequently, one can simulate classical PAC-learning within this framework, and it follows that hardness results for the quantum PAC-learning setting also apply to classical PAC-learners. 

We briefly recall the hardness results of Ref.~\cite{arunachalam2021quantum}. There, assuming that \emph{ring-learning with errors} ($\mathsf{RLWE}$) cannot be solved in quasi-polynomial-time using quantum computers~\cite{lyubashevsky2010ideal}, the authors showed that efficient (weak) quantum PAC-learners do not exist for polynomial-size $\mathsf{TC}^{0}$ circuits. Furthermore, under stronger sub-exponential $\mathsf{RLWE}$ assumptions, polynomially sized $\mathsf{AC}^{0}$ circuits were also shown to be unlearnable using $n^{\mathcal{O} \left( \log^{\nu}  \hspace{-1mm}n \right)}$-time quantum PAC-learners, with $\nu$ being some constant. Additionally, under $\mathsf{LWE}$ hardness, depth-$2$ circuits contained in $\mathsf{TC}^{0}_{2}$ were found to not be weakly quantum PAC-learnable in polynomial-time. A sufficiently general PQC, i.e., one with polynomially many unknown parameters, or a Hamiltonian containing $\mathcal{O}\left( n \right)$-many terms whose time-evolution can be compiled into such a circuit, is capable of encoding these shallow circuit classes through a single expectation value obtained by measuring $Z_{1}$ on the output qubit. Hence, an efficient quantum learner capable of solving the learning task of Ref.~\cite{barthe2025quantumadvantagelearningquantum} in full generality would violate the hardness results of Ref.~\cite{arunachalam2021quantum}. Our work avoids this obstruction by considering a different form of training data, in which the labels are given by a richer vector of expectation values chosen to enable Hamiltonian learning.

Finally, in Def.~\ref{def:hamiltonian-dynamics-concept-class-log-pac-learnable}, it should be noted that the unitary $U_{n} \left( x, \lambda \right) \coloneqq e^{-iH_{n} \left( x, \lambda \right) \tau}$ corresponds to time-evolution by a fixed time $\tau$, so that the training data considered in Ref.~\cite{barthe2025quantumadvantagelearningquantum} are expectation values of states that have been time-evolved by $\tau$. This is unlike the setting considered in our work, since our training samples correspond to a uniform distribution over the interval $\left[ 0, T \right]$, with $T \in \mathcal{O} \left( \mathsf{poly} \left( n \right) \right)$.

\subsection{Summary and future directions}
\label{subsec:future-directions}

To summarize, in this work, we have formulated a physically motivated learning problem in which the goal is to predict expectation values of quantum states that have been time-evolved by an unknown, low-intersection Hamiltonian. Assuming $\mathsf{BQP} \not\subseteq \mathsf{P/poly}$ to hold, we prove that the aforementioned task demonstrates an exponential quantum-classical learning separation with respect to a certain family of input distributions. We show the existence of an efficient quantum learner, whose training phase involves a simplified version of the Hamiltonian learning protocol of Ref.~\cite{Haah_2024}, while its inference stage relies on Hamiltonian simulation and the classical shadows formalism. On the other hand, to prove classical hardness, we use a low-intersection variant of the Feynman-Kitaev clock Hamiltonian, whose polynomial-time dynamics encode $\mathsf{BQP}$-complete computations.
In this sense, our work shows a proper quantum-classical separation for a meaningful learning task that is naturally motivated by a plausible physical application. We finally conclude by listing a few compelling future directions and questions that have been left open in our work. 

\vspace{2.5mm}
\noindent\textbf{Applicability of long-time Hamiltonian learning.} Currently, our approach in this work relies on short-time samples to perform Hamiltonian learning in the training phase. One could also consider whether the unknown Hamiltonian can be learned within a setting involving long-time samples only. In this regard, we highlight the recent works of Refs.~\cite{shin2026heisenberglimitedhamiltonianlearningshorttime, depradenne2026learninghamiltonianslongtimes}. The former shows that Heisenberg-limited Hamiltonian learning can be performed even without access to short-time control; however, their method interleaves queries to long-time Hamiltonian evolution with certain learned auxiliary operations, which does not correspond exactly to our setting. 

The latter work of Ref.~\cite{depradenne2026learninghamiltonianslongtimes} studies the recoverability of local Hamiltonians from arbitrarily long times, and provides \emph{average-case identifiability} guarantees over certain local, random ensembles of Hamiltonians. However, whether their methods apply to the PAC-learning setting considered here, where the concept class may contain instances corresponding to worst-case Hamiltonian instances such as those in $\mathsf{H}_{\mathsf{BQP}}$, remains unclear. Nonetheless, we view the question of whether long-time Hamiltonian learning algorithms akin to Refs.~\cite{shin2026heisenberglimitedhamiltonianlearningshorttime, depradenne2026learninghamiltonianslongtimes} can be applied to our supervised learning problem as an interesting future direction and worthy of further study.

\vspace{2.5mm}
\noindent\textbf{Extensions to other physical settings of interest.} Obvious extensions of our work would be to show learning separations for predicting ground and thermal state properties. In particular, for the analogous scenario involving Gibbs states, it is foreseeable that a similar approach could hold for some distribution over inverse temperatures $\beta \sim \left[ \beta_{\text{low}}, \beta_{\text{high}} \right]$, with high (or constant) temperatures corresponding to samples enabling Hamiltonian learning~\cite{Haah_2024, bakshi2023learningquantumhamiltonianstemperature, narayanan2024improvedalgorithmslearningquantum} and low temperatures ensuring classical hardness~\cite{Rouz__2026, Bravyi_2022, bravyi2024quantumcomplexitykroneckercoefficients}. Other interesting settings could possibly include states evolved by processes more general than unitary time-evolution, such as Lindbladians, for which learning algorithms also exist~\cite{liu2025robustlindbladianestimationquantum, Onorati_2023, Bairey_2020, franca2022efficientrobustestimationmanyqubit, ivashkov2026ansatzfreelearninglindbladiandynamics, cai2026optimaldetectiondissipationlindbladian}. 

\vspace{2.5mm}
\noindent\textbf{Identification-based learning separations.} We stress here again that a limitation of our construction lies in the fact that our labels are vector-valued lists of expectation values, corresponding to sufficient statistics from which the unknown Hamiltonian can be recovered via elementary classical post-processing. Thus, our learner obtains an \emph{explicit} description of the target concept in a manner that is entirely classical. Therefore, our learning separation does not arise from the hardness of identifying the data-generating concept. Instead, the classical hardness enters at the evaluation stage: even after the unknown Hamiltonian has been identified in a proper PAC sense, evaluating the learned concept on potentially long-time inputs requires simulating time-dynamics that may encode $\mathsf{BQP}$-complete computations. Thus, a pressing open direction is the formulation of physically inspired learning tasks similar to the one considered in the present work, where the learning separation stems from the hardness of identification as opposed to evaluation.

Another future direction of interest could thus be to consider a similar learning task to that considered in this work, but one involving more frugal training data and possibly involving a combination of the setting of the present work with the PQC learning-based approach of Ref.~\cite{barthe2025quantumadvantagelearningquantum}. Alternatively, one could contrive constructions wherein only part of the Hamiltonian is identifiable from short-time dynamics, with additional parameters or auxiliary subsystems ``switching on" only after long evolution times. Such settings could potentially exhibit identification-based learning separations.

\vspace{2.5mm}
\noindent\textbf{Certified quantum simulation.} 
One related interpretation of our results is in learning-assisted certified quantum simulation. Rather than certifying long-time quantum dynamics directly, we learn the governing Hamiltonian from experimentally accessible short-time data and use it for long-time quantum prediction, obtaining rigorous learning-theoretic and complexity-theoretic guarantees.

\vspace{2.5mm}
\noindent\textbf{Connections to cryptographic notions.} 
Another intriguing implication of our work 
arises in the context of quantum cryptography. While our result does not provide a construction of one-way functions, it identifies a physically motivated family of quantum-generated prediction tasks that are efficiently solvable on a quantum computer while remaining inaccessible to efficient classical learners. In this sense, our work lies near the conceptual boundary explored in the emerging area of Microcrypt \cite{Microcrypt}: quantum processes may give rise to useful asymmetric computational tasks without necessarily implying classical one-way functions.

\vspace{2.5mm}
\noindent\textbf{Complexity-theoretic no-gos.} Finally, an exciting direction would be to place fundamental limitations on the learnability of time-dynamics. That is, it would be of interest to explore an intermediate regime between labels that are singular expectation values, as 
in Ref.~\cite{barthe2025quantumadvantagelearningquantum}, and ones that amount to maximally informative classical descriptions of probe states. In other words, one could try to prove hardness results of the type shown in Ref.~\cite{arunachalam2021quantum}, while considering different generalizations of their setting, such as input states that go beyond computational basis states, like stabilizer states, Choi states, and so on, as well as measurement of observables more general than single-qubit observables.

\section{Acknowledgments}
\label{sec:acknowledgements}

This project was funded by the European Union. Views and opinions expressed are however those of the author(s) only and do not necessarily reflect those of the European Union or the European Research Council Executive Agency. Neither the European Union nor the granting authority can be held responsible for them. This work is supported by ERC grant BeMAIQuantum (grant agreement No. 101124342; project DOI: 10.3030/101124342) and the Dutch National Growth Fund (NGF) as part of the QDNL programme. This publication is part of the NWA research program “Research on Routes by Consortia (ORC),” financed by the Dutch Research Council (NWO), through the projects ``Divide \& Quantum" (Project No. 1389.20.241) and ``Quantum Inspire– the Dutch Quantum Computer in the Cloud" (Project No. NWA.1292.19.194). The Berlin team would like to acknowledge support from the BMFTR (MUNIQC-Atoms, QuSol, Hybrid++), BIFOLD, the DFG (CRC 
183 and SPP 2514), the Quantum Flagship (Millenion, PasQuanS2), the Munich Quantum Valley, Berlin Quantum, and the European Research Council (DebuQC) for financial support.

\bibliography{main, tail_bound, BigReferences70}

@Misc{r,
  title                     = {{The Complexity of Translationally-Invariant Spin Chains with Low Local Dimension}},

  Author                   = {Bausch, Johannes and Cubitt, Toby and Ozols, Maris},
  Note                     = {arXiv:1605.01718}
}

@Article{BlochSimulation,
  title                     = {Quantum simulations with ultracold quantum gases},
  Author                   = {I. Bloch and J. Dalibard and S. Nascimbene},
  Journal                  = {Nature Phys.},
doi={10.1038/nphys2259},
  Year                     = {2012},
  Pages                    = {267},
  Volume                   = {8}
}

@article{MindTheGaps,
      title={Mind the gaps: The fraught road to quantum advantage}, 
      author={Jens Eisert and John Preskill},
      year={2026},
      eprint={2510.19928},
      archivePrefix={arXiv},
      url={https://arxiv.org/abs/2510.19928},
      journal={arXiv} 
}

@article{WildeLearning,
author={F. Wilde and A. Kshetrimayum and I. Roth and D. Hangleiter and R. Sweke and J. Eisert}, 
title={{Scalably learning quantum many-body Hamiltonians from dynamical data}},
  year = {2026},
  journal={Mach. Learn. Sc. Tech.},
  DOI={10.1088/2058-9565/ae6fe3},
  volume=11, pages={035002}}

@article{GAN14, author={I. M. Georgescu and S. Ashhab and F. Nori}, 
  title ={Quantum simulation}, journal={Rev. Mod. Phys.},volume= 86, 
pages=153,
  doi = {10.1103/RevModPhys.86.153},
year=2014}

@Article{CiracZollerSimulation,
  title                     = {Goals and opportunities in quantum simulation},
  Author                   = {J. I. Cirac and P. Zoller},
  doi={10.1038/nphys2275},  
  Journal                  = {Nature Phys.},
  Year                     = {2012},
  Pages                    = {264},
  Volume                   = {8}
}

@article{PACLearning,
journal={Quantum},
 year=2021,
 volume=
  5, pages=417,
doi={10.22331/q-2021-03-23-417},
    title={On the quantum versus classical learnability of discrete distributions},
    author={R. Sweke and J.-P. Seifert and D. Hangleiter and J. Eisert}}

@article{OptimizationAdvantages,
	journal={Science Adv.},volume= 10, pages={eadj5170} , year=2024,
doi={10.1126/sciadv.adj517},
title={An in-principle super-polynomial quantum advantage for approximating combinatorial optimization problems via computational learning theory},
author={N. Pirnay and V. Ulitzsch and F. Wilde and J. Eisert and J.-P. Seifert}}

@Article{Single,
  Author                   = {J. Eisert and M. Cramer},
  Journal                  = {Phys. Rev. A},
  Year                     = {2005},
  Pages                    = {042112},
  Volume                   = {72}
}

@article{Molteni,
  title={Exponential quantum advantages in learning quantum observables from classical data},
  author={Molteni, Riccardo and Gyurik, Casper and Dunjko, Vedran},
  volume=12, 
  pages=19,
  DOI={10.1038/s41534-025-01162-2},
  journal={npj Quant. Inf.},
  year={2026},
  publisher={Nature Publishing Group UK London}
}

@article{Nori,
   title={Quantum simulation},
   volume={86},
   ISSN={1539-0756},
   url={http://dx.doi.org/10.1103/RevModPhys.86.153},
   DOI={10.1103/revmodphys.86.153},
   number={1},
   journal={Rev. Mod. Phys.},
   publisher={American Physical Society (APS)},
   author={Georgescu, I. M. and Ashhab, S. and Nori, Franco},
   year={2014},
   month=Mar, pages={153–185} }

@article{SampleEfficientHamiltonianLearning,
title={Sample-efficient learning of interacting quantum systems},
author={Anurag Anshu and Srinivasan Arunachalam and Tomotaka Kuwahara and Mehdi Soleimanifar},
doi={10.1038/s41567-021-01232-0},
journal={Nature Phys.},volume=17, pages={931}, year=2021}

@Article{Lloyd,
  title                     = {Universal quantum simulators},
  Author                   = {S. Lloyd},
  doi={10.1126/science.273.5278.1073},
  Journal                  = {Science},
  Year                     = {1996},
  Pages                    = {1073},
  Volume                   = {273}
}

@Article{TemmeML,
title={A rigorous and robust quantum speed-up in supervised machine learning},
author={Y. Liu and S. Arunachalam  and K. Temme},
doi={10.1038/s41567-021-01287-z},
journal={Nature Phys.},volume=17, pages={1013},  year=2021}

@article{ProvablyEfficientQML,
title={Provably efficient machine learning for quantum many-body problems},
author={Hsin-Yuan Huang and Richard Kueng and Giacomo Torlai and Victor V. Albert and John Preskill},
journal={Science}, Volume={377}, pages=6613, year=2022,
DOI={10.1126/science.abk3333}}

@article{PhysRevLett.126.190505,
  title = {Information-Theoretic Bounds on Quantum Advantage in Machine Learning},
  author = {Huang, Hsin-Yuan and Kueng, Richard and Preskill, John},
  journal = {Phys. Rev. Lett.},
  volume = {126},
  issue = {19},
  pages = {190505},
  numpages = {7},
  year = {2021},
  month = {May},
  publisher = {American Physical Society},
  doi = {10.1103/PhysRevLett.126.190505},
  url = {https://link.aps.org/doi/10.1103/PhysRevLett.126.190505}
}

@article{VedranExponential,
title={Exponential quantum advantages in learning quantum observables from classical data},
author={R. Molteni and C. Gyurik and V. Dunjko}, 
doi={10.1038/s41534-025-01162-2},
journal={npj Quant. Inf.}, volume=12, pages=19, year=2026}

@Article{Shor-1994,
  title                     = {Algorithms for quantum computation: discrete logarithms and factoring},
  Author                   = {Shor, P. W.},
  pages={124-134},
  Journal                  = {Proc. 50th Ann. Symp. Found.  Comp. Sc.},
  Year                     = {1994},
  Doi                      = {10.1109/sfcs.1994.365700}
}

@Article{Time,
  title                     = {Topological characterization of periodically driven quantum systems},
  Author                   = {T. Kitagawa, T. and E. Berg and M. Rudne and E. Demler},
  Journal                  = {Phys. Rev. B},
  Year                     = {2010},
  Pages                    = {235114},
  Volume                   = {82}
}

@Article{Will,
  title                     = {{Time-resolved observation of coherent multi-body interactions in quantum phase revivals}},
  Author                   = {Will, Sebastian and Best, Thorsten and Schneider, Ulrich and HackermÃÂÃÂÃÂÃÂ¼ller, Lucia and LÃÂÃÂÃÂÃÂ¼hmann, Dirk-S\"oren and Bloch, I. },
  Journal                  = {Nature},
  Year                     = {2010},
  Pages                    = {197-201},
  Volume                   = {465}
}

@article{JunyuPruned,
title={Towards provably efficient quantum algorithms for large-scale machine-learning models},
author={Junyu Liu and Minzhao Liu and Jin-Peng Liu and Ziyu Ye and Yunfei Wang and Yuri Alexeev and Jens Eisert and Liang Jiang},
journal={Nature Comm.},volume= 15, pages=434 , year=2024,
DOI={10.1038/s41467-023-43957-x}}

@article{ZollerLearning,
  eprint = {2511.23392},
  year=2025,
  archiveprefix = {arXiv},
  title={{Bounded-error quantum simulation via Hamiltonian and Lindbladian learning}},
author={Tristan Kraft and Manoj K. Joshi and William Lam and Tobias Olsacher and Florian Kranzl and Johannes Franke and Lata Kh Joshi and Rainer Blatt and Augusto Smerzi and Daniel Stilck França and Benoît Vermersch and Barbara Kraus and Christian F. Roos and Peter Zoller},
      journal={arXiv} }

@article{Microcrypt,
  eprint = {2505.14461},
  year=2025,
  title={{MicroCrypt assumptions with quantum input sampling and pseudodeterminism: Constructions and separations}},
author={Mohammed Barhoush and Ryo Nishimaki and Takashi Yamakawa},
      journal={arXiv} }

@article{DecodedQuantumInterferometry,
   title={Optimization by decoded quantum interferometry},
   volume={646},
   ISSN={1476-4687},
   url={http://dx.doi.org/10.1038/s41586-025-09527-5},
   DOI={10.1038/s41586-025-09527-5},
   number={8086},
   journal={Nature},
   publisher={Springer Science and Business Media LLC},
   author={Jordan, Stephen P. and Shutty, Noah and Wootters, Mary and Zalcman, Adam and Schmidhuber, Alexander and King, Robbie and Isakov, Sergei V. and Khattar, Tanuj and Babbush, Ryan},
   year={2025},
   month=Oct, pages={831–836} }

@Article{Trotzky,
  title                     = {Probing the relaxation towards equilibrium in an isolated strongly correlated one-dimensional {B}ose gas},
  Author                   = {S. Trotzky and Y.-A. Chen and A. Flesch and I. P. McCulloch and U. Schollw\"ock and J. Eisert and I. Bloch},
  Journal                  = {Nature Phys.},
  Year                     = {2012},
  Pages                    = {325-330},
  Volume                   = {8},
  eprint = {1101.2659},
  archiveprefix = {arXiv},
  Doi                      = {doi:10.1038/nphys2232}
}

@article{VastWorld,
      title={The vast world of quantum advantage}, 
      author={Hsin-Yuan Huang and Soonwon Choi and Jarrod R. McClean and John Preskill},
      year={2025},
      eprint={2508.05720},
      archivePrefix={arXiv},
      optprimaryClass={quant-ph},
      url={https://arxiv.org/abs/2508.05720},
      journal={arXiv} 
}

@article{OptimizationReview,
title={Challenges and opportunities in quantum optimization},
author={Amira Abbas and Andris Ambainis and Brandon Augustino and Andreas B{\"a}rtschi and Harry Buhrman and Carleton Coffrin and Giorgio Cortiana and Vedran Dunjko and Daniel J. Egger and Bruce G. Elmegreen and Nicola Franco and Filippo Fratini and Bryce Fuller and Julien Gacon and Constantin Gonciulea and Sander Gribling and Swati Gupta and Stuart Hadfield and Raoul Heese and Gerhard Kircher and Thomas Kleinert and Thorsten Koch and Georgios Korpas and Steve Lenk and Jakub Marecek and Vanio Markov and Guglielmo Mazzola and Stefano Mensa and Naeimeh Mohseni and Giacomo Nannicini and Corey O'Meara and Elena Pe{\~n}a Tapia and Sebastian Pokutta and Manuel Proissl and Patrick Rebentrost and Emre Sahin and Benjamin C. B. Symons and Sabine Tornow and Victor Valls and Stefan Woerner and Mira L. Wolf-Bauwens and Jon Yard and Sheir Yarkoni and Dirk Zechiel and Sergiy Zhuk and Christa Zoufal},
doi={10.1038/s42254-024-00770-9},
journal={Nature Rev. Phys.}, volume=6, pages={718},year=2024}

@article{Training,
title={Training Schr{\"o}dinger?s cat: quantum optimal control},
Author={S. J. Glaser and U. Boscain and T. Calarco and C. P. 
Koch and W. K{\"o}ckenberger and R. Kosloff and I. Kuprov and B. Luy 
and S. Schirmer and T. Schulte-Herbr{\"u}ggen and D. 
Sugny and F. K. Wilhelm},
year=2015,journal={Europ. Phys. J. D},
Volume=69,
Pages=
279}

@article{RaulStability,
archiveprefix = {arXiv},
  eprint = {2510.08467},
year=2025,
title={Stability of digital and analog quantum simulations under noise},
author={Jayant Rao and Jens Eisert and Tommaso Guaita},
      journal={arXiv}}

@article{HamiltonianLearning,
journal = {Nature Comm.},volume=  15, pages={9595}, year= 2024,
title = {{Precise Hamiltonian identification of a superconducting quantum processor}},
doi = {10.1038/s41467-024-52629-3},
author = {D. Hangleiter and I. Roth and J. Fuksa and J. Eisert and P. Roushan}
}

@article{ProbabilityDiffusion,
      title={Towards efficient quantum algorithms for diffusion probabilistic models}, 
      author={Yunfei Wang and Ruoxi Jiang and Yingda Fan and Xiaowei Jia and Jens Eisert and Junyu Liu and Jin-Peng Liu},
      year={2025},
      eprint={2502.14252},
      archivePrefix={arXiv},
      url={https://arxiv.org/abs/2502.14252},
      journal={arXiv}  
}

@article{Shadows,
    title={Predicting many properties of a quantum system from very few measurements},
    author={H.-Y. Huang and R. Kueng and J. Preskill},
    volume=16, pages={1050},
    DOI={10.1038/s41567-020-0932-7},
journal={Nature Phys.}, year=2020}

@article{molteni2024exponential,
  title={Exponential quantum advantages in learning quantum observables from classical data},
  author={Molteni, Riccardo and Gyurik, Casper and Dunjko, Vedran},
  archiveprefix = {arXiv},
  eprint = {2405.02027},
year=2024,
      journal={arXiv} 
}

@article{oliveira2005complexity,
  title={The complexity of quantum spin systems on a two-dimensional square lattice},
  author={Oliveira, Roberto and Terhal, Barbara M.},
  archiveprefix = {arXiv},
  eprint = {quant-ph/0504050},
  year={2005},
journal={arXiv}
}

@article{feynman1986quantum,
  title={Quantum mechanical computers},
  author={Feynman, Richard P},
  journal={Found.
  Phys.},
  volume={16},
  number={6},
  pages={507--531},
  year={1986},
  DOI={10.1007/BF01886518},
  publisher={Kluwer Academic Publishers-Plenum Publishers New York}
}

@article{nagaj2010fast,
  title={Fast universal quantum computation with railroad-switch local Hamiltonians},
  author={Nagaj, Daniel},
  journal={J. Math. Phys.},
  DOI={10.1063/1.3384661
},
  volume={51},
     pages=062201,
  year={2010},
  publisher={AIP Publishing}
}

@article{schapire1990strength,
  title={The strength of weak learnability},
  author={Schapire, Robert E},
  DOI={10.1007/BF00116037},
  journal={Mach. Learn.},
  volume={5},
  pages={197--227},
  year={1990},
  publisher={Springer}
}

@article{flannigan2022propagation,
  title={Propagation of errors and quantitative quantum simulation with quantum advantage},
  author={Flannigan, Stuart and Pearson, Natalie and Low, Guang Hao and Buyskikh, Anton and Bloch, Immanuel and Zoller, Peter and Troyer, Matthias and Daley, Andrew J.},
  journal={Quant. Sc. Tech.},
  doi={10.1088/2058-9565/ac88f5},
  volume={7},
  number={4},
  pages={045025},
  year={2022},
  publisher={IOP Publishing}
}

@article{Haah_2024,
   title={Learning quantum Hamiltonians from high-temperature Gibbs states and real-time evolutions},
   volume={20},
   ISSN={1745-2481},
   url={http://dx.doi.org/10.1038/s41567-023-02376-x},
   DOI={10.1038/s41567-023-02376-x},
   number={6},
   journal={Nature Phys.},
   publisher={Springer Science and Business Media LLC},
   author={Haah, Jeongwan and Kothari, Robin and Tang, Ewin},
   year={2024},
   month=mar, pages={1027–1031} }

@article{bakshi2023learningquantumhamiltonianstemperature,
      title={Learning quantum Hamiltonians at any temperature in polynomial time}, 
      author={Ainesh Bakshi and Allen Liu and Ankur Moitra and Ewin Tang},
      year={2023},
      eprint={2310.02243},
      archivePrefix={arXiv},
      optprimaryClass={quant-ph},
      url={https://arxiv.org/abs/2310.02243},
      journal={arXiv}  
}

@misc{feldman2007,
	author = {Joel Feldman},
	title = {{D}uhamel's {F}ormula},
	year = {2007},
	note = {[Accessed 19-03-2026]},
    url = {https://personal.math.ubc.ca/~feldman/m428/duhamel.pdf}
}

@article{Gu_2024,
   title={Practical Hamiltonian learning with unitary dynamics and Gibbs states},
   volume={15},
   DOI={10.1038/s41467-023-44008-1},
   journal={Nature Comm.},
   publisher={Springer Science and Business Media LLC},
   pages=312,
   DOI={10.1038/s41467-023-44008-1},
   author={Gu, Andi and Cincio, Lukasz and Coles, Patrick J.},
   year={2024}}

@book{arora2009computational,
  title={Computational complexity: a modern approach},
  author={Arora, Sanjeev and Barak, Boaz},
  year={2009},
  publisher={Cambridge University Press}
}

@article{Lewis_2024,
   title={Improved machine learning algorithm for predicting ground state properties},
   volume={15},
   DOI={10.1038/s41467-024-45014-7},
   number={1},
   journal={Nature Comm.},
   publisher={Springer Science and Business Media LLC},
   author={Lewis, Laura and Huang, Hsin-Yuan and Tran, Viet T. and Lehner, Sebastian and Kueng, Richard and Preskill, John},
   pages=895,
   year={2024}}

@article{wanner2024predictinggroundstateproperties,
      title={Predicting Ground State Properties: Constant Sample Complexity and Deep Learning Algorithms}, 
      author={Marc Wanner and Laura Lewis and Chiranjib Bhattacharyya and Devdatt Dubhashi and Alexandru Gheorghiu},
      year={2024},
      eprint={2405.18489},
      archivePrefix={arXiv},
      optprimaryClass={quant-ph},
      url={https://arxiv.org/abs/2405.18489},
      journal={arXiv}
}

@article{Rouz__2024,
   title={Efficient learning of ground and thermal states within phases of matter},
   volume={15},
   url={http://dx.doi.org/10.1038/s41467-024-51439-x},
   DOI={10.1038/s41467-024-51439-x},
   pages=7755,
   journal={Nature Comm.},
   author={Rouz\'e, Cambyse and Stilck Fran\c{c}a, Daniel and Onorati, Emilio and Watson, James D.},
   year={2024}}

@article{onorati2023provablyefficientlearningphases,
      title={Provably Efficient Learning of Phases of Matter via Dissipative Evolutions}, 
      author={Emilio Onorati and Cambyse Rouz\'e and Daniel Stilck Fran\c{c}a and James D. Watson},
      year={2023},
      eprint={2311.07506},
      archivePrefix={arXiv},
      optprimaryClass={quant-ph},
      journal={arXiv}
}

@article{Coser_2019,
   title={Classification of phases for mixed states via fast dissipative evolution},
   volume={3},
   ISSN={2521-327X},
   url={http://dx.doi.org/10.22331/q-2019-08-12-174},
   DOI={10.22331/q-2019-08-12-174},
   journal={Quantum},
   publisher={Verein zur Forderung des Open Access Publizierens in den Quantenwissenschaften},
   author={Coser, Andrea and Pérez-García, David},
   year={2019},
   month=Aug, pages={174} }

@article{Huang_2021,
   title={Information-Theoretic Bounds on Quantum Advantage in Machine Learning},
   volume={126},
   ISSN={1079-7114},
   url={http://dx.doi.org/10.1103/PhysRevLett.126.190505},
   DOI={10.1103/physrevlett.126.190505},
   pages=190505,
   journal={Phys. Rev. Lett.},
   publisher={American Physical Society (APS)},
   author={Huang, Hsin-Yuan and Kueng, Richard and Preskill, John},
   year={2021}}

@article{chen2021exponentialseparationslearningquantum,
      title={Exponential separations between learning with and without quantum memory}, 
      author={Sitan Chen and Jordan Cotler and Hsin-Yuan Huang and Jerry Li},
      year={2021},
      eprint={2111.05881},
      archivePrefix={arXiv},
      optprimaryClass={quant-ph},
      url={https://arxiv.org/abs/2111.05881},
      journal={arXiv}  
}

@article{Huang_2022_quantum_adv_exp,
   title={Quantum advantage in learning from experiments},
   volume={376},
   ISSN={1095-9203},
   url={http://dx.doi.org/10.1126/science.abn7293},
   DOI={10.1126/science.abn7293},
   number={6598},
   journal={Science},
   publisher={American Association for the Advancement of Science (AAAS)},
   author={Huang, Hsin-Yuan and Broughton, Michael and Cotler, Jordan and Chen, Sitan and Li, Jerry and Mohseni, Masoud and Neven, Hartmut and Babbush, Ryan and Kueng, Richard and Preskill, John and McClean, Jarrod R.},
   year={2022},
   month={June}, 
   pages={1182–1186} 
}

@article{Aharonov_2022,
   title={Quantum algorithmic measurement},
   volume={13},
   pages=887,
   DOI={10.1038/s41467-021-27922-0},
   journal={Nature Comm.},
   publisher={Springer Science and Business Media LLC},
   author={Aharonov, Dorit and Cotler, Jordan and Qi, Xiao-Liang},
   year={2022},
   month=Feb }

@article{gyurik2024exponentialseparationsclassicalquantum,
      title={Exponential separations between classical and quantum learners}, 
      author={Casper Gyurik and Vedran Dunjko},
      year={2024},
      eprint={2306.16028},
      archivePrefix={arXiv},
      optprimaryClass={quant-ph},
      url={https://arxiv.org/abs/2306.16028},
      journal={arXiv}  
}

@article{barthe2025quantumadvantagelearningquantum,
      title={Quantum Advantage in Learning Quantum Dynamics via Fourier coefficient extraction}, 
      author={Alice Barthe and Mahtab Yaghubi Rad and Michele Grossi and Vedran Dunjko},
      year={2025},
      eprint={2506.17089},
      archivePrefix={arXiv},
      optprimaryClass={quant-ph},
      url={https://arxiv.org/abs/2506.17089},
      journal={arXiv}
}

@article{bokov2026machinelearningminimaluse,
      title={Machine learning with minimal use of quantum computers: Provable advantages in Learning Under Quantum Privileged Information (LUQPI)}, 
      author={Vasily Bokov and Lisa Kohl and Sebastian Schmitt and Vedran Dunjko},
      year={2026},
      eprint={2601.22006},
      archivePrefix={arXiv},
      optprimaryClass={quant-ph},
      url={https://arxiv.org/abs/2601.22006},
      journal={arXiv}
}

@article{caves1981quantum,
  title={Quantum-mechanical noise in an interferometer},
  author={Caves, Carlton M},
  journal={Phys. Rev. D},
  volume={23},
  pages={1693},
  year={1981},
  DOI={10.1103/PhysRevD.23.1693},
  publisher={APS}
}

@article{holland1993interferometric,
  title={Interferometric detection of optical phase shifts at the Heisenberg limit},
  author={Holland, Murray J and Burnett, Keith},
  journal={Phys. Rev. Lett.},
  doi={10.1103/PhysRevLett.71.1355},
  volume={71},
  pages={1355},
  year={1993},
  publisher={APS}
}

@article{Lee_2002,
   title={A quantum Rosetta stone for interferometry},
   volume={49},
   ISSN={1362-3044},
   url={http://dx.doi.org/10.1080/0950034021000011536},
   DOI={10.1080/0950034021000011536},
   number={14-15},
   journal={J. Mod. Opt.},
   publisher={Informa UK Limited},
   author={Lee, Hwang and Kok, Pieter and Dowling, Jonathan P.},
   year={2002},
   month=Nov, pages={2325–2338} }

@article{giovannetti2004quantum,
  title={Quantum-enhanced measurements: beating the standard quantum limit},
  author={Giovannetti, Vittorio and Lloyd, Seth and Maccone, Lorenzo},
  journal={Science},
   DOI={10.1126/science.1104149},
  volume={306},
  number={5700},
  pages={1330--1336},
  year={2004},
  publisher={American Association for the Advancement of Science}
}

@article{de_Burgh_2005,
   title={Quantum methods for clock synchronization: Beating the standard quantum limit without entanglement},
   volume={72},
   pages=042301,
   DOI={10.1103/physreva.72.042301},
   number={4},
   journal={Phys. Rev. A},
   publisher={American Physical Society (APS)},
   author={de Burgh, Mark and Bartlett, Stephen D.},
   year={2005},
   month=Oct }

@article{degen2017quantum,
  title={Quantum sensing},
  author={Degen, Christian L and Reinhard, Friedemann and Cappellaro, Paola},
  journal={Rev. Mod. Phys.},
  volume={89},
  DOI={10.1103/RevModPhys.89.035002},
  pages={035002},
  year={2017},
  publisher={APS}
}

@article{vapnik2009new,
  title={A new learning paradigm: Learning using privileged information},
  author={Vapnik, Vladimir and Vashist, Akshay},
  journal={Neur. Net.},
  volume={22},
  DOI={10.1016/j.neunet.2009.06.042},
  number={5-6},
  pages={544--557},
  year={2009},
  publisher={Elsevier}
}

@article{vapnik2015learning,
  title={Learning using privileged information: similarity control and knowledge transfer},
  author={Vapnik, Vladimir and Izmailov, Rauf},
  DOI={10.1007/s10994-025-06773-6},
  journal={J. Mach. Learn. Res.},
  volume={16},
  number={1},
  pages={2023--2049},
  year={2015},
  publisher={JMLR. org}
}

@article{Bairey_2019,
   title={Learning a Local Hamiltonian from Local Measurements},
   volume={122},
   DOI={10.1103/physrevlett.122.020504},
   pages=020504,
   journal={Phys. Rev. Lett.},
   publisher={American Physical Society (APS)},
   author={Bairey, Eyal and Arad, Itai and Lindner, Netanel H.},
   year={2019},
   month=Jan }

@article{Garrison_2018,
   title={Does a Single Eigenstate Encode the Full Hamiltonian?},
   volume={8},
   DOI={10.1103/physrevx.8.021026},
  pages=021026,
   journal={Phys. Rev. X},
   publisher={American Physical Society (APS)},
   author={Garrison, James R. and Grover, Tarun},
   year={2018},
   month=Apr }

@article{Li_2020,
   title={Hamiltonian Tomography via Quantum Quench},
   volume={124},
   ISSN={1079-7114},
   url={http://dx.doi.org/10.1103/PhysRevLett.124.160502},
   DOI={10.1103/physrevlett.124.160502},
   pages=160502,
   journal={Phys. Rev. Lett.},
   publisher={American Physical Society (APS)},
   author={Li, Zhi and Zou, Liujun and Hsieh, Timothy H.},
   year={2020},
   month=Apr }

@article{Qi_2019,
   title={Determining a local Hamiltonian from a single eigenstate},
   volume={3},
   ISSN={2521-327X},
   url={http://dx.doi.org/10.22331/q-2019-07-08-159},
   DOI={10.22331/q-2019-07-08-159},
   journal={Quantum},
   publisher={Verein zur Forderung des Open Access Publizierens in den Quantenwissenschaften},
   author={Qi, Xiao-Liang and Ranard, Daniel},
   year={2019},
   month={July}, 
   pages={159} 
}

@article{arunachalam2025testinglearningstructuredquantum,
      title={Testing and learning structured quantum Hamiltonians}, 
      author={Srinivasan Arunachalam and Arkopal Dutt and Francisco Escudero Gutiérrez},
      year={2025},
      eprint={2411.00082},
      archivePrefix={arXiv},
      optprimaryClass={quant-ph},
      url={https://arxiv.org/abs/2411.00082},
      journal={arXiv}  
}

@article{bakshi2026learningquantumhamiltonianstemperature,
      title={Learning quantum Hamiltonians at any temperature in polynomial time}, 
      author={Ainesh Bakshi and Allen Liu and Ankur Moitra and Ewin Tang},
      year={2026},
      eprint={2310.02243},
      archivePrefix={arXiv},
      optprimaryClass={quant-ph},
      url={https://arxiv.org/abs/2310.02243},
      journal={arXiv}  
}

@article{Fawzi_2024,
   title={Certified algorithms for equilibrium states of local quantum Hamiltonians},
   volume={15},
   pages={7394},
   DOI={10.1038/s41467-024-51592-3},
   number={1},
   journal={Nature Comm.},
   publisher={Springer Science and Business Media LLC},
   author={Fawzi, Hamza and Fawzi, Omar and Scalet, Samuel O.},
   year={2024},
   month=Aug }

@article{Garc_a_Pintos_2024,
   title={Estimation of Hamiltonian Parameters from Thermal States},
   volume={133},
   DOI={10.1103/physrevlett.133.040802},
   pages=040802,
   journal={Phys. Rev. Lett.},
   publisher={American Physical Society (APS)},
   author={García-Pintos, Luis Pedro and Bharti, Kishor and Bringewatt, Jacob and Dehghani, Hossein and Ehrenberg, Adam and Yunger Halpern, Nicole and Gorshkov, Alexey V.},
   year={2024} 
}

@article{narayanan2024improvedalgorithmslearningquantum,
      title={Improved algorithms for learning quantum Hamiltonians, via flat polynomials}, 
      author={Shyam Narayanan},
      year={2024},
      eprint={2407.04540},
      archivePrefix={arXiv},
      optprimaryClass={quant-ph},
      url={https://arxiv.org/abs/2407.04540},
      journal={arXiv} 
}

@inproceedings{Bakshi_2024,
   title={Structure Learning of Hamiltonians from Real-Time Evolution},
   url={http://dx.doi.org/10.1109/FOCS61266.2024.00069},
   DOI={10.1109/focs61266.2024.00069},
   booktitle={2024 IEEE 65th Annual Symposium on Foundations of Computer Science (FOCS)},
   publisher={IEEE},
   author={Bakshi, Ainesh and Liu, Allen and Moitra, Ankur and Tang, Ewin},
   year={2024},
   month=Oct, pages={1037–1050} }

@article{Caro_2024,
   title={Learning Quantum Processes and Hamiltonians via the Pauli Transfer Matrix},
   volume={5},
   ISSN={2643-6817},
   url={http://dx.doi.org/10.1145/3670418},
   DOI={10.1145/3670418},
   number={2},
   journal={ACM Trans. Quant. Comp.},
   publisher={Association for Computing Machinery (ACM)},
   author={Caro, Matthias C.},
   year={2024},
   month={June}, 
   pages={1–53} }

@article{da_Silva_2011,
   title={Practical Characterization of Quantum Devices without Tomography},
   volume={107},
   DOI={10.1103/physrevlett.107.210404},
   pages=210404,
   journal={Phys. Rev. Lett.},
   publisher={American Physical Society (APS)},
   author={da Silva, Marcus P. and Landon-Cardinal, Olivier and Poulin, David},
   year={2011}}

@article{Flynn_2022,
   title={Quantum model learning agent: characterisation of quantum systems through machine learning},
   volume={24},
   ISSN={1367-2630},
   url={http://dx.doi.org/10.1088/1367-2630/ac68ff},
   DOI={10.1088/1367-2630/ac68ff},
   number={5},
   journal={New J. Phys.},
   publisher={IOP Publishing},
   author={Flynn, Brian and Gentile, Antonio A and Wiebe, Nathan and Santagati, Raffaele and Laing, Anthony},
   year={2022},
   month=May, pages={053034} }

@article{franca2025learningcertificationlocaltimedependent,
      title={Learning and certification of local time-dependent quantum dynamics and noise}, 
      author={Daniel Stilck Fran\c{c}a and Tim M\"obus and Cambyse Rouz\'e and Albert H. Werner},
      year={2025},
      eprint={2510.08500},
      archivePrefix={arXiv},
      optprimaryClass={quant-ph},
      url={https://arxiv.org/abs/2510.08500},
      journal={arXiv}  
}

@article{Gentile_2021,
   title={Learning models of quantum systems from experiments},
   volume={17},
  pages=837,
   DOI={10.1038/s41567-021-01201-7},
   number={7},
   journal={Nature Phys.},
   publisher={Springer Science and Business Media LLC},
   author={Gentile, Antonio A. and Flynn, Brian and Knauer, Sebastian and Wiebe, Nathan and Paesani, Stefano and Granade, Christopher E. and Rarity, John G. and Santagati, Raffaele and Laing, Anthony},
   year={2021},
   month=Apr, pages={837–843} }

@article{Hu_2025,
   title={Ansatz-Free Hamiltonian Learning with Heisenberg-Limited Scaling},
   pages=040315,
   volume={6},
   DOI={10.1103/j7b8-pb77},
   number={4},
   journal={PRX Quantum},
   publisher={American Physical Society (APS)},
   author={Hu, Hong-Ye and Ma, Muzhou and Gong, Weiyuan and Ye, Qi and Tong, Yu and Flammia, Steven T. and Yelin, Susanne F.},
   DOI={10.1103/j7b8-pb77},
   year={2025}}

@article{Huang_2023_hl,
   title={Learning Many-Body Hamiltonians with Heisenberg-Limited Scaling},
   volume={130},
   DOI={10.1103/physrevlett.130.200403},
   pages=200403,
   journal={Phys. Rev. Lett.},
   publisher={American Physical Society (APS)},
   author={Huang, Hsin-Yuan and Tong, Yu and Fang, Di and Su, Yuan},
   year={2023},
   month=May }

@article{li2023heisenberglimitedhamiltonianlearninginteracting,
      title={Heisenberg-limited Hamiltonian learning for interacting bosons}, 
      author={Haoya Li and Yu Tong and Hongkang Ni and Tuvia Gefen and Lexing Ying},
      year={2023},
      eprint={2307.04690},
      archivePrefix={arXiv},
      optprimaryClass={quant-ph},
      url={https://arxiv.org/abs/2307.04690},
      journal={arXiv} 
}

@article{ma2024learningkbodyhamiltonianscompressed,
      title={Learning $k$-body Hamiltonians via compressed sensing}, 
      author={Muzhou Ma and Steven T. Flammia and John Preskill and Yu Tong},
      year={2024},
      eprint={2410.18928},
      archivePrefix={arXiv},
      optprimaryClass={quant-ph},
      url={https://arxiv.org/abs/2410.18928},
      journal={arXiv}  
}

@article{Mirani_2024,
   title={Learning interacting fermionic Hamiltonians at the Heisenberg limit},
   volume={110},
   DOI={10.1103/physreva.110.062421},
   pages=062421,
   journal={Phys. Rev. A},
   publisher={American Physical Society (APS)},
   author={Mirani, Arjun and Hayden, Patrick},
   year={2024},
   month=Dec }

@article{ni2024quantumhamiltonianlearningfermihubbard,
      title={Quantum Hamiltonian Learning for the Fermi-Hubbard Model}, 
      author={Hongkang Ni and Haoya Li and Lexing Ying},
      year={2024},
      eprint={2312.17390},
      archivePrefix={arXiv},
      optprimaryClass={quant-ph},
      url={https://arxiv.org/abs/2312.17390},
      journal={arXiv}  
}

@article{Odake_2024,
   title={Higher-order quantum transformations of Hamiltonian dynamics},
   volume={6},
   DOI={10.1103/physrevresearch.6.l012063},
  pages={l012063},
   journal={Phys. Rev. Res.},
   publisher={American Physical Society (APS)},
   author={Odake, Tatsuki and Kristjánsson, Hlér and Soeda, Akihito and Murao, Mio},
   year={2024},
   month=Mar }

@article{PhysRev.78.695,
  title = {A Molecular Beam Resonance Method with Separated Oscillating Fields},
  author = {Ramsey, Norman F.},
  journal = {Phys. Rev.},
  volume = {78},
  issue = {6},
  pages = {695--699},
  numpages = {0},
  year = {1950},
  month = {Jun},
  publisher = {American Physical Society},
  doi = {10.1103/PhysRev.78.695},
  url = {https://link.aps.org/doi/10.1103/PhysRev.78.695}
}

@article{PhysRevA.84.012107,
  title = {Estimation of many-body quantum Hamiltonians via compressive sensing},
  author = {Shabani, A. and Mohseni, M. and Lloyd, S. and Kosut, R. L. and Rabitz, H.},
  journal = {Phys. Rev. A},
  volume = {84},
  issue = {1},
  pages = {012107},
  numpages = {8},
  year = {2011},
  month = {Jul},
  publisher = {American Physical Society},
  doi = {10.1103/PhysRevA.84.012107},
  url = {https://link.aps.org/doi/10.1103/PhysRevA.84.012107}
}

@article{sinha2025improvedhamiltonianlearningsparsity,
      title={Improved Hamiltonian learning and sparsity testing through Bell sampling}, 
      author={Savar D. Sinha and Yu Tong},
      year={2025},
      eprint={2509.07937},
      archivePrefix={arXiv},
      optprimaryClass={quant-ph},
      url={https://arxiv.org/abs/2509.07937},
      journal={arXiv}  
}

@article{Wiebe_2014,
   title={Quantum Hamiltonian learning using imperfect quantum resources},
   volume={89},
   DOI={10.1103/physreva.89.042314},
   pages=042314,
   journal={Phys. Rev. A},
   publisher={American Physical Society (APS)},
   author={Wiebe, Nathan and Granade, Christopher and Ferrie, Christopher and Cory, David},
   year={2014},
   month=Apr }

@article{Wiebe_2014_prl,
   title={Hamiltonian Learning and Certification Using Quantum Resources},
   volume={112},
   DOI={10.1103/physrevlett.112.190501},
   pages=190501,
   journal={Phys. Rev. Lett.},
   publisher={American Physical Society (APS)},
   author={Wiebe, Nathan and Granade, Christopher and Ferrie, Christopher and Cory, D. G.},
   year={2014},
   month=May }

@inproceedings{Zhao_2025, series={STOC ’25},
   title={Learning the Structure of Any Hamiltonian from Minimal Assumptions},
   url={http://dx.doi.org/10.1145/3717823.3718115},
   DOI={10.1145/3717823.3718115},
   booktitle={Proceedings of the 57th Annual ACM Symposium on Theory of Computing},
   publisher={ACM},
   author={Zhao, Andrew},
   year={2025},
   month={June}, 
   pages={1201–1211},
   collection={STOC ’25} }

@article{zubida2021optimalshorttimemeasurementshamiltonian,
      title={{Optimal short-time measurements for Hamiltonian learning}}, 
      author={Assaf Zubida and Elad Yitzhaki and Netanel H. Lindner and Eyal Bairey},
      year={2021},
      eprint={2108.08824},
      archivePrefix={arXiv},
      optprimaryClass={quant-ph},
      url={https://arxiv.org/abs/2108.08824},
      journal={arXiv}  
}

@article{shin2026heisenberglimitedhamiltonianlearningshorttime,
      title={Heisenberg-limited Hamiltonian learning without short-time control}, 
      author={Myeongjin Shin and Junseo Lee and Changhun Oh},
      year={2026},
      eprint={2604.27838},
      archivePrefix={arXiv},
      optprimaryClass={quant-ph},
      url={https://arxiv.org/abs/2604.27838},
      journal={arXiv}  
}

@article{Dutkiewicz_2024,
   title={The advantage of quantum control in many-body Hamiltonian learning},
   volume={8},
   ISSN={2521-327X},
   url={http://dx.doi.org/10.22331/q-2024-11-26-1537},
   DOI={10.22331/q-2024-11-26-1537},
   journal={Quantum},
   publisher={Verein zur Forderung des Open Access Publizierens in den Quantenwissenschaften},
   author={Dutkiewicz, Alicja and O'Brien, Thomas E. and Schuster, Thomas},
   year={2024},
   month=Nov, pages={1537} }

@article{abbas2025nearlyoptimalalgorithmslearn,
      title={Nearly optimal algorithms to learn sparse quantum Hamiltonians in physically motivated distances}, 
      author={Amira Abbas and Nunzia Cerrato and Francisco Escudero Gutiérrez and Dmitry Grinko and Francesco Anna Mele and Pulkit Sinha},
      year={2025},
      eprint={2509.09813},
      archivePrefix={arXiv},
      optprimaryClass={quant-ph},
      url={https://arxiv.org/abs/2509.09813},
      journal={arXiv}  
}

@article{chen2025quantumprobetomography,
      title={Quantum Probe Tomography}, 
      author={Sitan Chen and Jordan Cotler and Hsin-Yuan Huang},
      year={2025},
      eprint={2510.08499},
      archivePrefix={arXiv},
      optprimaryClass={quant-ph},
      url={https://arxiv.org/abs/2510.08499},
      journal={arXiv}  
}

@article{huang2023learningpredictarbitraryquantum,
      title={Learning to predict arbitrary quantum processes}, 
      author={Hsin-Yuan Huang and Sitan Chen and John Preskill},
      year={2023},
      eprint={2210.14894},
      archivePrefix={arXiv},
      optprimaryClass={quant-ph},
      url={https://arxiv.org/abs/2210.14894},
      journal={arXiv}  
}

@article{molteni2026identification,
      title={Quantum machine learning advantages beyond hardness of evaluation}, 
      author={Riccardo Molteni and Simon C. Marshall and Vedran Dunjko},
      year={2026},
      eprint={2504.15964},
      archivePrefix={arXiv},
      optprimaryClass={quant-ph},
      url={https://arxiv.org/abs/2504.15964},
      journal={arXiv}  
}

@article{daley2022practical,
  title={Practical quantum advantage in quantum simulation},
  author={Daley, Andrew J and Bloch, Immanuel and Kokail, Christian and Flannigan, Stuart and Pearson, Natalie and Troyer, Matthias and Zoller, Peter},
  journal={Nature},
  volume={607},
  number={7920},
  pages={667--676},
  year={2022},
  DOI={10.1038/s41586-022-04940-6},
  publisher={Nature Publishing Group UK London}
}

@article{Childs_2018,
   title={Toward the first quantum simulation with quantum speedup},
   volume={115},
   ISSN={1091-6490},
   url={http://dx.doi.org/10.1073/pnas.1801723115},
   DOI={10.1073/pnas.1801723115},
   number={38},
   journal={Proc. Natl. Ac. Sc.},
   publisher={National Academy of Sciences},
   author={Childs, Andrew M. and Maslov, Dmitri and Nam, Yunseong and Ross, Neil J. and Su, Yuan},
   year={2018},
   month={sept}, pages={9456–9461} }

@article{trivedi2024quantum,
  title={Quantum advantage and stability to errors in analogue quantum simulators},
  author={Trivedi, Rahul and Franco Rubio, Adrian and Cirac, J Ignacio},
  journal={Nature Comm.},
  volume={15},
  number={1},
  pages={6507},
  year={2024},
  DOI={10.1038/s41467-024-50750-x},
  publisher={Nature Publishing Group UK London}
}

@article{kashyap2025accuracy,
  title={Accuracy guarantees and quantum advantage in analog open quantum simulation with and without noise},
  author={Kashyap, Vikram and Styliaris, Georgios and Mouradian, Sara and Cirac, J Ignacio and Trivedi, Rahul},
  DOI={10.1103/PhysRevX.15.021017},
  journal={Phys. Rev. X},
  volume={15},
  number={2},
  pages={021017},
  year={2025},
  publisher={APS}
}

@article{arunachalam2021quantum,
  title={Quantum hardness of learning shallow classical circuits},
  author={Arunachalam, Srinivasan and Grilo, Alex Bredariol and Sundaram, Aarthi},
  journal={SIAM J. Comp.},
  volume={50},
  DOI={10.1137/20M1344202},
  pages={972--1013},
  year={2021},
  publisher={SIAM}
}

@inproceedings{lyubashevsky2010ideal,
  title={On ideal lattices and learning with errors over rings},
  author={Lyubashevsky, Vadim and Peikert, Chris and Regev, Oded},
  booktitle={Annual international conference on the theory and applications of cryptographic techniques},
  pages={1--23},
  year={2010},
  organization={Springer},
  doi={10.1145/2535925}
}

@inproceedings{bshouty1995learning,
  title={Learning DNF over the uniform distribution using a quantum example oracle},
  author={Bshouty, Nader H and Jackson, Jeffrey C},
  booktitle={Proceedings of the eighth annual conference on Computational learning theory},
  pages={118--127},
  year={1995},
  doi={10.1145/225298.225312}
}

@article{depradenne2026learninghamiltonianslongtimes,
      title={Learning Hamiltonians at Long Times}, 
      author={Constantin Cedillo Vayson de Pradenne and Jordan Cotler and Hsin-Yuan Huang},
      year={2026},
      eprint={2606.05690},
      archivePrefix={arXiv},
      optprimaryClass={quant-ph},
      url={https://arxiv.org/abs/2606.05690},
      journal={arXiv}  
}

@article{Rouz__2026,
   title={Efficient thermalization and universal quantum computing with quantum Gibbs samplers},
   ISSN={1745-2481},
   url={http://dx.doi.org/10.1038/s41567-026-03246-y},
   DOI={10.1038/s41567-026-03246-y},
   journal={Nature Phys.},
   publisher={Springer Science and Business Media LLC},
   author={Rouz\'e, Cambyse and Stilck Fran\c{c}a, Daniel and Alhambra, \'Alvaro M.},
   year={2026},
   month=Apr }

@article{Bravyi_2022,
   title={Quantum Hamiltonian complexity in thermal equilibrium},
   volume={18},
   ISSN={1745-2481},
   url={http://dx.doi.org/10.1038/s41567-022-01742-5},
   DOI={10.1038/s41567-022-01742-5},
   number={11},
   journal={Nature Phys.},
   publisher={Springer Science and Business Media LLC},
   author={Bravyi, Sergey and Chowdhury, Anirban and Gosset, David and Wocjan, Pawel},
   year={2022},
   month=Oct, pages={1367–1370} }

@article{bravyi2024quantumcomplexitykroneckercoefficients,
      title={Quantum complexity of the Kronecker coefficients}, 
      author={Sergey Bravyi and Anirban Chowdhury and David Gosset and Vojtech Havlicek and Guanyu Zhu},
      year={2024},
      eprint={2302.11454},
      archivePrefix={arXiv},
      primaryClass={quant-ph},
      url={https://arxiv.org/abs/2302.11454},
      journal={arXiv}  
}

@article{liu2025robustlindbladianestimationquantum,
      title={Robust Lindbladian Estimation for Quantum Dynamics}, 
      author={Yinchen Liu and James R. Seddon and Tamara Kohler and Emilio Onorati and Toby S. Cubitt},
      year={2025},
      eprint={2507.07912},
      archivePrefix={arXiv},
      primaryClass={quant-ph},
      url={https://arxiv.org/abs/2507.07912},
      journal={arXiv}  
}

@article{Onorati_2023,
   title={Fitting quantum noise models to tomography data},
   volume={7},
   ISSN={2521-327X},
   url={http://dx.doi.org/10.22331/q-2023-12-05-1197},
   DOI={10.22331/q-2023-12-05-1197},
   journal={Quantum},
   publisher={Verein zur Forderung des Open Access Publizierens in den Quantenwissenschaften},
   author={Onorati, Emilio and Kohler, Tamara and Cubitt, Toby S.},
   year={2023},
   month=Dec, pages={1197} }

@article{Bairey_2020,
   title={Learning the dynamics of open quantum systems from their steady states},
   volume={22},
   DOI={10.1088/1367-2630/ab73cd},
   number={3},
   journal={New J. Phys.},
   publisher={IOP Publishing},
   author={Bairey, Eyal and Guo, Chu and Poletti, Dario and Lindner, Netanel H and Arad, Itai},
   year={2020},
   month=Mar, pages={032001} }

@article{franca2022efficientrobustestimationmanyqubit,
      title={Efficient and robust estimation of many-qubit Hamiltonians}, 
      author={Daniel Stilck Fran\c{c}a and Liubov A. Markovich and V. V. Dobrovitski and Albert H. Werner and Johannes Borregaard},
      year={2022},
      eprint={2205.09567},
      archivePrefix={arXiv},
      optprimaryClass={quant-ph},
      url={https://arxiv.org/abs/2205.09567},
      journal={arXiv}  
}

@article{ivashkov2026ansatzfreelearninglindbladiandynamics,
      title={Ansatz-Free Learning of Lindbladian Dynamics In Situ}, 
      author={Petr Ivashkov and Nikita Romanov and Weiyuan Gong and Andi Gu and Hong-Ye Hu and Susanne F. Yelin},
      year={2026},
      eprint={2603.05492},
      archivePrefix={arXiv},
      optprimaryClass={quant-ph},
      url={https://arxiv.org/abs/2603.05492},
      journal={arXiv} 
}

@article{cai2026optimaldetectiondissipationlindbladian,
      title={Optimal detection of dissipation in Lindbladian dynamics}, 
      author={Yiyi Cai},
      year={2026},
      eprint={2603.17736},
      archivePrefix={arXiv},
      optprimaryClass={quant-ph},
      url={https://arxiv.org/abs/2603.17736},
      journal={arXiv}  
}

\newpage

\appendix

\section{Classical shadows protocol}
\label{app:classical-shadows-protocol}

In this appendix, we provide details on the formalism of classical shadows, since we use it in Algorithm \ref{alg:inference-via-hamiltonian-simulation}. In essence, one can think of the classical shadows protocol as consisting of a data acquisition phase, followed by a median-of-means procedure that estimates certain properties of interest. Below, we provide the subroutine used to construct the classical shadow of an unknown quantum state, given numerous copies thereof.

\begin{figure}[!ht]
\centering
\begin{minipage}{0.98\textwidth}
\footnotesize
\hrule
\vspace{0.75ex}
\subroutinecaption{Data acquisition phase of classical shadows protocol~\cite{Shadows}}
{alg:subroutine-data-acquisition-classical-shadows}
    \begin{algorithmic}[1]
        \Input 
        \Statex \hspace{\algorithmicindent} 1. $\tilde{N} \in \mathbb{N}$ copies of an $n$-qubit quantum state $\rho$.
        \Statex \hspace{\algorithmicindent} 2. Sample-access to the uniform distribution over the ensemble of unitaries $\mathcal{U} \coloneqq \left\{ \mathbbm{1}, H, HS^{\dagger} \right\}^{\otimes n}$.
        \Output A classical shadow given by a collection of unbiased snapshots, which we denote as $\bigl( \hat{\sigma}^{(\ell)} \bigr)_{\ell = 1}^{\tilde{N}}$.
        \For{$\ell = 1$ to $\tilde{N}$}
        \State Sample $U^{(\ell)} \coloneqq \bigotimes_{j = 1}^{n} U_{j}^{(\ell)} \overset{\mathrm{i.i.d.}}{\sim} \mathsf{Unif} \left( \mathcal{U} \right)$
        \State Prepare $U^{(\ell)} \rho ~U^{(\ell), \dagger}$
        \State Measure in the computational basis to obtain the pair $\bigl( U^{(\ell)}, b^{(\ell)} \bigr) \in \mathcal{U} \times \left\{ 0, 1 \right\}^{n}$    \Comment{Randomized measurement.}
        \State Construct the unbiased classical snapshot: \Comment{Inversion of randomized measurement channel.}
        \begin{align}
            \hat{\sigma}^{(\ell)} \leftarrow \bigotimes_{j = 1}^{n} \left( 3U_{j}^{(\ell), \dagger} \bigl\lvert b_{j}^{(\ell)} \bigr\rangle \bigl\langle b_{j}^{(\ell)} \bigr\rvert U_{j}^{(\ell)} - \mathbbm{1} \right)
            \end{align}
        \EndFor \Comment{Obtain a collection of classical snapshots: $\bigl( \hat{\sigma}^{(\ell)} \bigr)_{\ell = 1}^{\tilde{N}}$.}
        \State \Return $\bigl( \hat{\sigma}^{(\ell)} \bigr)_{\ell = 1}^{\tilde{N}}$ 
    \end{algorithmic}
\vspace{0.6ex}
\hrule

\end{minipage}
\end{figure}
Having outlined the method used to construct the classical shadow, let us take a look at how it can be used to estimate expectation values on some observables of interest.
\floatname{algorithm}{Subroutine}

\begin{figure}[!ht]
\centering
\begin{minipage}{0.98\textwidth}
\footnotesize
\hrule
\vspace{0.75ex}
\subroutinecaption{Median-of-means estimation of expectation values~\cite{Shadows}}
{alg:subroutine-median-of-means-classical-shadows}
    \begin{algorithmic}[1]
        \Input 
        \Statex \hspace{\algorithmicindent} 1. A classical shadow $\bigl( \hat{\sigma}^{(\ell)} \bigr)_{\ell = 1}^{\tilde{N}}$ of some $n$-qubit quantum state $\rho$.
        \vspace{1mm}
        \Statex \hspace{\algorithmicindent} 2. A set of observables $\mathcal{M} \coloneqq \left\{ O_{1}, O_{2}, \ldots, O_{P} \right\}$.
        \Statex \hspace{\algorithmicindent} 3. Number of segments $G \in \mathbb{N}$.
        \Output Estimates $\bigl( \hat{\langle O_{j} \rangle}_{\rho}~\bigr)_{j = 1}^{P}$ of the expectation values of the observables in $\mathcal{M}$.
        \State Set $m \leftarrow \lfloor \tilde{N} / G \rfloor$\Comment{Dividing classical snapshot collection into $G$ segments of size $m$.}
        \For{$j = 1$ to $P$} \Comment{Loop over all observables.}
            \For{$g = 1$ to $G$}    \Comment{Loop over elements of each equally sized segment of classical snapshot collection.}
                \State $S_{g}^{(j)} \leftarrow 0$   \Comment{Initialize sum variable to zero.}
                \For{$r = 1$ to $m$}
                    \State $\ell \leftarrow \left( g - 1 \right) m + r$
                    \State $S_{g}^{(j)} \leftarrow S_{g}^{(j)} + \operatorname{Tr} \left[ O_{j} \hat{\sigma}^{(\ell)} \right]$
                \EndFor
                \vspace{1mm}
                \State $\mu_{g}^{(j)} \leftarrow S_{g}^{(j)}/m$ \Comment{Sample mean for each segment.}
                \vspace{1mm}
            \EndFor
            \State $\hat{\langle O_{j} \rangle}_{\rho} \leftarrow \operatorname{median} \left( \mu_{1}^{(j)}, \mu_{2}^{(j)}, \ldots, \mu_{g}^{(j)} \right)$    \Comment{Median-of-means estimation.}
        \EndFor
        \State \Return $\left( \hat{\langle O_{j} \rangle}_{\rho} \right)_{j \in \left[ P \right]}$ \Comment{Predicted expectation values.}
    \end{algorithmic}
\vspace{0.6ex}
\hrule

\end{minipage}
\end{figure}
For theoretical details of the aforementioned subroutines, we refer the reader to the original work of Huang et al.~\cite{Shadows}.

\section{Quantum advantage}
\label{app:classical-hardness}

\subsection{A low-intersection BQP-complete Hamiltonian with constant-precision parameters}
\label{subsec:pauli_decomposition}

In this section, we construct a Hamiltonian family that satisfies the low-intersection property required by our learning algorithm while allowing us to simulate arbitrary BQP computations over long-time evolution. This Hamiltonian family also has the interesting property that the parameters $\lambda$ specifying its Pauli expansion only take values from a discrete set, up to some constant precision. This means that learning $\lambda$ to constant error (in the $\Vert \cdot \Vert_{\ell_{\infty}}$-norm) is sufficient to recover the Hamiltonian exactly, and therefore simulate it for arbitrarily long times without any error. We prove that our quantum algorithm solves the learning problem perfectly in the classically hard case, where, in the definition of the concept class in Def.~\ref{def:learning_problem_concept_class}, $U(t,\lambda)$ is given by the time evolution of $H_{\BQP}$ as discussed in Section~\ref{sec:quantum-advantage}. 

\begin{lemma}\label{lemma: quantum learnability const error app}
	 Let $\mathcal{L}$ be a promise problem in $(\mathsf{Promise})\BQP$. There exists a local, low-intersection Hamiltonian
\begin{equation}\label{eq: H_bqp appendix}
H_{\mathsf{BQP}} (\lambda)=\sum_{a \in \left[ M \right]} \lambda_a E_a,
\end{equation}

where the operators $E_a$ are local Pauli strings, such that:
\begin{itemize}
\item For evolution times linear in system size, the real-time evolution generated by $H_{\mathsf{BQP}}$ simulates a quantum circuit $U_{\mathsf{BQP}}$ that decides $\mathcal{L}$ with bounded error.

\item There exists a finite set $\Lambda\subset\mathbb{R}$, independent of the system size, such that
$
\lambda_a\in\Lambda
$
for every $a\in[M]$.

\end{itemize}
This Hamiltonian belongs to a Hamiltonian family $\{H_{\BQP}(\lambda)\}_\lambda$ that is capable of simulating any circuit $U_{\BQP}$ of the same depth, for a suitable choice of the parameters $\lambda$. 
 
\end{lemma}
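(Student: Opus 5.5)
The plan is to build $H_{\mathsf{BQP}}$ explicitly from the Oliveira--Terhal variant of the Feynman--Kitaev clock Hamiltonian~\cite{oliveira2005complexity}. Then we check the two bullet points and the family property directly from that construction.

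\textbf{Step 1: spread out the circuit.} Fix a $\mathsf{BQP}$ circuit $U_{\mathsf{BQP}}$ deciding $\mathcal{L}$. First compile it into a fixed universal gate set, namely $\{H, T, \mathrm{CNOT}\}$ together with the identity, with depth $K$ on $m$ qubits. Next, following Oliveira--Terhal, spread it over an $m\times K$ grid of qubits. The logical state at layer $k$ lives on column $k$ of the grid, and it is moved to column $k+1$ by $\mathrm{SWAP}$ gates interleaved with the layer-$k$ gates. As a result, each physical qubit is touched by only $\mathcal{O}(1)$ gates.

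\textbf{Step 2: write the clock Hamiltonian.} Use a unary, or locally encoded, clock in which each gate $U_\ell$ gets its own clock qubits adjacent to the data qubits it acts on. The propagation Hamiltonian is
\begin{equation}
H = \sum_\ell \left( U_\ell \otimes \lvert \ell\rangle\langle \ell-1\rvert_{\mathrm{clock}} + \mathrm{h.c.} \right).
\end{equation}
Each summand acts on at most two data qubits plus at most three clock qubits, which gives the $5$-locality. Because each data qubit and each clock qubit appears in only $\mathcal{O}(1)$ summands, both $\mathfrak{K}$ and the dual-graph degree $\mathfrak{d}$ are constant. This is exactly the low-intersection property of Def.~\ref{def:low-intersection-hamiltonians}.

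\textbf{Step 3: quantize the coefficients.} Expand every summand in the Pauli basis. The gate $U_\ell$ ranges over a finite set, and the clock projectors $\lvert 1\rangle\langle 0\rvert$-type operators have Pauli coefficients in $\{\pm 1/2, \pm i/2\}$. Hence every Pauli coefficient is a product of a gate-coefficient from a fixed finite list (for example $1/\sqrt2$, $(1\pm e^{i\pi/4})/2$, $1/2$ for $\mathrm{CNOT}$ and $\mathrm{SWAP}$) with clock factors drawn from a fixed finite set. Taking the Hermitian part gives real coefficients.

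The one subtlety is that a given Pauli string $E_a$ could receive contributions from several summands. This happens only for neighbouring gates sharing support, and there are $\mathcal{O}(1)$ of them. So $\lambda_a$ is a sum of boundedly many elements of a finite set, which is again a finite set $\Lambda$ independent of $n$. Rescale by a constant, such as $1/32$, so that $\Lambda\subset[-1,1]$. The family $\{H_{\mathsf{BQP}}(\lambda)\}_\lambda$ is then obtained by letting the gate slots vary over the gate set with the geometry fixed. The support pattern $\{E_a\}$ is circuit-independent if we always include all Pauli strings that could appear and allow zero coefficients.

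\textbf{Step 4: show the dynamics computes $U_{\mathsf{BQP}}$.} Restricted to the span of the history states $\lvert \phi_\ell\rangle = U_\ell\cdots U_1\lvert z\rangle\lvert \ell\rangle$, $H$ acts as the adjacency matrix of a path of length $L = \mathcal{O}(mK)$. This is a continuous-time quantum walk on a line. Pad the circuit with $\mathcal{O}(L)$ identity gates at the end, in the style of Nagaj's railroad or padding argument~\cite{nagaj2010fast}. Then for time $t$ drawn uniformly from $[0, T_{\mathsf{BQP}}]$ with $T_{\mathsf{BQP}} = \mathcal{O}(L)$, the walk lands in the padded region with constant (boostable) probability, and there the data register holds $U_{\mathsf{BQP}}\lvert z\rangle$. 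Measuring the output qubit, with its clock position checked, then decides $\mathcal{L}$ with bounded error.

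\textbf{Main obstacle.} I expect the hardest step to be this last one: obtaining a clean time-averaged lower bound on the probability of being in the padded region, uniformly over $t$ in a window linear in the system size. I would first try the explicit spectrum of the path adjacency matrix, whose eigenvectors are sines, and bound the time-averaged occupation of the last constant fraction of sites by its infinite-time average plus an $\mathcal{O}(1/(T\cdot\text{gap}))$ correction. If this turns out too delicate, the fallback is to increase the padding until the final fraction dominates.
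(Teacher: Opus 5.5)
Steps 1--3 follow the paper's construction. Those are Oliveira--Terhal spreading so that each work qubit meets at most three gates, a unary clock that makes each propagation term $5$-local with constant dual-graph degree, and a Pauli expansion of the gates and of the $\lvert 110\rangle\langle 100\rvert$-type clock transitions into a fixed finite coefficient set. Your explicit circuit-independent choice of $\{E_a\}$, with zero coefficients allowed, is if anything a cleaner definition of the family than the paper's. Step 4 also takes the paper's route: reduction to a walk on a path via the history states, identity padding, and the sine eigenbasis of the path.

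The one place where your sketch would not deliver what you claim is the bound $T_{\mathsf{BQP}}=\mathcal{O}(L)$.

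Write $\overline{p}_T=\sum_r A_{r,r}+2\sum_{r<s}A_{r,s}\frac{\sin((E_r-E_s)T)}{(E_r-E_s)T}$, with $E_r=2\cos\theta_r$ and $\theta_r=\pi r/(L+2)$. The diagonal part equals $\frac{2R+3}{2(L+2)}$, and that is all the padding controls; this is why $R=\Theta(K/\delta)$ is needed for $\overline{p}_T\ge 1-\delta$. The level spacings near the band edges are $\Theta(1/L^2)$. So an ``infinite-time average plus $\mathcal{O}(1/(T\cdot\mathrm{gap}))$'' estimate, read with the minimal gap, only gives $T$ of order $RL^2$, which is cubic in $L$ for constant-fraction padding. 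Extra padding, your fallback, makes this worse rather than better.

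The missing observation is that the initial clock state $\lvert 0\rangle$ has overlap proportional to $\sin\theta_r$ with the $r$-th eigenvector. Hence $A_{r,s}$ carries a factor $\sin\theta_r\sin\theta_s$, which vanishes exactly where the gaps close. The paper uses $\sin\theta_r\sin\theta_s=\sin^2\frac{\theta_r+\theta_s}{2}-\sin^2\frac{\theta_s-\theta_r}{2}$ and Jordan's inequality to get $|A_{r,s}|/|E_r-E_s|\le\frac{R+1}{(L+2)(s-r)}$. The harmonic sum then gives $T=\mathcal{O}\bigl(\frac{K}{\delta^2}\log\frac{K}{\delta}\bigr)$, which is linear only up to a logarithm.

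To get the strictly linear time that you and the lemma assert, also replace $R+1$ by the geometric-sum bound $\bigl|\sum_{j=K}^{L}\sin(\theta_r(j+1))\sin(\theta_s(j+1))\bigr|\le\frac{L+2}{s-r}$. This improves the pair bound to $(s-r)^{-2}$ and gives $T=\mathcal{O}(L/\delta)$.

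Finally, ``checking the clock position'' is not available in the learning application, where only the expectation of $Z_1\otimes(-Z_K)$ enters. There you need $\overline{p}_T\ge 1-\delta'$ for a small constant $\delta'$, not just constant occupation of the padded region. This is why the padding must scale like $K/\delta$.
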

\begin{proof}
Let $\calL$ be a $(\mathsf{Promise})\BQP$ problem and $U_{\BQP}$ the corresponding circuit that correctly decides it for input bit-strings in $\{0,1\}^m$. By the Feynman-Kitaev clock Hamiltonian construction~\cite{feynman1986quantum,nagaj2010fast}, there exists a local Hamiltonian $H_{\BQP}$ such that evolving under it for a linear time $T_{\BQP}$ implements $U_{\BQP}$ with high probability. We will now show that the coefficients of the Pauli strings in $H_{\BQP}$ can be enforced to only take values in a fixed discrete set so that constant-precision estimates are sufficient to determine $\lambda_{\BQP}$ exactly and thus reconstruct $H_{\BQP}$.

In particular, given the circuit $U_{\BQP}=U_K U_{K - 1} \dots U_2U_1$ pre-compiled in a brickwork architecture using the universal gateset $\{\mathrm{CNOT}, H, T\}$, the associated Feynman Hamiltonian is defined as
\begin{equation}\label{eq:Feynman-Hamiltonian}
    H_F=\sum_{j=1}^{K} H_j
\end{equation}
where the terms
\begin{equation}
    H_j=U_j\otimes \ket{j}\bra{j-1}+ U_j^\dagger\otimes\ket{j-1}\bra{j}
\end{equation}
act between a work register and a clock register. Later, in Subsection~\ref{subsec:time-evolution-feynman-kitaev-hamiltonian}, we show that, on average, the target concept implements $U_{\BQP}$ with high probability for a randomly sampled evolution time $t\in[0,T_{\BQP}]$,  where $T_{\BQP}$ is linear in the depth $K$ of the circuit $U_{\BQP}=U_K U_{K - 1} \cdots U_2U_1$. Specifically, in Lemma \ref{lem:avg-prob-mass-completed-subspace-clock-walk}, we show that this success probability of correctly implementing $U_{\BQP}$ can be brought to $1-\delta$, that is, arbitrarily close to $1$ by padding the circuit with $\mathcal{O}(K/\delta)$ identity gates and taking $T_{\BQP}=\widetilde{\mathcal{O}}(K/\delta^2)$.

Since the Feynman Hamiltonian is constructed with two-local gates $U_j\in\{\mathrm{CNOT}, H, T\}$, the standard unary encoding of the clock register makes $H_F$ $5$-local. However, in order to have quantum learnability, we must ensure that $H_{\BQP}$ satisfies the low-intersection property as well. For this, we leverage the construction in Ref.~\cite{oliveira2005complexity}. This involves transforming the circuit $U_{\BQP}$ into an equivalent circuit $U'_{\BQP}=U'_{K'}\cdots U'_2U'_1$ acting on $K'=mK$ qubits, such that each qubit participates in at most three gates (one $U_j$ and two $\mathrm{SWAP}$s). We refer to these additional qubits as the auxiliary work register. The Feynman Hamiltonian corresponding to this modified circuit is our final  $H_{\BQP}$. It takes the same form as in Eq.~(\ref{eq:Feynman-Hamiltonian}), but uses unitaries $U'_j\in\{\mathrm{CNOT},H,T,\mathrm{SWAP}\}$. Given that we choose a unary clock $\lvert j \rangle = \lvert 1^{j} 0^{K' - j} \rangle$, the $H_j$ terms can be rewritten as
\begin{equation}
    H_j=U'_j\otimes I\otimes\ket{1,1,0}\bra{1,0,0}_{j-1,j,j+1} \otimes I + (U_j')^\dagger \otimes I \otimes\ket{1,0,0}\bra{1,1,0}_{j-1,j,j+1}\otimes I
\end{equation}
These terms can be rewritten in the form of Eq.~(\ref{eq: H_bqp appendix}) by expanding each $U'_j$ and clock transition operators in the Pauli basis. In particular, using the relations
\[
X=\begin{pmatrix}0&1\\[2pt]1&0\end{pmatrix},\qquad
Y=\begin{pmatrix}0&-i\\[2pt]i&0\end{pmatrix},\qquad
Z=\begin{pmatrix}1&0\\[2pt]0&-1\end{pmatrix},
\]
and
\[
|1\rangle\langle 1|=\frac{I-Z}{2}, \qquad
|0\rangle\langle 0|=\frac{I+Z}{2},
\]
\[
|1\rangle\langle 0|=\frac{X-iY}{2}, \qquad
|0\rangle\langle 1|=\frac{X+iY}{2},
\]
we obtain
\[
|1,1,0\rangle\langle 1,0,0|
=
\left(\frac{I-Z}{2}\right)\otimes \left(\frac{X-iY}{2}\right)\otimes \left(\frac{I+Z}{2}\right),
\]

\[
|1,0,0\rangle\langle 1,1,0|
=
\left(\frac{I-Z}{2}\right)\otimes \left(\frac{X+iY}{2}\right)\otimes \left(\frac{I+Z}{2}\right),
\]

\[
\mathrm{SWAP}
=\frac{1}{2}\Big(I\otimes I+X\otimes X+Y\otimes Y+Z\otimes Z\Big),
\]

\[
\mathrm{CNOT}
=|0\rangle\langle 0|\otimes I \;+\; |1\rangle\langle 1|\otimes X
=\frac{I+Z}{2}\otimes I \;+\; \frac{I-Z}{2}\otimes X,
\]

\[
H=\frac{1}{\sqrt{2}}(X+Z),
\]

\[
T=\begin{pmatrix}1&0\\[2pt]0&e^{i\pi/4}\end{pmatrix}
=\frac{1+e^{i\pi/4}}{2}\,I+\frac{1-e^{i\pi/4}}{2}\,Z.
\]
Therefore, each gate $U'_j$, as well as each clock transition operator, admits a Pauli expansion with coefficients drawn from a fixed finite set $\Lambda_0$. Since any given Pauli string can appear in at most a constant number of terms $H'_j$, each coefficient in the final expansion $H_{\BQP}=\sum_{a \in \left[ M \right]} \lambda_a E_a$
is a sum of at most a constant number of elements of $\Lambda_0$. Consequently, the coefficients $\lambda_a$ belong to a finite set $\Lambda$ that is independent of the system size.
Our final Hamiltonian family $\{H_{\BQP}(\lambda)\}_\lambda$ is obtained by considering all possible Pauli decompositions obtainable by following the construction above for arbitrary choices of circuits $U_{\BQP}=U_K U_{K - 1} \dots U_2U_1$.
\end{proof}

\subsection{Time-evolution under the Feynman-Kitaev clock Hamiltonian}
\label{subsec:time-evolution-feynman-kitaev-hamiltonian}
Here, we provide a rigorous analysis of the dynamics generated by the Feynman-Kitaev Hamiltonian. In particular, we show that, on average over some polynomially large interval $\left[ 0, T \right]$ and upon padding with sufficiently many identity gates, long-time-evolution with respect to the clock Hamiltonian is able to correctly simulate $\mathsf{BQP}$-complete computations with success probability arbitrarily close to $1$.

\subsubsection{Set-up}
Our basic set-up will be as follows. Let $L, K \in \mathbb{N}$, with $L > K$. Consider $U_{\mathsf{BQP}} = U_{L} U_{L - 1} \ldots U_{K}U_{K - 1} \ldots U_{2} U_{1}$ with $\left\{ U_{L}, U_{L - 1} \ldots U_{K + 1} \right\}$ all identities and $L - K = R$. Let $U_{\mathsf{BQP}}$ decide membership of $z \in \left\{ 0, 1 \right\}^{m}$ in some language $\mathcal{L}$ belonging to the class $\mathsf{BQP}$, and let
\begin{align}
    \label{eq:feynman-kitaev-clock-hamiltonian}
    H_{F} = \sum_{j = 1}^{L} \left( U_{j} \otimes \lvert j \rangle \langle j - 1 \rvert + U_{j}^{\dagger} \otimes \lvert j - 1 \rangle \langle j \rvert \right)
\end{align}
be the corresponding Feynman-Kitaev clock Hamiltonian. For each $j \in \left\{ 1, \ldots, L \right\}$, define:
\begin{align}
    \lvert \psi_{j} \left( z \right) \rangle &\coloneqq U_{j} U_{j - 1} \ldots U_{2} U_{1} \left( \lvert z \rangle \otimes \lvert 0 \rangle^{\otimes a} \right) \in \mathcal{H}_{\text{work}}, \\
    \text{and}\qquad\lvert \eta_{j} \left( z \right) \rangle &\coloneqq \lvert \psi_{j} \left( z \right) \rangle \otimes \lvert j \rangle \in \mathcal{H}_{\text{work}} \otimes \mathcal{H}_{\text{clock}},
\end{align}
with $\lvert \psi_{0} \left( z \right) \rangle \coloneqq \lvert z \rangle \otimes \lvert 0 \rangle^{\otimes a}$ for $j = 0$ and $\lvert \eta_{0} \left( z \right) \rangle \coloneqq \left( \lvert z \rangle \otimes \lvert 0 \rangle^{\otimes a} \right) \otimes \lvert 0 \rangle$, to be the work and history states respectively. We will also consider the following ``path" Hamiltonian, which acts on the clock subspace $\mathcal{H}_{\text{clock}}$:
\begin{align}
    \label{eq:path-hamiltonian-clock-subspace}
    H_{\text{path}} \coloneqq \sum_{j = 1}^{L} \left( \lvert j \rangle \langle j - 1 \rvert + \lvert j - 1 \rangle \langle j \rvert \right).
\end{align}

\subsubsection{Time-evolution of history state}
\begin{proposition}[Unitary relation between $H_{F}$ and $H_{\text{path}}$]
    Let $H_{F}$ be the Feynman-Kitaev clock Hamiltonian as defined in Eq.~(\ref{eq:feynman-kitaev-clock-hamiltonian}) and $H_{\textnormal{path}}$ be the path Hamiltonian defined in Eq.~(\ref{eq:path-hamiltonian-clock-subspace}). Then, there exists a block-diagonal unitary
    \begin{align}
        W \coloneqq \sum_{j = 0}^{L} V_{j} \otimes \lvert j \rangle \langle j \rvert,
    \end{align}
    such that $H_{F} = W \left( I_{\textnormal{work}} \otimes H_{\textnormal{path}}  \right) W^{\dagger}$.
\end{proposition}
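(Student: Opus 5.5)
We construct $W$ explicitly by choosing $V_{j} \coloneqq U_{j} U_{j-1} \cdots U_{2} U_{1}$ for $j \in \{1, \ldots, L\}$ and $V_{0} \coloneqq I_{\textnormal{work}}$. These are exactly the cumulative products that map $\lvert \psi_{0}(z) \rangle$ to $\lvert \psi_{j}(z) \rangle$. Unitarity of $W$ follows because the clock projectors $\lvert j \rangle \langle j \rvert$ are mutually orthogonal and sum to the identity on the clock subspace spanned by the legal clock states. We then compute
\begin{align}
    W W^{\dagger} = \sum_{j, k} V_{j} V_{k}^{\dagger} \otimes \lvert j \rangle \langle j \vert k \rangle \langle k \rvert = \sum_{j} I \otimes \lvert j \rangle \langle j \rvert = I,
\end{align}
and $W^{\dagger} W = I$ follows in the same way. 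Here $\mathcal{H}_{\text{clock}}$ is understood as the span of $\{\lvert j \rangle\}_{j = 0}^{L}$, which is the space on which both $H_{F}$ and $H_{\text{path}}$ act nontrivially; with a unary encoding this is the legal-clock subspace.

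The main step is to conjugate each hopping term of $I_{\textnormal{work}} \otimes H_{\textnormal{path}}$ by $W$. For the forward term, only the $j$-th block on the left and the $(j-1)$-th block on the right survive, giving
\begin{align}
    W \left( I \otimes \lvert j \rangle \langle j - 1 \rvert \right) W^{\dagger} = V_{j} V_{j-1}^{\dagger} \otimes \lvert j \rangle \langle j - 1 \rvert .
\end{align}
Telescoping gives $V_{j} V_{j-1}^{\dagger} = U_{j} V_{j-1} V_{j-1}^{\dagger} = U_{j}$, and this holds for $j = 1$ as well since $V_{0} = I$. The backward term is the Hermitian conjugate of the forward one, so it becomes $V_{j-1} V_{j}^{\dagger} \otimes \lvert j - 1 \rangle \langle j \rvert = U_{j}^{\dagger} \otimes \lvert j - 1 \rangle \langle j \rvert$. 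Summing over $j$ and using linearity of conjugation recovers Eq.~(\ref{eq:feynman-kitaev-clock-hamiltonian}) exactly.

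There is essentially no obstacle; the only care needed is bookkeeping. One point is the boundary case $j = 0$. The other is making explicit that the identity is stated on $\mathcal{H}_{\text{work}} \otimes \mathcal{H}_{\text{clock}}$ with $\mathcal{H}_{\text{clock}} = \operatorname{span}\{\lvert j \rangle\}_{j = 0}^{L}$, so that $\sum_{j} \lvert j \rangle \langle j \rvert$ is the identity there. The padding identities $U_{K+1}, \ldots, U_{L}$ need no special treatment, because the telescoping argument holds for arbitrary unitaries.

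As a consistency check, which will also be used afterwards, we note that $W \left( \lvert \psi_{0}(z) \rangle \otimes \lvert j \rangle \right) = \lvert \eta_{j}(z) \rangle$. Time evolution under $H_{F}$ therefore reduces to a continuous-time quantum walk on the path under $H_{\text{path}}$: $e^{-iH_{F}t} = W \left( I \otimes e^{-iH_{\text{path}}t} \right) W^{\dagger}$.
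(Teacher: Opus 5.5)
Your proposal is correct and follows essentially the same route as the paper: conjugate each hopping term of $I_{\textnormal{work}} \otimes H_{\textnormal{path}}$ by $W$, keep only the surviving $V_{j} V_{j-1}^{\dagger}$ and $V_{j-1} V_{j}^{\dagger}$ blocks, and telescope with $V_{j} = U_{j} \cdots U_{1}$, $V_{0} = I$. Your explicit choice of the $V_{j}$, the unitarity check, and the remark that the clock space is the span of the legal states $\{\lvert j \rangle\}_{j=0}^{L}$ spell out points the paper's proof leaves implicit.
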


\begin{proof}
Consider the following:
\begin{align}
    &W \left( I_{\text{work}} \otimes H_{\text{path}} \right) W^{\dagger} \\
    = &W \left( I_{\text{work}} \otimes \sum_{j = 1}^{L} \lvert j \rangle \langle j - 1 \rvert + \lvert j - 1 \rangle \langle j \rvert \right) W^{\dagger} \\
    = &\sum_{j = 1}^{L} W \left( I_{\text{work}} \otimes \left( \lvert j \rangle \langle j - 1 \rvert + \lvert j - 1 \rangle \langle j \rvert \right) \right) W^{\dagger} \\
    = &\sum_{j = 1}^{L} \left[ \left( \sum_{k = 0}^{L} V_{k} \otimes \lvert k \rangle \langle k \rvert \right) \left( I_{\text{work}} \otimes \left( \lvert j \rangle \langle j - 1 \rvert + \lvert j - 1 \rangle \langle j \rvert \right) \right) \left( \sum_{\ell = 0}^{L} V_{\ell} \otimes \lvert \ell \rangle \langle \ell \rvert \right)^{\dagger} \right] \\
    = &\sum_{j = 1}^{L} \left[ \sum_{k = 0}^{L} \sum_{\ell = 0}^{L} V_{k} V_{\ell}^{\dagger} \otimes \left( \lvert k \rangle \langle k \rvert j \rangle \langle j - 1 \rvert \ell \rangle \langle \ell \rvert + \lvert k \rangle \langle k \rvert j - 1 \rangle \langle j \rvert \ell \rangle \langle \ell \rvert \right) \right] \\
    = &\sum_{j = 1}^{L} \left[ V_{j} V_{j - 1}^{\dagger} \otimes \lvert j \rangle \langle j - 1 \rvert + V_{j - 1} V_{j}^{\dagger} \otimes \lvert j - 1 \rangle \langle j \rvert  \right] \\
    = &\sum_{j = 1}^{L} \left[ U_{j} U_{j - 1} \ldots U_{1} U_{1}^{\dagger} \ldots U_{j-1}^{\dagger} \otimes \lvert j \rangle \langle j - 1 \rvert + U_{j - 1} \ldots U_{1} U_{1}^{\dagger} \ldots U_{j-1}^{\dagger} U_{j}^{\dagger} \otimes \lvert j - 1 \rangle \langle j \rvert \right] \\
    = &\sum_{j = 1}^{L} \left( U_{j} \otimes \lvert j \rangle \langle j - 1 \rvert + U_{j}^{\dagger} \otimes \lvert j - 1 \rangle \langle j \rvert \right) = H_{F}.
\end{align}
\end{proof}

\begin{lemma}[Time-evolved history state]
    \label{lem:time-evolved-history-state}
    Let $\lvert \eta_{0} \left( z \right) \rangle \coloneqq \lvert \psi_{0} \left( z \right) \rangle \otimes \lvert 0 \rangle$ be the initial history state, and let $\lvert \Psi_{z} \left( t \right) \rangle \coloneqq e^{-iH_{F}t} \lvert \eta_{0} \left( z \right) \rangle$ be the history state after time-evolution under the Feynman-Kitaev clock Hamiltonian $H_{F}$ for some time $t > 0$. The state $\lvert \Psi_{z} \left( t \right) \rangle$ is then given as follows:
    \begin{align}
        \lvert \Psi_{z} \left( t \right) \rangle = \sum_{j = 0}^{L} \langle j \rvert e^{-iH_{\textnormal{path}}t} \lvert 0 \rangle \lvert \eta_{j} \left( z \right) \rangle.
    \end{align}
\end{lemma}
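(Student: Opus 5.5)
We will prove the lemma using the unitary relation $H_{F} = W \left( I_{\textnormal{work}} \otimes H_{\textnormal{path}} \right) W^{\dagger}$ from the preceding proposition, with the block-diagonal unitary $W = \sum_{j=0}^{L} V_{j} \otimes \lvert j \rangle \langle j \rvert$, where $V_{j} \coloneqq U_{j} U_{j-1} \cdots U_{1}$ and $V_{0} \coloneqq I_{\textnormal{work}}$. This is the choice implicitly used in the last lines of that proof. Since conjugation by a unitary commutes with the matrix exponential, functional calculus immediately gives
\begin{align}
    e^{-iH_{F}t} = W \left( I_{\textnormal{work}} \otimes e^{-iH_{\textnormal{path}}t} \right) W^{\dagger}.
\end{align}
If one prefers an elementary justification, it suffices to note that $H_{F}^{k} = W ( I \otimes H_{\textnormal{path}}^{k} ) W^{\dagger}$ for every $k$, and then sum the power series.

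We next apply this to the initial history state $\lvert \eta_{0}(z) \rangle = \lvert \psi_{0}(z) \rangle \otimes \lvert 0 \rangle$. Because $V_{0} = I$, we have $W^{\dagger} \lvert \eta_{0}(z) \rangle = V_{0}^{\dagger} \lvert \psi_{0}(z) \rangle \otimes \lvert 0 \rangle = \lvert \eta_{0}(z) \rangle$. Applying $I \otimes e^{-iH_{\textnormal{path}}t}$ and inserting the resolution of identity $\sum_{j=0}^{L} \lvert j \rangle \langle j \rvert$ on the clock register (it suffices because $H_{\textnormal{path}}$ preserves the span of $\lvert 0 \rangle, \ldots, \lvert L \rangle$), we obtain
\begin{align}
    \sum_{j=0}^{L} \langle j \rvert e^{-iH_{\textnormal{path}}t} \lvert 0 \rangle \, \lvert \psi_{0}(z) \rangle \otimes \lvert j \rangle.
\end{align}

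Finally, we apply $W$ term by term. The clock projector in $W$ selects the $j$-th block, so each summand maps to $\langle j \rvert e^{-iH_{\textnormal{path}}t} \lvert 0 \rangle \, V_{j} \lvert \psi_{0}(z) \rangle \otimes \lvert j \rangle$. By definition, $V_{j} \lvert \psi_{0}(z) \rangle = \lvert \psi_{j}(z) \rangle$, so this equals $\langle j \rvert e^{-iH_{\textnormal{path}}t} \lvert 0 \rangle \, \lvert \eta_{j}(z) \rangle$, which gives the claimed expansion.

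The only step requiring any care is the first: confirming that $W$ is unitary and that the exponential intertwines correctly. Unitarity holds because the $V_{j}$ are products of unitaries and the clock projectors are orthogonal and resolve the identity on the clock space. The remaining steps are direct bookkeeping. One should also fix the clock space as the span of $\lvert 0 \rangle, \ldots, \lvert L \rangle$, for example the legal unary strings, so that $W$ is defined on all of $\mathcal{H}_{\textnormal{work}} \otimes \mathcal{H}_{\textnormal{clock}}$.
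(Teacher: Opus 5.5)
Your proposal is correct and follows essentially the same route as the paper's proof: write $e^{-iH_{F}t} = W \left( I_{\textnormal{work}} \otimes e^{-iH_{\textnormal{path}}t} \right) W^{\dagger}$, use $V_{0} = I$ so that $W^{\dagger}$ fixes $\lvert \eta_{0}(z) \rangle$, expand $e^{-iH_{\textnormal{path}}t} \lvert 0 \rangle$ in the clock basis, and apply $W$ blockwise to turn $\lvert \psi_{0}(z) \rangle$ into $\lvert \psi_{j}(z) \rangle$. Your explicit choice $V_{j} = U_{j} \cdots U_{1}$, $V_{0} = I$, and your remark about working on the span of the legal clock states, spell out details that the paper leaves implicit.
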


\begin{proof}
We compute $\lvert \Psi_{z} \left( t \right) \rangle$ as follows:
\begin{align}
    \lvert \Psi_{z} \left( t \right) \rangle &\coloneqq e^{-iH_{F}t} \lvert \eta_{0} \left( z \right) \rangle\\ 
    &= e^{-iH_{F}t} \left( \lvert \psi_{0} \left( z \right) \rangle \otimes \lvert 0 \rangle \right) \\
    &= e^{-i W \left( I_{\text{work}} \otimes H_{\text{path}} \right) W^{\dagger} t} \left( \lvert \psi_{0} \left( z \right) \otimes \lvert 0 \rangle \right) \\
    &= W \left( I_{\text{work}} \otimes e^{-iH_{\text{path}}t} \right) W^{\dagger} \left( \lvert \psi_{0} \left( z \right) \rangle \otimes \lvert 0 \rangle \right) \\
    &= W \left( I_{\text{work}} \otimes e^{-iH_{\text{path}}t} \right) \left( \sum_{j = 0}^{L} V_{j}^{\dagger} \otimes \lvert j \rangle \langle j \rvert \right) \left( \lvert \psi_{0} \left( z \right) \rangle \otimes \lvert 0 \rangle \right) \\
    &= W \left( I_{\text{work}} \otimes e^{-iH_{\text{path}}t} \right) \left( V_{0}^{\dagger} \lvert \psi_{0} \left( z \right) \rangle \otimes \lvert 0 \rangle \langle 0 \rvert 0 \rangle \right) \\
    &= W \left( I_{\text{work}} \otimes e^{-iH_{\text{path}}t} \right) \left( \lvert \psi_{0} \left( z \right) \rangle \otimes \lvert 0 \rangle \right) \\
    &= W \left( \lvert \psi_{0} \left( z \right) \rangle \otimes e^{-iH_{\text{path}}t} \lvert 0 \rangle \right) \\
    &= \left( \sum_{j = 0}^{L} V_{j} \otimes \lvert j \rangle \langle j \rvert \right) \left( \lvert \psi_{0} \left( z \right) \rangle \otimes e^{-iH_{\text{path}}t} \lvert 0 \rangle \right) \\
    &= \sum_{j = 0}^{L} V_{j} \lvert \psi_{0} \left( z \right) \rangle \otimes \langle j \rvert e^{-iH_{\text{path}}t} \lvert 0 \rangle \lvert j \rangle \\
    &= \sum_{j = 0}^{L} \langle j \rvert e^{-iH_{\text{path}}t} \lvert 0 \rangle \lvert \psi_{j} \left( z \right) \rangle \lvert j \rangle = \sum_{j = 0}^{L} \langle j \rvert e^{-iH_{\text{path}}t} \lvert 0 \rangle \lvert \eta_{j} \left( z \right) \rangle.
\end{align}
This concludes the proof.
\end{proof}

\begin{lemma}[Average probability mass in the completed subspace]
    \label{lem:avg-prob-mass-completed-subspace-clock-walk}
    Define $\alpha_{j} \left( t \right) \coloneqq \langle j \rvert e^{-iH_{\textnormal{path}}t} \lvert 0 \rangle$ to be the $j^{\text{th}}$ probability amplitude of the time-evolved history state $\lvert \Psi_{z} \left( t \right) \rangle$. Let
    \begin{align}
        \overline{p}_{T} \coloneqq \frac{1}{T} \int_{0}^{T} \sum_{j = K}^{L} \left\vert \alpha_{j} \left( t \right) \right\vert^{2} dt = \frac{1}{T} \int_{0}^{T} \langle \Psi_{z} \left( t \right) \rvert I_{\textnormal{work}} \otimes \Pi_{\textnormal{done}} \rvert \Psi_{z} \left( t \right) \rangle dt,~~\text{where}~~\Pi_{\textnormal{done}} \coloneqq \sum_{j = K}^{L} \lvert j \rangle \langle j \rvert,
    \end{align}
    denote the time-averaged probability mass in the completed subspace, namely, the average probability that the clock register lies in the completed portion of the clock, $\left\{ K, K + 1, \ldots, L \right\}$. Then, for some $\delta > 0$, in order for~$\overline{p}_{T} \geqslant 1 - \delta$ to hold, it is sufficient to take a number of padded gates $R$ and a time-averaging window $[0,T]$ such that
    \begin{align}
        R &\in \mathcal{O} \left( \frac{K}{\delta} \right) \\
        \text{and}\quad T &\in \mathcal{O} \left( \frac{K}{\delta^{2}}  \log \left( \frac{K}{\delta} \right) \right)
    \end{align}
    respectively.
\end{lemma}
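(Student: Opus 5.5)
The plan is to diagonalize $H_{\text{path}}$ exactly and compute the time average of $|\alpha_j(t)|^2$ through its spectral decomposition. $H_{\text{path}}$ is the adjacency matrix of a path on $L+1$ vertices, with eigenvectors $\lvert \phi_k \rangle = \sum_{j=0}^{L} \sqrt{\tfrac{2}{L+2}} \sin\bigl(\tfrac{\pi k (j+1)}{L+2}\bigr) \lvert j \rangle$ and eigenvalues $E_k = 2\cos\bigl(\tfrac{\pi k}{L+2}\bigr)$ for $k = 1, \ldots, L+1$, all distinct. Writing $\alpha_j(t) = \sum_k \langle j \vert \phi_k\rangle\langle \phi_k \vert 0\rangle e^{-iE_k t}$ gives
\begin{align}
    \frac{1}{T}\int_0^T |\alpha_j(t)|^2 dt = \sum_k |\langle j\vert\phi_k\rangle|^2 |\langle\phi_k\vert 0\rangle|^2 + \sum_{k\neq k'} c_{jkk'}\, \frac{e^{-i(E_k - E_{k'})T} - 1}{-i(E_k-E_{k'})T},
\end{align}
where $c_{jkk'}$ collects the products of the four overlaps. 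The first sum is the infinite-time (dephased) distribution $\pi_\infty(j)$. The second sum is the finite-time correction.

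\textbf{Step 1 (infinite-time distribution).} I will show that $\pi_\infty(j) = \sum_k \tfrac{4}{(L+2)^2} \sin^2\bigl(\tfrac{\pi k (j+1)}{L+2}\bigr)\sin^2\bigl(\tfrac{\pi k}{L+2}\bigr)$ is at most $\mathcal{O}(1/L)$ for every $j$. Expanding $\sin^2$ with product-to-sum identities reduces the sum to at most $\tfrac{3}{2(L+2)}$; the standard result is that it is nearly uniform. Consequently the dephased mass on the unfinished part $\{0,\ldots,K-1\}$ is at most $cK/L$. Choosing $R = L - K \in \mathcal{O}(K/\delta)$ makes this at most $\delta/2$.

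\textbf{Step 2 (finite-time correction).} This is the main obstacle, because the gaps $E_k - E_{k'}$ can be as small as $\Theta(1/L^2)$. A naive bound on the cross terms would then force $T \sim L^2$, not the claimed $L\log L/\delta$. My first attempt is to bound the unfinished mass directly, $\sum_{j<K} |\alpha_j|^2 = \|\Pi_{<K} e^{-iH_{\text{path}}t}\lvert 0\rangle\|^2$, and treat the correction as $\langle 0 \rvert \mathcal{E}_T(\Pi_{<K}) \lvert 0 \rangle$ minus its dephased value, where $\mathcal{E}_T$ is the time-averaging map. The key observation is that $\lvert 0\rangle$ has overlap $|\langle\phi_k\vert0\rangle|^2 \propto \sin^2(\pi k/(L+2))/L$, which is small near the band edges, exactly where the level spacing is smallest. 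Only pairs with $|E_k-E_{k'}| \lesssim 1/T$ contribute non-negligibly. Grouping levels into windows of width $1/T$ and applying Cauchy–Schwarz, the correction is bounded by roughly $\sum_{\text{windows}} (\text{weight of } \lvert 0\rangle \text{ in window}) \times (\text{weight of } \Pi_{<K} \text{ in window})$. In the bulk, a window contains about $LT^{-1}$ levels, each with weight about $1/L$ from $\lvert 0\rangle$ and about $K/L$ from $\Pi_{<K}$. Away from the window, the off-diagonal terms decay like $1/(T|\Delta E|)$, and summing these produces a harmonic sum, which is the source of the $\log(K/\delta)$ factor. Altogether I expect a correction of order $K\log(L)/(\delta T)$-type terms plus $K/L$.

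\textbf{Step 3 (parameter choice).} As a fallback that avoids the spectral bookkeeping, I will use ballistic propagation. The Bessel-function form $\alpha_j(t) \approx i^{-j}(J_j(2t) + J_{j+2}(2t))$ before reflections shows that the walk leaves $\{0,\ldots,K-1\}$ after time $\mathcal{O}(K)$ and that the mass left behind decays. The time-averaged mass on the first $K$ sites over $[0,T]$ is then $\mathcal{O}(K/T)$ up to logarithmic factors, with reflection from the far end contributing at most the $\mathcal{O}(K/L)$ recurrent mass from Step 1. Finally I set $L = K + R$ with $R = \Theta(K/\delta)$ and $T = \Theta\bigl(\tfrac{K}{\delta^2}\log\tfrac{K}{\delta}\bigr)$, so that both error pieces are at most $\delta/2$. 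This yields $\overline{p}_T \geq 1-\delta$.
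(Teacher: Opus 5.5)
Your Step 1 is correct and is the paper's diagonal computation. The dephased distribution equals $1/(L+2)$ for $1\le j\le L-1$ and $3/(2(L+2))$ at $j\in\{0,L\}$, so the dephased unfinished mass is exactly $\frac{2K+1}{2(L+2)}$, and $R\in\mathcal{O}(K/\delta)$ handles it. The gap is the finite-time (off-diagonal) correction, which you never actually bound.

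Step 2 ends in ``I expect'': the window/Cauchy--Schwarz estimate is not carried out. Moreover, in the relevant regime $T\gg L$ a bulk window of width $1/T$ contains fewer than one level, so everything rests on the tail terms $|c_{jkk'}|/(T|E_k-E_{k'}|)$, for which you give no bound.

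Step 3 does not rescue this. The reflection-free formula $\alpha_j(t)=(-i)^j\bigl(J_j(2t)+J_{j+2}(2t)\bigr)$ is only valid on $\{0,\ldots,K-1\}$ until the front, moving at speed $2$, bounces off site $L$ and returns, i.e.\ for $t\lesssim L$. With $L=\Theta(K/\delta)$ and $T=\Theta(\delta^{-2}K\log(K/\delta))$ one has $T/L=\Theta(\delta^{-1}\log(K/\delta))$, so nearly the whole averaging window lies after many reflections. Your claim that reflections contribute ``at most the $\mathcal{O}(K/L)$ recurrent mass from Step 1'' conflates the infinite-time dephased average with the average over $[0,T]$. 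Their difference is precisely the off-diagonal sum, so the fallback is circular at exactly the hard point.

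The missing step is to turn your correct observation about edge-suppressed overlaps into a pointwise inequality, which is what the paper does.

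\begin{itemize}
\item Write $\theta_r=\pi r/(L+2)$ and, for a clock region $S$, $A^S_{r,s}=\frac{4}{(L+2)^2}\sin\theta_r\sin\theta_s\sum_{j\in S}\sin(\theta_r(j+1))\sin(\theta_s(j+1))$. Then $|A^S_{r,s}|\le\frac{4|S|}{(L+2)^2}\sin\theta_r\sin\theta_s$.
\item For $r<s$, $|E_r-E_s|=4\sin\frac{\theta_r+\theta_s}{2}\sin\frac{\theta_s-\theta_r}{2}$ and $\sin\theta_r\sin\theta_s=\sin^2\frac{\theta_r+\theta_s}{2}-\sin^2\frac{\theta_s-\theta_r}{2}\le\sin^2\frac{\theta_r+\theta_s}{2}$.
\item By Jordan's inequality, $\frac{|A^S_{r,s}|}{|E_r-E_s|}\le\frac{|S|}{(L+2)^2\sin\frac{\pi(s-r)}{2(L+2)}}\le\frac{|S|}{(L+2)(s-r)}$. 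The small gaps near the band edges are cancelled term by term, with no windowing.
\item Pairing $(r,s)$ with $(s,r)$, each pair contributes at most $2|A^S_{r,s}|/(T|E_r-E_s|)$. Since $\sum_{1\le r<s\le L+1}(s-r)^{-1}=(L+1)H_L-L$, the correction is at most $\frac{2|S|(1+(L+1)\log L)}{T(L+2)}$.
\end{itemize}

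The paper takes $S=\{K,\ldots,L\}$ with $|S|=R+1$, which yields exactly the stated $T\in\mathcal{O}(\delta^{-2}K\log(K/\delta))$. Your choice $S=\{0,\ldots,K-1\}$ is equally valid: orthogonality of the eigenvectors makes the $r\neq s$ coefficients summed over all $j$ vanish, so the two corrections are negatives of each other. It would even give the sharper $T\in\mathcal{O}(\delta^{-1}K\log(K/\delta))$.
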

\begin{proof}
From Equations (A2) and (A3) of Ref.~\cite{nagaj2010fast}, we know that the eigenvalues and eigenvectors of $H_{\text{path}}$ are given as follows:
\begin{align}
    \lvert v_{r} \rangle &\coloneqq \sqrt{\frac{2}{L + 2}} \sum_{\ell = 0}^{L} \sin \left( \frac{\pi r \left( \ell + 1 \right)}{L + 2} \right) \lvert \ell \rangle, \\
    E_{r} &= 2 \cos \left( \frac{\pi r}{L + 2} \right),
\end{align}
for $r = 1, \ldots, L + 1$. Let us now compute
\begin{align}
    \sum_{j = K}^{L} \left\vert \alpha_{j} \left( t \right) \right\vert^{2} &= \sum_{j = K}^{L} \left\vert \langle j \rvert e^{-iH_{\text{path}}t} \lvert 0 \rangle \right\vert^{2} \\
    &= \sum_{j = K}^{L} \langle 0 \rvert e^{iH_{\text{path}}t} \lvert j \rangle \langle j \rvert e^{-iH_{\text{path}}t} \lvert 0 \rangle \\
    &= \sum_{j = K}^{L} \langle 0 \rvert \left( \sum_{s = 1}^{L + 1} e^{iE_{s}t} \lvert v_{s} \rangle \langle v_{s} \rvert \right) \rvert j \rangle \langle j \rvert \left( \sum_{r = 1}^{L + 1} e^{-iE_{r}t} \lvert v_{r} \rangle \langle v_{r} \rvert \right) \lvert 0 \rangle \\
    &= \sum_{j = K}^{L} \sum_{r = 1}^{L + 1} \sum_{s = 1}^{L + 1} e^{-i \left( E_{r} - E_{s} \right) t} \langle 0 \rvert v_{s} \rangle \langle v_{s} \rvert j \rangle \langle j \rvert v_{r} \rangle \langle v_{r} \rvert 0 \rangle \\
    &= \frac{4}{\left( L + 2 \right)^{2}} \sum_{r, s = 1}^{L + 1} \left\{ \sin \left( \frac{\pi r}{L + 2} \right) \sin \left( \frac{\pi s}{L + 2} \right) \sum_{j = K}^{L} \sin \left( \frac{\pi r \left( j + 1 \right)}{L + 2} \right) \sin \left( \frac{\pi s \left( j + 1 \right)}{L + 2} \right) e^{-i \left( E_{r} - E_{s} \right) t} \right\}.
\end{align}
Let us define
\begin{align}
    A_{r, s} \coloneqq \frac{4}{\left( L + 2 \right)^{2}} \sin \left( \frac{\pi r}{L + 2} \right) \sin \left( \frac{\pi s}{L + 2} \right) \sum_{j = K}^{L} \sin \left( \frac{\pi r \left( j + 1 \right)}{L + 2} \right) \sin \left( \frac{\pi s \left( j + 1 \right)}{L + 2} \right)
\end{align}
for notational convenience. We may rewrite the time-averaged probability mass as the following:
\begin{align}
     \overline{p}_{T} &\coloneqq \frac{1}{T}\int_{0}^{T} \sum_{j = K}^{L} \left\vert \alpha_{j} \left( t \right) \right\vert^{2} dt \\ 
     &= \frac{1}{T} \int_{0}^{T} \sum_{r, s = 1}^{L + 1} A_{r, s} e^{-i \left( E_{r} - E_{s} \right)t} dt \\
     \label{eq:avg-prob-diag-off-diag-split}
     &= \underset{\eqqcolon \overline{p}_{\text{diag}}}{\underbrace{\sum_{r = 1}^{L + 1} A_{r, r}}} + \underset{\eqqcolon \overline{p}_{\text{off}}}{\underbrace{\sum_{\substack{r, s = 1, \\ r \neq s}}^{L + 1} \left( A_{r, s} \cdot \frac{1}{T} \int_{0}^{T} e^{- i \left( E_{r} - E_{s} \right)t} dt \right)}},
\end{align}
where we have split the time-average into diagonal and off-diagonal parts. Let us focus on the diagonal part first. We have the following:
\begin{align}
    &\overline{p}_{\text{diag}} = \sum_{r = 1}^{L + 1} A_{r, r} \\
    = &\frac{4}{\left( L + 2 \right)^{2}} \sum_{r = 1}^{L + 1} \sin^{2} \left( \frac{\pi r}{L + 2} \right) \sum_{j = K}^{L} \sin^{2} \left( \frac{\pi r \left( j + 1 \right)}{L + 2} \right) \\
    = &\frac{4}{\left( L + 2 \right)^{2}} \sum_{j = K}^{L} \sum_{r = 1}^{L + 1} \sin^{2} \left( \frac{\pi r}{L + 2} \right) \sin^{2} \left( \frac{\pi r \left( j + 1 \right)}{L + 2} \right) \\
    = &\frac{4}{\left( L + 2 \right)^{2}} \sum_{j = K}^{L} \sum_{r = 1}^{L + 1} \frac{1}{2} \left\{ 1 - \cos \left( \frac{2 \pi r}{L + 2} \right) \right\} \cdot \frac{1}{2} \left\{ 1 - \cos \left( \frac{2 \pi r \left( j + 1 \right)}{L + 2} \right) \right\} \\
    = &\frac{1}{\left( L + 2 \right)^{2}} \sum_{j = K}^{L} \sum_{r = 1}^{L + 1} \left\{ 1 - \cos \left( \frac{2 \pi r}{L + 2} \right) - \cos \left( \frac{2 \pi r \left( j + 1 \right)}{L + 2} \right) + \cos \left( \frac{2 \pi r}{L + 2} \right) \cos \left( \frac{2 \pi r \left( j + 1 \right)}{L + 2} \right) \right\} \\
    = &\frac{1}{\left( L + 2 \right)^{2}} \sum_{j = K}^{L} \sum_{r = 1}^{L + 1} \biggl\{ 1 - \cos \left( \frac{2 \pi r}{L + 2} \right) - \cos \left( \frac{2 \pi r \left( j + 1 \right)}{L + 2} \right) + \frac{1}{2} \biggl( \cos \left( \frac{2 \pi r \left( j + 1 \right)}{L + 2} - \frac{2 \pi r}{L + 2} \right) \\ 
    &\hspace{10cm} + \cos \left( \frac{2 \pi r \left( j + 1 \right)}{L + 2} + \frac{2 \pi r}{L + 2} \right) \biggr) \biggr\} \\
    \label{eq:trig-diagonal-avg-prob-1}
    = &\frac{1}{\left( L + 2 \right)^{2}} \sum_{j = K}^{L} \biggl\{ \sum_{r = 1}^{L + 1} 1 - \sum_{r = 1}^{L + 1} \cos \left( \frac{2 \pi r}{L + 2} \right) - \sum_{r = 1}^{L + 1} \cos \left( \frac{2 \pi r \left( j + 1 \right)}{L + 2} \right) + \frac{1}{2} \sum_{r = 1}^{L + 1} \cos \left( \frac{2 \pi r j}{L + 2} \right) \\ 
    \label{eq:trig-diagonal-avg-prob-2a}
    &\hspace{10cm} + \frac{1}{2} \sum_{r = 1}^{L + 1} \cos \left( \frac{2 \pi r \left( j + 2 \right)}{L + 2} \right) \biggr\}.
\end{align}
At this point, we will make use of the elementary roots-of-unity identity, which is given as the following, for some $q \in \mathbb{N}$:
\begin{align}
    \sum_{r = 0}^{L + 1} \exp \left( i \frac{2\pi r q}{L + 2} \right) = \begin{cases}
        L + 2, &q \equiv 0~\left( \mathrm{mod}~L + 2 \right), \\
        0, &q \not\equiv 0~\left( \mathrm{mod}~L + 2 \right).
    \end{cases}
\end{align}
Taking the real part of the above:
\begin{align}
    \sum_{r = 1}^{L + 1} \cos \left( \frac{2\pi r q}{L + 2} \right) &= \sum_{r = 0}^{L + 1} \cos \left( \frac{2 \pi r q}{L + 2} \right) - 1 \\
    &= \mathfrak{Re} \left[ \sum_{r = 0}^{L + 1} \exp \left( i \frac{2\pi r q}{L + 2} \right) \right] - 1 = \begin{cases}
        L + 1, &q \equiv 0~\left( \mathrm{mod}~L + 2 \right), \\
        -1, &q \not\equiv 0~\left( \mathrm{mod}~L + 2 \right).
    \end{cases}
\end{align}
As we can observe from Eqs.~(\ref{eq:trig-diagonal-avg-prob-1}) and~(\ref{eq:trig-diagonal-avg-prob-2a}), in our case, $q \in \left\{ 1, j, j + 1, j + 2 \right\}$. Also, in our case, $j \in \left\{ K, K+1, \ldots, L \right\}$. Note that, for $K \leqslant j < L$, we have that for all $q \in \left\{ 1, j, j + 1, j + 2 \right\}$, $q \not\equiv 0 \left( \mathrm{mod}~L + 2 \right)$. Thus, picking up again from Eq.~(\ref{eq:trig-diagonal-avg-prob-2a}):
\begin{align}
    \sum_{r = 1}^{L + 1} A_{r, r} = &\frac{1}{\left( L + 2 \right)^{2}} \sum_{j = K}^{L} \biggl\{ \sum_{r = 1}^{L + 1} 1 - \sum_{r = 1}^{L + 1} \cos \left( \frac{2 \pi r}{L + 2} \right) - \sum_{r = 1}^{L + 1} \cos \left( \frac{2 \pi r \left( j + 1 \right)}{L + 2} \right) + \frac{1}{2} \sum_{r = 1}^{L + 1} \cos \left( \frac{2 \pi r j}{L + 2} \right) \\ 
    \label{eq:trig-diagonal-avg-prob-2}
    &\hspace{10cm} + \frac{1}{2} \sum_{r = 1}^{L + 1} \cos \left( \frac{2 \pi r \left( j + 2 \right)}{L + 2} \right) \biggr\} \\
    &= \frac{1}{\left( L + 2 \right)^{2}} \left( \sum_{j = K}^{L - 1} \left\{ \left( L + 1 \right) - \left( -1 \right) - \left( -1 \right) + \frac{1}{2} \left( -1 \right) + \frac{1}{2} \left( -1 \right) \right\} \right) + \frac{3}{2 \left( L + 2 \right)} \\
    &= \frac{1}{\left( L + 2 \right)^{2}} \sum_{j = K}^{L - 1} \left( L + 2 \right) + \frac{3}{2 \left( L + 2 \right)} = \frac{R}{L + 2} + \frac{3}{2 \left( L + 2 \right)} = \frac{2R + 3}{2 \left( L + 2 \right)}.
\end{align}
In the above, the term $\frac{3}{2 \left( L + 2 \right)}$ is the result of computing the final summand. The derivation proceeds as follows:
\begin{align}
    &\frac{4}{\left( L + 2 \right)^{2}} \sum_{r = 1}^{L + 1} \sin^{2} \left( \frac{\pi r}{L + 2} \right) \sin^{2} \left( \frac{\pi r \left( L + 1 \right)}{L + 2} \right) \\
    = &\frac{4}{\left( L + 2 \right)^{2}} \sum_{r = 1}^{L + 1} \sin^{4} \left( \frac{\pi r}{L + 2} \right) \\
    = &\frac{4}{\left( L + 2 \right)^{2}} \sum_{r = 1}^{L + 1} \left( \frac{3}{8} - \frac{1}{2} \cos \left( \frac{2 \pi r}{L + 2} \right) + \frac{1}{8} \cos \left( \frac{4 \pi r}{L + 2} \right) \right) \\
    = &\frac{4}{\left( L + 2 \right)^{2}} \cdot \left\{ \sum_{r = 1}^{L + 1} \frac{3}{8} - \frac{1}{2} \sum_{r = 1}^{L + 1} \cos \left( \frac{2 \pi r}{L + 2} \right) + \frac{1}{8} \sum_{r = 1}^{L + 1} \cos \left( \frac{4 \pi r}{L + 2} \right) \right\} \\
    = &\frac{4}{\left( L + 2 \right)^{2}} \left\{ \frac{3}{8} \left( L + 1 \right) - \frac{1}{2} \left( -1 \right) + \frac{1}{8} \left( -1 \right) \right\} \\
    = &\frac{4}{\left( L + 2 \right)^{2}} \left\{ \frac{3}{8} \left( L + 1 \right) + \frac{3}{8} \right\} = \frac{3}{2 \left( L + 2 \right)}, 
\end{align}
where, in the first line, we have used the fact that:
\begin{align}
    \sin^{2} \left( \frac{\pi r}{L + 2} \right) &= \sin^{2} \left( \frac{\pi r \left( \left( L + 2 \right) - \left( L + 1 \right) \right)}{L + 2} \right) \\
    &= \sin^{2} \left( \pi r - \frac{\pi r \left( L + 1 \right)}{L + 2} \right) \\
    &= \left\{ \underset{= 0}{\underbrace{\sin \left( \pi r \right)}} \cos \left( \frac{\pi r \left( L + 1 \right)}{L + 2} \right) - \sin \left( \frac{\pi r \left( L + 1 \right)}{L + 2} \right) \underset{=(-1)^{r}}{\underbrace{\cos \left( \pi r \right)}} \right\}^{2} \\
    &= \sin^{2} \left( \frac{\pi r \left( L + 1 \right)}{L + 2} \right).
\end{align}
Let us now turn our attention to the off-diagonal part of Eq.~(\ref{eq:avg-prob-diag-off-diag-split}). Recall that this is given as the following:
\begin{align}
    \overline{p}_{\text{off}} = \sum_{\substack{r, s = 1, \\ r \neq s}}^{L + 1} A_{r, s} \cdot \underset{\eqqcolon I_{r, s} \left( T \right)}{\underbrace{\frac{1}{T} \int_{0}^{T} e^{-i \left( E_{r} - E_{s} \right) t} dt}} = \sum_{\substack{r, s = 1, \\ r \neq s}}^{L + 1} A_{r, s} I_{r, s} \left( T \right).
\end{align}
Consider now that
\begin{align}
    I_{r, s} \left( T \right) = \frac{1}{T} \int_{0}^{T} e^{-i \left( E_{r} - E_{s} \right)t} dt = \frac{1}{T} \int_{0}^{T} e^{i \left( E_{s} - E_{r} \right)t} dt = I_{s, r}^{*} \left( T \right).
\end{align} 
The above will allow us to rewrite $\overline{p}_{\text{off}}$ in the following manner, grouping summands corresponding to the indices $(r, s)$ and $(s, r)$ together:
\begin{align}
    \overline{p}_{\text{off}} &= \sum_{\substack{r, s = 1, \\ r \neq s}}^{L + 1} A_{r, s} I_{r, s} \left( T \right) \\
    &= \sum_{1 \leqslant  r < s \leqslant L + 1} A_{r, s} I_{r, s} \left( T \right) + A_{s, r} I_{s, r} \left( T \right) \\
    &= \sum_{1 \leqslant r < s \leqslant L + 1} A_{r, s} \left( I_{r, s} \left( T \right) + I_{r, s}^{*} \left( T \right) \right) \\ 
    &= \sum_{1 \leqslant r < s \leqslant L + 1} A_{r, s} \cdot 2 \mathfrak{Re} \left[ I_{r, s} \left( T \right) \right] \\
    &= \sum_{1 \leqslant r < s \leqslant L + 1} A_{r, s} \cdot 2\mathfrak{Re} \left[ \frac{1}{T} \int_{0}^{T} \left( e^{-i\left( E_{r} - E_{s} \right)t} \right) dt \right] \\
    &= \sum_{1 \leqslant r < s \leqslant L + 1} 2 A_{r, s} \cdot \frac{1}{T} \int_{0}^{T} \mathfrak{Re} \left[ e^{-i \left( E_{r} - E_{s} \right) t} \right] dt \\
    &= \sum_{1 \leqslant r < s \leqslant L + 1} 2 A_{r, s} \cdot \frac{1}{T} \int_{0}^{T} \cos \left( \left( E_{r} - E_{s} \right) t \right) dt \\
    &= \sum_{1 \leqslant r < s \leqslant L + 1} 2 A_{r, s} \frac{\sin \left( \left( E_{r} - E_{s} \right) T \right)}{\left( E_{r} - E_{s} \right) T}.
\end{align}
We therefore have
\begin{align}
    \overline{p}_{T} = \frac{2R + 3}{2 \left( L + 2 \right)} + 2 \hspace{-4mm}\sum_{1 \leqslant r < s \leqslant L + 1}\hspace{-3mm}A_{r, s} \frac{\sin \left( \left( E_{r} - E_{s} \right) T \right)}{\left( E_{r} - E_{s} \right) T}.
\end{align}
We now proceed to compute an upper-bound on $\left\vert \overline{p}_{\text{off}} \right\vert$.
\begin{align}
    \left\vert \overline{p}_{\text{off}} \right\vert &= \left\vert 2 \hspace{-2mm}\sum_{1 \leqslant r < s \leqslant L + 1}\hspace{-2mm}A_{r, s} \frac{\sin \left( \left( E_{r} - E_{s} \right) T \right)}{\left( E_{r} - E_{s} \right) T} \right\vert \leqslant \frac{2}{T} \sum_{1 \leqslant r < s \leqslant L + 1} \frac{\left\vert A_{r, s} \right\vert}{\left\vert E_{r} - E_{s} \right\vert}.
\end{align}
Now, let us introduce 
\begin{align}
    \theta_{r} \coloneqq \frac{\pi r}{L + 2},~\theta_{s} \coloneqq \frac{\pi s}{L + 2}.
\end{align}
for notational simplicity. Note that we have the following:
\begin{align}
    \left\vert E_{r} - E_{s} \right\vert &= \left\vert 2 \left( \cos \theta_{r} - \cos \theta_{s} \right) \right\vert \\
    &= 4 \left\vert \sin \left( \frac{\theta_{r} + \theta_{s}}{2} \right) \sin \left( \frac{\theta_{s} - \theta_{r}}{2} \right) \right\vert,
\end{align}
as well as
\begin{align}
    \left\vert A_{r, s} \right\vert &= \left\vert \frac{4}{\left( L + 2 \right)^{2}} \sin \theta_{r} \sin \theta_{s} \sum_{j = K}^{L} \sin \left( \theta_{r} \left( j + 1 \right) \right) \sin \left( \theta_{s} \left( j + 1 \right) \right) \right\vert \\
    &\leqslant \frac{4}{\left( L + 2 \right)^{2}} \left\vert \sin \theta_{r} \sin \theta_{s} \right\vert \sum_{j = K}^{L} \underset{\leqslant 1}{\underbrace{\left\vert \sin \left( \theta_{r} \left( j + 1 \right) \right) \sin \left( \theta_{s} \left( j + 1 \right) \right) \right\vert}} \\
    &\leqslant \frac{4 \left( R + 1 \right)}{\left( L + 2 \right)^{2}} \left\vert \sin \theta_{r} \sin \theta_{s} \right\vert.
\end{align}
Therefore, we have:
\begin{align}
    \frac{\left\vert A_{r, s} \right\vert}{\left\vert E_{r} - E_{s} \right\vert} &\leqslant \frac{ \frac{4 \left( R + 1 \right)}{\left( L + 2 \right)^{2}} \left\vert \sin \theta_{r} \sin \theta_{s} \right\vert }{4 \left\vert \sin \left( \frac{\theta_{r} + \theta_{s}}{2} \right) \sin \left( \frac{\theta_{s} - \theta_{r}}{2} \right) \right\vert} \\
    &= \frac{\left( R + 1 \right)}{\left( L + 2 \right)^{2}} \cdot \frac{\left\vert \sin \left( \left( \frac{\theta_{r} + \theta_{s}}{2} \right) - \left( \frac{\theta_{s} - \theta_{r}}{2} \right) \right) \sin \left( \left( \frac{\theta_{r} + \theta_{s}}{2} \right) + \left( \frac{\theta_{s} - \theta_{r}}{2} \right) \right) \right\vert}{\left\vert \sin \left( \frac{\theta_{r} + \theta_{s}}{2} \right) \sin \left( \frac{\theta_{s} - \theta_{r}}{2} \right) \right\vert} \\
    &= \frac{\left( R + 1 \right)}{\left( L + 2 \right)^{2}} \cdot \frac{\left\vert \sin^{2} \left( \frac{\theta_{r} + \theta_{s}}{2} \right) - \sin^{2} \left( \frac{\theta_{s} - \theta_{r}}{2} \right) \right\vert}{\left\vert \sin \left( \frac{\theta_{r} + \theta_{s}}{2} \right) \sin \left( \frac{\theta_{s} - \theta_{r}}{2} \right) \right\vert} \\
    &\leqslant \frac{\left( R + 1 \right)}{\left( L + 2 \right)^{2}} \cdot \frac{\left\vert \sin^{2} \left( \frac{\theta_{r} + \theta_{s}}{2} \right) \right\vert}{\left\vert \sin \left( \frac{\theta_{r} + \theta_{s}}{2} \right) \sin \left( \frac{\theta_{s} - \theta_{r}}{2} \right) \right\vert} \\
    &= \frac{\left( R + 1 \right)}{\left( L + 2 \right)^{2}} \cdot \frac{\left\vert \sin \left( \frac{\theta_{r} + \theta_{s}}{2} \right) \right\vert}{ \left\vert \sin \left( \frac{\theta_{s} - \theta_{r}}{2} \right) \right\vert} \leqslant \frac{\left( R + 1 \right)}{\left( L + 2 \right)^{2}} \cdot \frac{1}{\left\vert \sin \left( \frac{\theta_{s} - \theta_{r}}{2} \right) \right\vert} = \frac{\left( R + 1 \right)}{\left( L + 2 \right)^{2}} \cdot \frac{1}{\left\vert \sin \left( \frac{\pi \left( s - r \right)}{2 \left( L + 2 \right)} \right) \right\vert}.
\end{align}
Since $1 \leqslant s - r \leqslant L$, we have that:
\begin{align}
    0 < \frac{\pi}{2 \left( L + 2 \right)} \leqslant \frac{\pi \left( s - r \right)}{2 \left( L + 2 \right)} \leqslant \frac{L}{\left( L + 2 \right)} \cdot \frac{\pi}{2} < \frac{\pi}{2},
\end{align}
allowing us to use Jordan's inequality, as follows:
\begin{align}
    \frac{\left\vert A_{r, s} \right\vert}{\left\vert E_{r} - E_{s} \right\vert} \leqslant \frac{\left( R + 1 \right)}{\left( L + 2 \right)^{2}} \cdot \frac{1}{\left\vert \sin \left( \frac{\pi \left( s - r \right)}{2 \left( L + 2 \right)} \right) \right\vert} \leqslant \frac{R + 1}{\left( L + 2 \right) \left( s - r \right)}.
\end{align}
Going back to $\overline{p}_{\text{off}}$, we have:
\begin{align}
    \left\vert \overline{p}_{\text{off}} \right\vert &\leqslant \frac{2}{T} \sum_{1 \leqslant r < s \leqslant L + 1} \frac{R + 1}{\left( L + 2 \right) \left( s - r \right)} = \frac{2 \left( R + 1 \right)}{T \left( L + 2 \right)} \sum_{1 \leqslant r < s \leqslant L + 1} \frac{1}{\underset{\eqqcolon k}{\underbrace{s - r}}} \\
    &= \frac{2 \left( R + 1 \right)}{T \left( L + 2 \right)} \sum_{k = 1}^{L} \frac{L + 1 - k}{k} = \frac{2 \left( R + 1 \right)}{T \left( L + 2 \right)} \left( \sum_{k = 1}^{L} \frac{L + 1}{k} - \sum_{k = 1}^{L} 1 \right) \\
    &= \frac{2 \left( R + 1 \right)}{T \left( L + 2 \right)} \left( \left( L + 1 \right) H_{L} - L \right) \leqslant \frac{2 \left( R + 1 \right)}{T \left( L + 2 \right)} \left( \left( L + 1 \right) \left( 1 + \log L \right) - L \right) = \frac{2 \left( R + 1 \right)}{T \left( L + 2 \right)} \left( 1 + \left( L + 1 \right) \log L \right),
\end{align}
where, in the fourth equality, $H_{L}$ denotes the $L^{\text{th}}$ harmonic number, and in the inequality that follows, we have used the elementary bound on harmonic numbers: $H_{L} \leqslant 1 + \log L$. Now, recall that $\overline{p}_{T} \geqslant \overline{p}_{\text{diag}} - \left\vert \overline{p}_{\text{off}} \right\vert$, from which we get the following:
\begin{align}
     \overline{p}_{T} &= \overline{p}_{\text{diag}} + \overline{p}_{\text{off}} \\
     &\geqslant \overline{p}_{\text{diag}} - \left\vert \overline{p}_{\text{off}} \right\vert \\ 
     &\geqslant \frac{2R + 3}{2 \left( L + 2 \right)} - \frac{2 \left( R + 1 \right)}{T \left( L + 2 \right)} \left( 1 + \left( L + 1 \right) \log L \right) \\
     &= 1 - \frac{2 \left( L + 2 \right) - 2R - 3}{2 \left( L + 2 \right)} - \frac{2 \left( R + 1 \right)}{T \left( L + 2 \right)} \left( 1 + \left( L + 1 \right) \log L \right) \\
     &= 1 - \frac{2K + 1}{2 \left( L + 2 \right)} - \frac{2 \left( R + 1 \right)}{T \left( L + 2 \right)} \left( 1 + \left( L + 1 \right) \log L \right).
\end{align}
We would like $\overline{p}_{T} \geqslant 1 - \delta$, for some failure probability $\delta > 0$. Let us now require:
\begin{align}
    \frac{2K + 1}{2 \left( L + 2 \right)} \leqslant \frac{\delta}{2} \implies L + 2 \geqslant \frac{2K + 1}{\delta} \implies R \geqslant \left\lceil\frac{2K + 1}{\delta} - K - 2\right\rceil.
\end{align}
Finally,
\begin{align}
    \frac{2 \left( R + 1 \right)}{T \left( L + 2 \right)} \left( 1 + \left( L + 1 \right) \log L \right) \leqslant \frac{\delta}{2} \implies T \geqslant \frac{4 \left( R + 1 \right)}{\delta \left( L + 2 \right)} \left( 1 + \left( L + 1 \right) \log L \right).
\end{align}
That is, in order to have $\overline{p}_{T} \geqslant 1 - \delta$, it is sufficient for the number of padded identity gates and length of time-averaging to scale as:
\begin{align}
    R &\in \mathcal{O} \left( \frac{K}{\delta} \right), \\
    T &\in \mathcal{O} \left( \frac{K}{\delta^{2}} \log \left( \frac{K}{\delta} \right) \right).
\end{align}
This completes the proof.
\end{proof}
Throughout this subsection, we have written the ``done" projector in the following form:
\begin{align}
    \Pi_{\text{done}} = \sum_{j = K}^{L} \lvert j \rangle \langle j \rvert.
\end{align}
However, note that we use a unary encoding to represent the clock states; in particular, we write $\lvert j \rangle_{\text{clock}} = \lvert 1^{j} 0^{L -j} \rangle$, where $j = 0, 1, \ldots, L$. This leads to the $K^{\text{th}}$ qubit in the clock region to be equal to $1$ precisely when $j \geqslant K$. Hence, with this convention in mind, the done projector may be written as the following:
\begin{align}
    \Pi_{\text{done}} = \sum_{j = K}^{L} \lvert j \rangle \langle j \rvert \equiv \lvert 1 \rangle \langle 1 \rvert_{K},
\end{align}
and $I - \Pi_{\text{done}} \equiv \lvert 0 \rangle \langle 0 \rvert_{K}$. Intuitively, we may thus think of the completed-clock information as being accessible by the following single-qubit Pauli $Z$ operator acting on the $K^{\text{th}}$ clock qubit:
\begin{align}
    -Z_{K}^{\text{clock}} = - \lvert 0 \rangle \langle 0 \rvert_{K} + \lvert 1 \rangle \langle 1 \rvert_{K},
\end{align}
such that the $+1$-eigenstates of the above operator are given by clock states in the region corresponding to $j \geqslant K$, and the $-1$-eigenstates by those lying in the region where $j < K$. Our choice of observable for which we will show classical hardness will thus be given by the following two-body operator:
\begin{align}
    \label{eq:hard-observable-q}
    Q \coloneqq Z_{1}^{\text{work}} \otimes \left( - Z_{K}^{\text{clock}} \right) &= Z_{1}^{\text{work}} \otimes \left( -\lvert 0 \rangle \langle 0 \rvert_{K} \right) + Z_{1}^{\text{work}} \otimes \lvert 1 \rangle \langle 1 \rvert_{K} \\
    &= Z_{1}^{\text{work}} \otimes \Pi_{\text{done}} - Z_{1}^{\text{done}} \otimes \left( I - \Pi_{\text{done}} \right),
\end{align}
where the operator $Z_{1}^{\text{work}} = Z \otimes I \otimes \ldots \otimes I$ is acting on the first qubit of the fully simulated circuit (i.e., right before padding with identities).  When convenient, we will drop the superscripts denoting which subspace the tensored operators belong to. 

\vspace{3mm}
\noindent\textbf{Remark.} Note that the observable $Q$ above is indeed in the set of operators $\left(Q_a\right)_{a \in \left[ M \right]}$ used for Hamiltonian learning (see Subsection \ref{subsubsec:background-haah-kothari-tang}). Indeed, for this to be the case, we need $Q$ to be of the form $Q_a = i[P_a,E_a] = 2iP_aE_a$ for $P_a$ a single-qubit Pauli operator that anti-commutes with a Pauli $E_a$ appearing in the decomposition of our Hamiltonian. For that, we make use of a Pauli from the transition term 
\begin{align}
    \ket{1,1,0}\bra{1,0,0}_{K', K'+1, K'+2}^{\text{clock}} \otimes {U'}^{\text{ work}}_{K'+1},
\end{align}
which corresponds to the first transition in the idling phase. ${U'}^{\text{work}}_{K'+1}$ is assumed to act on the first qubit of the fully simulated circuit and to be freely defined in our Hamiltonian family (but effectively set to identity). This allows us to use $E_a = IXI_{mK, mK+1, mK+2}^{\text{clock}} \otimes Z_1^{\text{work}}$ and simply choose the associated single-qubit $P_a = IYI_{mD, mK+1, mK+2}^{\text{clock}} \otimes I^{\text{work}}$ such that $O = Q_a = 2iP_aE_a = 2\ IZI_{mK, mK+1, mK+2}^{\text{clock}} \otimes Z_1^{\text{work}}$. The factor $-2$ is irrelevant.

\subsection{Classical hardness of learning}
\label{subsec:classical-hardness-learning-appendix}

\begin{theorem}[Classical hardness of learning, formal]
\label{thm: classical hardness app}
    For any $(\mathsf{Promise})\BQP$-complete language, there exists a Hamiltonian family $\mathsf{H}_{\BQP}$ specifying a concept class $\mathcal{C}_{U_{\text{enc}}, T}^{\mathsf{H}_{\BQP}}$ as per Def.~\ref{def:learning_problem_concept_class} and a family of input distributions following the specification of Def.~\ref{def: input distribution} such that no randomized polynomial-time classical algorithm $\mathcal{A}_c$ satisfies the PAC-learning condition of Def.~\ref{def: learning condition} for this concept class, even when restricted to a constant learning error $\varepsilon < 1/48$, unless $\BQP \subseteq \mathsf{P/poly}$.
\end{theorem}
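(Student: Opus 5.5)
Suppose, for contradiction, that a randomized polynomial-time classical learner $\mathcal{A}_c$ meets Def.~\ref{def: learning condition} for $\mathcal{C}_{U_{\text{enc}}, T}^{\mathsf{H}_{\BQP}}$ with some $\varepsilon<1/48$. We then build a $\mathsf{P/poly}$ circuit family deciding $\mathcal{L}$.

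\textbf{Instance.} Fix a $(\mathsf{Promise})\BQP$-complete $\mathcal{L}$ on inputs $z\in\{0,1\}^m$, and let $U_{\BQP}$ be its depth-$K$ circuit. Amplify first, so that it errs with probability at most $1/24$. Apply Lemma~\ref{lemma: quantum learnability const error app} to obtain the low-intersection Hamiltonian $H_{\BQP}(\lambda^\star)$; this requires the Oliveira--Terhal rewiring, with $R\in\mathcal{O}(K/\delta')$ padded identities. Set $T=T_{\BQP}$ as in Lemma~\ref{lem:avg-prob-mass-completed-subspace-clock-walk}, with $\delta'$ a small constant such as $1/48$, so that $\overline{p}_T\geqslant 1-\delta'$.

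\textbf{The averaged observable.} For $x=(1,z,\bm 0)$ the probe is $\lvert\eta_0(z)\rangle$, with work, ancilla and clock registers initialised to zero. Lemma~\ref{lem:time-evolved-history-state} gives
\begin{align}
\langle Q\rangle_t=\sum_j|\alpha_j(t)|^2\,\langle\psi_j(z)|Z_1|\psi_j(z)\rangle\,(\pm1),
\end{align}
with sign $+$ for $j\geqslant K$. For $j\geqslant K$ we have $\psi_j=\psi_K$, whose $Z_1$ expectation equals $1-2\Pr[\text{output}=1]$. Hence
\begin{align}
\overline q(z)=\frac1T\int_0^T\langle Q\rangle_t\,dt
\end{align}
lies within $2\delta'$ of $\langle Z_1\rangle_{\psi_K}$. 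This is at least $\tfrac{2}{3}$-ish in magnitude, with sign determined by membership; with the amplified circuit it is $\geqslant 1-1/12-2\delta'$ in magnitude. By the Remark, $Q$ is one of the $Q_a$, so $\overline q$ is read off from a coordinate of $c_{\lambda^\star}$.

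\textbf{From average to per-distribution weak learning.} Take an arbitrary distribution $\mu$ on $\{0,1\}^m$ and form $\mathcal{D}_\mu$ as in Def.~\ref{def: input distribution}: with probability $1/2$ set $x_1=1$ and $z\sim\mu$, otherwise draw a uniform stabilizer string. Training examples for $c_{\lambda^\star}$ are generated in $\mathsf{P/poly}$ using $\mathcal{L}$-labelled samples from $\mu$. The key point is that labels for $x_1=0$ probes at arbitrary times are not classically computable in general. I would instead treat the whole training set, or the $\lambda^\star$-dependent oracle, as nonuniform advice. Equivalently, the reduction is stated as: learner access to $\mathcal{D}_\mu$-labelled data yields a hypothesis good on $\mu$. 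This lets Schapire's boosting be run with the example oracle for $\mathcal{L}$ under $\mu$.

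Let $h$ be the output hypothesis. Conditioning on $x_1=1$ at most doubles the error, so
\begin{align}
\mathbb{E}_{z\sim\mu,t}\,|h_{Q}-\langle Q\rangle_t|\leqslant 2\varepsilon<1/24.
\end{align}
Define $g(z)=\operatorname{sign}\big(\tfrac1T\int h_Q\,dt\big)$, evaluated by sampling $t$ numerically; Monte Carlo at polynomial cost suffices. By Markov's inequality, $g$ disagrees with $\mathcal{L}$ only where the averaged error exceeds roughly $1-1/12-2\delta'$. This happens with $\mu$-probability at most $(1/24)/(3/4)<1/2$, so $g$ is a weak learner under every $\mu$.

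\textbf{Boosting.} Schapire's theorem converts the distribution-free weak learner into a strong one. Running it with $\mu$ uniform over $\{0,1\}^m$ restricted to promise inputs, at error below $2^{-m}$, yields a hypothesis correct on all inputs. Fixing the random coins and the training data as advice then gives a $\mathsf{P/poly}$ circuit, so $\BQP\subseteq\mathsf{P/poly}$.

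\textbf{Main obstacle.} The delicate step is making the reduction legitimate. The learner needs labels for $x_1=0$ stabilizer probes and for all $Q_a$ at long times, which are not classically simulable, while boosting requires fresh samples from reweighted distributions $\mu$. I would handle this by letting the $x_1=0$ part of every $\mathcal{D}_\mu$ be fixed and independent of $\mu$. A polynomial-size pool of such samples, together with the boosted hypothesis's internal randomness, can then be hardwired as advice via an averaging (Adleman-type) argument. Only the $x_1=1$ labels must be regenerated, and those are supplied in the $\mathsf{P/poly}$ boosting framework exactly as in the prior works cited.
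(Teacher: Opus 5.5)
Your argument is correct and is essentially the paper's proof: the same observable $Q$, the decomposition $q(z,t)=m(z)p_{\mathrm{done}}(t)-r(z,t)$ averaged over $t$ via Lemma~\ref{lem:avg-prob-mass-completed-subspace-clock-walk}, the factor-$2$ loss from conditioning on $x_1=1$, Markov's inequality to obtain a weak classifier under every $\mu$, and Schapire's learnable-implies-$\mathsf{P/poly}$ (your amplification step is harmless but unnecessary, since $\gamma=1/3$ and $\delta'=1/12$ already give the $1/48$ threshold). Your ``main obstacle'' is not a real one, because Schapire's corollary is a pure existence statement: the booster may be inefficient and may use an oracle for fully $c_{\lambda^\star}$-labelled $\mathcal{D}_{\mu_i}$ data, and all training sets and coins are then hardwired as advice---but, contrary to your remarks, the $x_1=1$ labels cannot be produced from $\mathcal{L}$-labelled samples of $\mu$, since they are vectors of intermediate-time expectation values and are just as classically hard as the $x_1=0$ ones.
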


\begin{proof}
We consider $Q$ as defined in Eq.~(\ref{eq:hard-observable-q}), and begin by computing the expectation value of this observable evaluated on the time-evolved history state $\lvert \Psi_{z} \left( t \right) \rangle$.
\begin{align}
    q \left( z, t \right) &\coloneqq \operatorname{Tr} \left[ Q \lvert \Psi_{z} \left( t \right) \rangle \langle \Psi_{z} \left( t \right) \rvert \right] \\
    &= \operatorname{Tr} \left[ \left( Z_{1} \otimes \left( -Z_{K} \right) \right) \lvert \Psi_{z} \left( t \right) \rangle \langle \Psi_{z} \left( t \right) \rvert \right] \\
    &= \operatorname{Tr} \left[ \left( Z_{1} \otimes \left( -Z_{K} \right) \right) \sum_{j = 0}^{L} \sum_{k = 0}^{L} \alpha_{j} \left( t \right) \alpha_{k}^{*} \left( t \right) \lvert \eta_{j} \left( z \right) \rangle \langle \eta_{k} \left( z \right) \rvert  \right] \\
    &= \sum_{j = 0}^{L} \sum_{k = 0}^{L} \alpha_{j} \left( t \right) \alpha_{k}^{*} \left( t \right) \operatorname{Tr} \left[ \left( Z_{1} \otimes \left( -Z_{K} \right) \right) \left( \lvert \psi_{j} \left( z \right) \rangle \langle \psi_{k} \left( z \right) \rvert \otimes \lvert j \rangle \langle k \rvert \right) \right] \\
    &= \sum_{j = 0}^{L} \sum_{k = 0}^{L} \alpha_{j} \left( t \right) \alpha_{k}^{*} \left( t \right) \operatorname{Tr} \left[ Z_{1} \lvert \psi_{j} \left( z \right) \rangle \langle \psi_{k} \left( z \right) \rvert \right] \operatorname{Tr} \left[ \left( -Z_{K} \right) \lvert j \rangle \langle k \rvert \right] \\
    &= \sum_{j = 0}^{L} \sum_{k = 0}^{L} \alpha_{j} \left( t \right) \alpha_{k}^{*} \left( t \right) \langle \psi_{k} \left( z \right) \rvert Z_{1} \lvert \psi_{j} \left( z \right) \rangle \underset{s_{j} \delta_{j, k}}{\underbrace{\langle k \rvert \left( -Z_{K} \right) \lvert j \rangle}},
\end{align}
where we set
\begin{align}
    s_{j} \coloneqq \begin{cases}
        +1,~\text{if}~j \geqslant K, \\
        -1,~\text{if}~j < K.
    \end{cases}
\end{align}
We resume our computation of $q \left( z, t \right)$:
\begin{align}
    q \left( z, t \right) &= \sum_{j = 0}^{L} s_{j} \left\vert \alpha_{j} \left( t \right) \right\vert^{2} \langle \psi_{j} \left( z \right) \rvert Z_{1} \lvert \psi_{j} \left( z \right) \rangle \\
    &= \sum_{j = K}^{L} \left\vert \alpha_{j} \left( t \right) \right\vert^{2} \langle \psi_{j} \left( z \right) \rvert Z_{1} \lvert \psi_{j} \left( z \right) \rangle - \sum_{j = 0}^{K - 1} \lvert \alpha_{j} \left( t \right) \rvert^{2} \langle \psi_{j} \left( z \right) \rvert Z_{1} \lvert \psi_{j} \left( z \right) \rangle \\
    &= \underset{\eqqcolon m \left( z \right)}{\underbrace{\langle \psi_{K} \left( z \right) \rvert Z_{1} \lvert \psi_{K} \left( z \right) \rangle}} \underset{\eqqcolon p_{\text{done}} \left( t \right)}{\underbrace{\sum_{j = K}^{L} \lvert \alpha_{j} \left( t \right) \rvert^{2}}} - \underset{\eqqcolon r \left( z, t \right)}{\underbrace{\sum_{j = 0}^{K - 1} \lvert \alpha_{j} \left( t \right) \rvert^{2} \langle \psi_{j} \left( z \right) \rvert Z_{1} \lvert \psi_{j} \left( z \right) \rangle}} \\
    &= m \left( z \right) p_{\text{done}} \left( t \right) - r \left( z, t \right),
\end{align}
where we denote by $r \left( z, t \right)$ a ``garbage" term, corresponding to stages of the computation where $j < K$ and therefore not necessarily encoding the answer of the $\mathsf{BQP}$ computation. We note that:
\begin{align}
    \lvert r \left( z, t \right) \rvert &= \left\vert \sum_{j = 0}^{K - 1} \lvert \alpha_{j} \left( t \right) \rvert^{2} \langle \psi_{j} \left( z \right) \rvert Z_{1} \lvert \psi_{j} \left( z \right) \rangle \right\vert \\
    &\leqslant \sum_{j = 0}^{K - 1} \lvert \alpha_{j} \left( t \right) \rvert^{2} \left\vert \langle \psi_{j} \left( z \right) \rvert Z_{1} \lvert \psi_{j} \left( z \right) \rangle \right\vert \\ 
    &\leqslant \sum_{j = 0}^{K - 1} \left\vert \alpha_{j} \left( t \right) \right\vert^{2} = 1 - p_{\text{done}} \left( t \right).
\end{align}
Now, consider:
\begin{align}
    \overline{q} \left( z \right) &\coloneqq \frac{1}{T} \int_{0}^{T} q \left( z, t \right) dt = \frac{1}{T} \int_{0}^{T} \left( m \left( z \right) p_{\text{done}} \left( t \right) - r \left( z, t \right) \right) dt \\
    &= m \left( z \right) \cdot \underset{\overline{p}_{T}}{\underbrace{\frac{1}{T} \int_{0}^{T}  p_{\text{done}} \left( t \right) dt}} - \underset{\eqqcolon \overline{r} \left( z \right)}{\underbrace{\frac{1}{T} \int_{0}^{T} r \left( z, t \right) dt}} = m \left( z \right) \overline{p}_{T} - \overline{r} \left( z \right). 
\end{align}
Let us note that:
\begin{align}
    \lvert \overline{r} \left( z \right)\rvert = \left\vert \frac{1}{T} \int_{0}^{T} r \left( z, t \right) dt \right\vert \leqslant \frac{1}{T} \int_{0}^{T} \lvert r \left( z, t \right) \rvert dt \leqslant \frac{1}{T} \int_{0}^{T} \left( 1 - p_{\text{done}} \left( t \right) \right) dt = 1 - \overline{p}_{T}.
\end{align}
Thus, for some $\delta' > 0$, if by Lemma~\ref{lem:avg-prob-mass-completed-subspace-clock-walk} we have that $\overline{p}_{T} \geqslant 1 - \delta'$, we also have that $\lvert \overline{r} \left( z \right) \rvert \leqslant \delta'$. Now, consider the following:
\begin{align}
    \left\vert \overline{q} \left( z \right) - m \left( z \right) \right\vert &= \left\vert m \left( z \right) \overline{p}_{T} - \overline{r} \left( z \right) - m \left( z \right) \right\vert \\
    &= \left\vert m \left( z \right) \left( \overline{p}_{T} - 1 \right) - \overline{r} \left( z \right) \right\vert \\
    &\leqslant \left\vert m \left( z \right) \right\vert \left\vert 1 - \overline{p}_{T} \right\vert + \left\vert \overline{r} \left( z \right) \right\vert \leqslant \delta' + \delta' = 2 \delta'.
\end{align}
Now, for $\gamma > 0$, assume that the underlying circuit $U_{\mathsf{BQP}}$ decides a language $\mathcal{L}$ such that, for $z \in \left\{ 0, 1 \right\}^{m}$, we have
\begin{align}
    z \in \mathcal{L} &\implies m \left( z \right) \leqslant -\gamma \implies \overline{q} \left( z \right) \leqslant -\left( \gamma - 2 \delta' \right), \\
    \text{and}\quad z \not\in \mathcal{L} &\implies m \left( z \right) \geqslant \gamma \implies \overline{q} \left( z \right) \geqslant \gamma - 2 \delta'. 
\end{align}
We now show that an efficient learner for the concept class $\mathcal{C}_{U_{\text{enc}}, T}^{\mathsf{H}_{\mathsf{BQP}}}$ would imply an efficient learner for $q$ under an \emph{arbitrary} distribution on $z$. Fix any distribution \(\mu\) over \(\{0,1\}^{m}\). By the definition of the family of input distributions \(\{\mathcal{D}_i\}_i\) (Def.~\ref{def: input distribution}), we can consider a distribution \(\mathcal{D}_\mu\) over $x=(x_1, z, x_{[m+2,p(n)]})\in\{0,1\}^{p(n)}$ and $t\in[0,T]$ such that:
\begin{itemize}
    \item $t$ is uniformly distributed over $[0,T]$;
    \item \(x_1\) is uniform in \(\{0,1\}\);
    \item conditioned on \(x_1=1\), the bits \( z = x_2 \cdots x_{m+1} \) are distributed according to \(\mu\), and the remaining bits $x_{m+2}, \ldots, x_{p(n)}$ are deterministically set to $0$;
    \item conditioned on \(x_1=0\), the remaining bits are distributed uniformly in $\{0,1\}^{p(n)-1}$.
\end{itemize}
Let $c_{\lambda_{\mathsf{BQP}}} \left( x, t \right)$ be the target concept and, in particular,
\begin{align}
    \left( c_{\lambda_{\mathsf{BQP}}} (\left( \bm 1, z, \bm 0 \right), t) \right)_{i^{\star}} = q \left( z, t \right)
\end{align}
be the distinguished component whose output equals the expectation value of the time-evolved history state on the observable $Q$. If our PAC-learning condition is met, there exists a learner that succeeds for the target concept \(c_{\lambda_{\mathsf{BQP}}}\)  under any such distribution \(\mathcal{D}_\mu\). Assume, towards contradiction, that $\mathcal{C}_{U_{\text{enc}}, T}^{\mathsf{H}_{\mathsf{BQP}}}$ is efficiently learnable for an accuracy parameter \(\varepsilon\) to be defined later, and an arbitrary confidence parameter \(\delta>0\). Then, there is a randomized polynomial-time learner which, given training data generated from \(c_{\lambda_{\mathsf{BQP}}}\) under \(\mathcal{D}_\mu\), outputs, with probability at least \(1-\delta\), a hypothesis \(h(\bm x,t)\) satisfying
\begin{equation}
\underset{(\bm x,t)\sim \mathcal{D}_\mu}{\mathbb{E}}
\left[
\|c_{\lambda_{\mathsf{BQP}}}(x, t) - h(x, t)\|_\infty
\right]
\leqslant \varepsilon.
\end{equation}
Now, consider that:
\begin{align}
    \underset{\substack{\left( x, t \right) \sim \mathcal{D}_{\mu}}}{\mathbb{E}} \left[ \left\vert \left( h \left( x, t \right) \right)_{i^{\star}} - \left( c_{\lambda_{\mathsf{BQP}}} \left( x, t \right) \right)_{i^{\star}} \right\vert \right] \leqslant \underset{(\bm x,t)\sim \mathcal{D}_\mu}{\mathbb{E}} \left[ \|c_{\lambda_{\mathsf{BQP}}}(x, t) - h(x, t)\|_\infty \right] \leqslant \varepsilon.
\end{align}
Restricting the above expectation to the event $x_{1} = 1$, we have the following:
\begin{align}
    &\underset{\substack{z \sim \mu, \\ t \sim \mathsf{Unif} \left( \left[ 0, T \right] \right)}}{\mathbb{E}} \left[ \left\vert \left( h \left( \left( \boldsymbol{1}, z, \boldsymbol{0} \right), t \right) \right)_{i^{\star}} - q \left( z, t \right) \right\vert \right] \cdot \Pr \left[ x_{1} = 1 \right] \leqslant \varepsilon \\
    \implies &\underset{\substack{z \sim \mu, \\ t \sim \mathsf{Unif} \left( \left[ 0, T \right] \right)}}{\mathbb{E}} \left[ \left\vert \left( h \left( \left( \boldsymbol{1}, z, \boldsymbol{0} \right), t \right) \right)_{i^{\star}} - q \left( z, t \right) \right\vert \right] \leqslant 2\varepsilon. 
\end{align}
Let us define the time-averaged hypothesis as
\begin{align}
    \overline{h} \left( z \right) \coloneqq \underset{t \sim \mathsf{Unif} \left( \left[ 0, T \right] \right)}{\mathbb{E}} \left[ \left( h \left( \left( \boldsymbol{1}, z, \boldsymbol{0} \right), t \right) \right)_{i^{\star}} \right] = \frac{1}{T} \int_{0}^{T} \left( h \left( \left( \boldsymbol{1}, z, \boldsymbol{0} \right), t \right) \right)_{i^{\star}} dt.
\end{align}
Recall also that we had defined
\begin{align}
    \overline{q} \left( z \right) &= \frac{1}{T} \int_{0}^{T} q \left( z, t \right) dt = \underset{t \sim \mathsf{Unif} \left( \left[ 0, T \right] \right)}{\mathbb{E}} \left[ q \left( z, t \right) \right].
\end{align}
Now, consider the following:
\begin{align}
    \underset{z \sim \mu}{\mathbb{E}} \left[ \left\vert \overline{h} \left( z \right) - \overline{q} \left( z \right) \right\vert \right] &\leqslant \underset{z \sim \mu}{\mathbb{E}} \left[ \left\vert \frac{1}{T} \int_{0}^{T} \left[ \left( h \left( \left( \boldsymbol{1}, z, \boldsymbol{0} \right), t \right) \right)_{i^{\star}} - q \left( z, t \right) \right] dt \right\vert \right] \\
    &\leqslant \underset{z \sim \mu}{\mathbb{E}} \left[ \frac{1}{T} \int_{0}^{T} \left\vert \left( h \left( \left( \boldsymbol{1}, z, \boldsymbol{0} \right), t \right) \right)_{i^{\star}} - q \left( z, t \right) \right\vert dt \right] \\
    &= \underset{\substack{z \sim \mu, \\ t \sim \mathsf{Unif} \left( \left[ 0, T \right] \right)}}{\mathbb{E}} \left[ \left\vert \left( h \left( \left( \boldsymbol{1}, z, \boldsymbol{0} \right), t \right) \right)_{i^{\star}} - q \left( z, t \right) \right\vert \right] \leqslant 2 \varepsilon.
\end{align}
Let us choose $\delta' = \gamma/4$, so that we have $\gamma - 2\delta' = \gamma/2$. Further, let $\varepsilon < \gamma/16$. This leads to:
\begin{align}
    \underset{z \sim \mu}{\mathbb{E}} \left[ \left\vert \overline{h} \left( z \right) - \overline{q} \left( z \right) \right\vert \right] \leqslant 2\varepsilon \leqslant \frac{\gamma}{8}.
\end{align}
Now, for the particular choice of $\delta' = \gamma/4$, we recall that the time-averaged distinguished component $\overline{q} \left( z \right)$ has a margin of at least $\gamma/2$ around zero. That is,
\begin{align}
    z \in \mathcal{L} &\implies \overline{q} \left( z \right) \leqslant - \gamma/2, \\
    \text{and}\quad z \notin \mathcal{L} &\implies \overline{q} \left( z \right) \geqslant \gamma/2.
\end{align}
Therefore, if the sign of $\overline{h} \left( z \right)$ differs from the sign of $\overline{q} \left( z \right)$, then we have
\begin{align}
    \left\vert \overline{h} \left( z \right) - \overline{q} \left( z \right) \right\vert \leqslant \frac{\gamma}{2}.
\end{align}
From the above, we may conclude the following:
\begin{align}
    \Pr_{z \sim \mu} \left[ \operatorname{sign} \left(\overline{h} \left( z \right) \right) \neq \operatorname{sign} \left( \overline{q} \left( z \right)  \right) \right] &\leqslant \Pr_{z \sim \mu} \left[ \left\vert \overline{h} \left( z \right) - \overline{q} \left( z \right) \right\vert \geqslant \frac{\gamma}{2} \right] \\
    &\leqslant \frac{\underset{z \sim \mu}{\mathbb{E}} \left[ \left\vert \overline{h} \left( z \right) - \overline{q} \left( z \right) \right\vert \right]}{\gamma/2} \\
    &\leqslant \frac{\gamma/8}{\gamma/2} = \frac{1}{4} < \frac{1}{2},\label{eq:boosting-requirement}
\end{align}
where the second inequality follows from an application of Markov's inequality. Since the sign of $\overline{q} \left( z \right)$ encodes membership of $z$ in $\mathcal{L}$, thresholding $\overline{h} \left( z \right)$ at zero gives a Boolean classifier with error at most $1/4$ under the arbitrary distribution $\mu$. The assumed learner thus yields a distribution-free weak learner for $\mathcal{L}$ with constant error $1/4$. Taking the usual $\mathsf{BQP}$ gap of $\gamma = 1/3$ and choosing $\delta' = \gamma/4 = 1/12$, we finally get $\varepsilon < 1/48$, yielding a classification error of at most $1/4$. At this point, we invoke the standard consequence of Schapire's theorem: if a Boolean concept class is weakly learnable under arbitrary distributions, then it is strongly learnable~\cite{schapire1990strength}. On the other hand, if it is strongly learnable in randomized polynomial time, then for each input length there exists a polynomial-size circuit computing the target function exactly, i.e., the concept class lies in $\mathsf{P}\slash\mathsf{poly}$. Applying this to the language $\mathcal{L}$, we conclude that
\begin{equation}
\mathcal{L}\in \mathsf{P}\slash\mathsf{poly}.
\end{equation}
Since $\mathcal{L}$ has been chosen to be ($\mathsf{Promise}$) $\mathsf{BQP}$-complete, this implies
\begin{equation}
\mathsf{BQP}\subseteq \mathsf{P}\slash\mathsf{poly},
\end{equation}
contradicting the assumption. Therefore, no randomized polynomial-time classical algorithm can satisfy the PAC-learning condition for the concept
class $\mathcal{C}_{U_{\text{enc}}, T}^{\mathsf{H}_{\mathsf{BQP}}}$, unless $\mathsf{BQP}\subseteq \mathsf{P}\slash \mathsf{poly}$.
\end{proof}
\noindent \textbf{Remark.} To put this result in perspective, we note that the trivial learning performance in this task is obtained for a hypothesis $h(x,t) = (0)_{i\in[M']},\ \forall x,t$, which achieves $\varepsilon = 1$. The proof above shows hardness for $\varepsilon < 1/48$. Pushing the constants to their limits, namely, by taking the $\BQP$ acceptance threshold to be $\gamma = 1 - 1/\mathsf{poly}(n)$, the failure probability of Hamiltonian simulation to be $\delta'=1 -1/\mathsf{poly}(n)$, and by considering a non-uniform distribution over $x_1$ (we only need to have probability weight $1/\mathsf{poly}(n)$ over $x_1=0$ and over short times $t \in [0,t^*]$ in order to get quantum learnability) as to get $\Pr[x_1=1] = 1 - 1/\mathsf{poly}(n)$, we can easily bring the hardness threshold to $\varepsilon = 1/2 - 1/\mathsf{poly}(n)$. The only unavoidable factor of $1/2$ originates from the advantage requirement for boosting in Eq.~(\ref{eq:boosting-requirement}).

\section{Quantum learnability}
\label{app:quantum-learnability}

In this appendix, we supply technical details related to the proof of quantum learnability of the concept class defined in Def.~\ref{def:learning_problem_concept_class}. First, in Subsection \ref{subsec:additional-definitions-app}, we provide a few technical definitions relevant to the analysis of quantum learnability. Then, following the main ideas explicated in Section \ref{sec:quantum-learnability}, we show a tail bound on $\mathcal{O} \left( t^{2} \right)$ terms of $\mathcal{F}_{a, t}$, which may be found in Lemma \ref{lem:tail-bound-higher-order} of Subsection \ref{subsec:tail-bound-f-a-t}. In Lemma \ref{lem:error-prop-ham-coeff-exp-values} of Subsection \ref{subsec:hl-precision}, we then determine how errors on Hamiltonian coefficients propagate to expectation values of time-evolved states, so as to obtain the necessary training precision for $\varepsilon$-accurate inference accuracy. In Subsection \ref{subsec:sample-time-complexity-analysis}, we provide bounds on the ideal estimator $\tilde{\mathcal{F}}_{a, t^{*}}$ (Lemma \ref{lem:bounds-time-averaged-integral}), followed by a thorough analysis of the sample complexity of PAC-learning our main concept class (Lemmas \ref{lem:short-time-sample-complexity-analysis}, \ref{lem:tot-sample-complexity-analysis}, and \ref{lem:sample-complexity-inference-classical-shadows}). We conclude by formally restating our quantum learnability result in Theorem \ref{thm:quantum-learnability-formal}, where we also provide the time complexity of our quantum learning algorithm.

\subsection{Additional definitions}
\label{subsec:additional-definitions-app}
We begin this subsection with the definition of a commutator and the Hadamard formula.
\begin{definition}[Commutator]
    \label{def:commutator}
    Given operators $A, B \in \mathbb{C}^{D \times D}$, the commutator of $A$ and $B$ is defined as $\left[ A, B \right] \coloneqq AB - BA$. The $k$-fold nested commutator of $A$ and $B$ is given by $\left[ A, B \right]_{k}$, which is defined recursively as 
    \begin{align}
        \left[ A, B \right]_{k} \coloneqq \begin{cases}
            B, &k = 0, \\
            A\left[ A, B \right]_{k - 1} - \left[ A, B \right]_{k - 1}A, &k\geqslant 1.
        \end{cases}
    \end{align}
    That is, $\left[ A, B \right]_{k} = \underset{\text{$k$ times}}{\underbrace{[ A, [ A, \ldots, [ A}}, B ] \ldots ] ]$.
\end{definition}
\begin{lemma}[Hadamard formula]
    \label{lem:hadamard-formula}
    For $A, B \in \mathbb{C}^{D \times D}$, it holds that
    \begin{align}
        \label{eq:hadamard-formula}
        e^{-iA} B e^{iA} = \sum_{k = 0}^{\infty} \frac{\left( -i \right)^{k}}{k!} \left[ A, B \right]_{k}.
    \end{align}
\end{lemma}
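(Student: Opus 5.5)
The plan is to prove the Hadamard formula by differentiating an operator-valued function of an auxiliary real parameter and comparing its Taylor series with the claimed right-hand side. Define, for $s \in \mathbb{R}$,
\begin{align}
    F(s) \coloneqq e^{-isA} B e^{isA}.
\end{align}
We have $F(0) = B$, and the statement is the identity $F(1) = \sum_{k} \frac{(-i)^{k}}{k!}[A,B]_{k}$. Since $A, B$ are finite-dimensional matrices, $s \mapsto e^{\pm isA}$ is given by an everywhere-convergent power series. Hence $F$ is real-analytic (indeed entire) in $s$, and it equals its Taylor series about $s=0$ for every $s$, in particular $s=1$.

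The key step is computing the derivatives of $F$. Differentiating the product and using that $A$ commutes with $e^{\pm isA}$ gives
\begin{align}
    F'(s) = -iA e^{-isA} B e^{isA} + e^{-isA} B e^{isA}(iA) = -i\,[A, F(s)].
\end{align}
The map $X \mapsto [A,X]$ is linear and $s$-independent, so induction on $k$ yields
\begin{align}
    F^{(k)}(s) = (-i)^{k}\,[A, F(s)]_{k},
\end{align}
with the recursive nested commutator of Def.~\ref{def:commutator}. The inductive step is just differentiating $[A,F^{(k-1)}(s)]$ and pulling the derivative inside the linear commutator. Evaluating at $s=0$ gives $F^{(k)}(0) = (-i)^{k}[A,B]_{k}$.

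It then remains to justify that the Taylor series converges to $F(1)$. I expect this to be the only mildly delicate point. I would handle it directly via the norm bound
\begin{align}
    \|[A,B]_{k}\| \le (2\|A\|)^{k}\|B\|,
\end{align}
which follows by induction from $\|[A,X]\| \le 2\|A\|\|X\|$. This gives absolute convergence of $\sum_k \frac{s^k(-i)^k}{k!}[A,B]_k$, dominated by $e^{2|s|\|A\|}\|B\|$.

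To identify the limit, I would use a uniqueness argument. Both $F(s)$ and the series $G(s)$ satisfy the linear ODE $X' = -i[A,X]$ with $X(0)=B$: for $G$, term-by-term differentiation is permitted by uniform convergence on compacts. By uniqueness for linear ODEs with constant coefficients on the finite-dimensional space $\mathbb{C}^{D\times D}$, $F \equiv G$. Alternatively, one can observe that $F(s) = e^{-is\,\mathrm{ad}_A}(B)$ with $\mathrm{ad}_A(X) = [A,X]$. Setting $s=1$ yields Eq.~(\ref{eq:hadamard-formula}).
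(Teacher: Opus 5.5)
Your proof is correct. The paper does not prove this lemma: it quotes it as a standard identity and then uses it in Lemma~\ref{lem:tail-bound-higher-order} in the form $e^{-iHt}P_a e^{iHt}=\sum_k \frac{(-it)^k}{k!}[H,P_a]_k$. That is exactly your $F(s)$ with $A=H$ and $s=t$, so your argument simply makes the statement self-contained.

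Each of your two convergence justifications is sufficient on its own. $F$ is a product of entire matrix-valued functions, so it equals its Taylor series at every $s$, and the proof is already complete once $F^{(k)}(0)=(-i)^k[A,B]_k$ is established. The bound $\|[A,B]_k\|\le(2\|A\|)^k\|B\|$ together with ODE uniqueness, or the identity $F(s)=e^{-is\,\mathrm{ad}_A}(B)$, is an independent second route to the same conclusion.

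Your crude norm bound gives convergence, but with $A=H$ it would make the remainder depend on $\|H\|_\infty$, which can be of order $M$. The paper's tail bound avoids this by controlling $[H,P_a]_k$ through the cluster expansion of Lemma~\ref{lemma:cluster-counting-blmt}. That gives a factor $(2(\mathfrak{d}+1))^k$ independent of system size, at the cost of requiring $2t(\mathfrak{d}+1)<1$.
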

The usefulness of Def.~\ref{def:commutator} and Lemma~\ref{lem:hadamard-formula} will become readily apparent, since our analysis of quantum learnability will involve manipulating and expanding time-evolved operators of the form shown on the L.H.S.\ of Eq.\ (\ref{eq:hadamard-formula}). However, our calculations will involve nested commutators that are more general than $\left[ A, B \right]_{k}$, such as $\left[ M_{1}, \left[ M_{2}, \left[ \ldots, \left[ M_{\ell - 1}, M_{\ell} \right] \ldots \right] \right] \right]$, which are nested commutators constructed using more than two operators. In particular, we will be interested in scenarios when such quantities are non-zero. This leads us to the definition of a \textit{cluster}.  
\begin{definition}[Cluster, Definition 2.17,~\cite{bakshi2023learningquantumhamiltonianstemperature}]
    \label{def:cluster}
    Let $M_{1}, M_{2}, \ldots, M_{\ell} \in \mathbb{C}^{D \times D}$. We say that the ordered $\ell$-tuple $\left( M_{1}, M_{2}, \ldots, M_{\ell} \right)$ forms a cluster if, for all $a \in \left[ \ell - 1 \right]$, $\operatorname{supp} \left( M_{a + 1} \right) \cap \left( \operatorname{supp} \left( M_{1} \right) \cup \operatorname{supp} \left( M_{2} \right) \cup \ldots \cup \operatorname{supp} \left( M_{a} \right) \right) \neq \emptyset$.
\end{definition}
To see the motivation behind Def.~\ref{def:cluster}, note that the quantity $\left[ M_{1}, \left[ M_{2}, \left[ \ldots , \left[ M_{\ell - 1}, M_{\ell} \right] \ldots \right] \right] \right]$ is non-zero only when  $\operatorname{supp} \left( M_{\ell - 1} \right) \cap \operatorname{supp} \left( M_{\ell} \right) \neq \emptyset$, $\operatorname{supp} \left( M_{\ell - 2} \right)$ overlaps with $\operatorname{supp} \left( M_{\ell - 1} \right) \cup \operatorname{supp} \left( M_{\ell} \right)$, and so on. This is exactly the condition imposed by Def.~\ref{def:cluster}. The final technical ingredient we will require is a cluster-counting lemma that has  originally been shown in Ref.~\cite{bakshi2023learningquantumhamiltonianstemperature}, which we restate below for completeness:
\begin{lemma}[Counting clusters, Item (b) of Lemma 2.18, ~\texorpdfstring{\cite{bakshi2023learningquantumhamiltonianstemperature}}{Bakshi et al.}]
    \label{lemma:cluster-counting-blmt}
    Let $\mathcal{E} \subset \mathcal{P}$ be a set of Pauli operators, where $\mathcal{P}$ is the set of $n$-qubit Pauli strings of the form $P_{1} \otimes \ldots \otimes P_{n}$, where each $P_{i} \in \left\{ I, X, Y, Z \right\}$. Let every $P \in \mathcal{E}$ satisfy $\left\vert \operatorname{supp} \left( P \right) \right\vert \leqslant \mathfrak{K}$ and let the dual-interaction graph associated with $\mathcal{E}$ have maximum degree $\mathfrak{d}$. Let $H_{1}, H_{2}, \ldots, H_{\ell}$ be linear combinations of elements in $\mathcal{E}$, such that $H_{i} \coloneqq \sum_{P \in \mathcal{E}} \lambda_{i, P} P$, for all $i \in \left[ \ell \right]$. Let also $A \in \mathcal{E}$. Then, we may write 
    \begin{align}
        \label{eq:nested-comm-cluster-expansion}
        \left[ H_{1}, \left[ H_{2}, \left[ \ldots, \left[ H_{\ell}, A \right] \ldots \right] \right] \right] = 2^{\ell} \hspace{-8mm} \sum_{\substack{P_{1}, P_{2}, \ldots, P_{\ell} \in \mathcal{E}, \\ \text{$\left( A, P_{1}, \ldots, P_{\ell} \right)$ is a cluster.}}} \hspace{-7mm} c_{P_{1}, P_{2}, \ldots, P_{\ell}} Q_{P_{1}, P_{2}, \ldots, P_{\ell}} \prod_{j = 1}^{\ell} \lambda_{j, P_{j}},
    \end{align}
    where $c_{P_{1}, P_{2}, \ldots, P_{\ell}} \in \left\{ 0, \pm 1, \pm i \right\}$ and $Q_{P_{1}, P_{2}, \ldots, P_{\ell}}$ are certain Pauli operators. Also, the number of terms constituting the sum in Eq.\ (\ref{eq:nested-comm-cluster-expansion}) is given by $\ell! \left( \mathfrak{d} + 1 \right)^{\ell}$. 
\end{lemma}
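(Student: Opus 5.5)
The plan is to prove the identity by induction on $\ell$, expanding the nested commutator from the innermost bracket outward. At each stage the current operator will be a single Pauli string times a scalar. Two elementary Pauli facts drive the whole argument. First, for Pauli strings $P, Q$, either $[P,Q]=0$ (they commute) or $[P,Q]=2PQ$ (they anticommute). Second, $PQ = \omega R$ for some Pauli string $R$ and phase $\omega\in\{\pm1,\pm i\}$. Moreover $\operatorname{supp}(R)\subseteq \operatorname{supp}(P)\cup\operatorname{supp}(Q)$, and $[P,Q]=0$ whenever the supports are disjoint.

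\textbf{Base case $\ell=1$.} By linearity, $[H_\ell, A] = \sum_{P\in\mathcal{E}} \lambda_{\ell,P}[P,A]$. Each summand is either $0$ or $2\omega\, Q_P$ with $\omega\in\{\pm1,\pm i\}$. A summand can be nonzero only if $\operatorname{supp}(P)\cap\operatorname{supp}(A)\neq\emptyset$, i.e.\ only if the pair $(A,P)$ is a cluster. When $P$ and $A$ overlap but commute, we simply record $c=0$, which is why $0$ is allowed among the coefficients.

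\textbf{Inductive step.} Suppose the inner $k$-fold commutator has been written as $2^{k}\sum c\, Q\prod\lambda$, where the sum runs over cluster tuples and each $Q$ is a Pauli string supported in the union of the supports of $A$ and the Paulis chosen so far. Apply the next $H_i$, expand it in $\mathcal{E}$, and use linearity together with the two Pauli facts on each pair $(P, Q)$. This produces one new factor of $2$, one new phase, and one new coefficient $\lambda_{i,P}$. The new phase multiplies the old $c$, so it stays in $\{0,\pm1,\pm i\}$. The term vanishes unless $\operatorname{supp}(P)$ meets $\operatorname{supp}(Q)$, which is contained in the union of the supports of the previously chosen operators. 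This is exactly the cluster condition of Def.~\ref{def:cluster}, with the tuple read in the order in which the Paulis are peeled off. After relabeling the indices of $H_1,\dots,H_\ell$ to match the ordering in Eq.~(\ref{eq:nested-comm-cluster-expansion}) (or noting that the lemma only asserts existence of such an expansion), this gives the claimed form with prefactor $2^\ell$.

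\textbf{Counting (the main step).} The number of summands requires the most care. At stage $k$ (for $k=1,\dots,\ell$), the tuple already contains $k$ operators, namely $A$ and $k-1$ Paulis, all of which are elements of $\mathcal{E}$ and hence vertices of the dual-interaction graph $\mathfrak{G}$. A new $P\in\mathcal{E}$ has support intersecting their union if and only if $P$ equals, or is adjacent in $\mathfrak{G}$ to, one of these $k$ vertices. Each vertex has at most $\mathfrak{d}$ neighbours plus itself, so there are at most $k(\mathfrak{d}+1)$ admissible choices. Multiplying over $k=1,\dots,\ell$ bounds the number of cluster tuples by $\prod_{k=1}^{\ell}k(\mathfrak{d}+1)=\ell!(\mathfrak{d}+1)^\ell$. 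The count in the statement should therefore be read as an upper bound on the number of terms. The subtle point to get right is that the union of supports only grows, so the bound must be phrased in terms of the graph neighbourhood of all previously chosen vertices, not just the last one; that is precisely what yields the factorial factor rather than $(\mathfrak{d}+1)^\ell$ alone.
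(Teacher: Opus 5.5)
The paper gives no proof of this lemma; it imports it from Bakshi et al. Your argument is the standard proof and is correct. You induct on nesting depth using the commute/anticommute dichotomy for Pauli strings, which gives the prefactor $2^{\ell}$ and coefficients in $\{0,\pm1,\pm i\}$. You then count closed neighbourhoods in the dual-interaction graph, giving at most $\prod_{k=1}^{\ell} k(\mathfrak{d}+1)=\ell!(\mathfrak{d}+1)^{\ell}$ terms, which you rightly read as an upper bound.

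The one step that does not hold up is how you reconcile the orders. Peeling from the inside out, your induction shows that the term indexed by $(P_1,\ldots,P_\ell)$ vanishes unless $(A,P_\ell,P_{\ell-1},\ldots,P_1)$ is a cluster. Neither of your patches converts this into the literal condition that $(A,P_1,\ldots,P_\ell)$ is a cluster. Relabeling the $H_i$ changes the left-hand side. Appealing to ``existence of an expansion'' fails because the index set of the sum is prescribed.

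In fact, for distinct $H_i$ the literal condition is false. Take $A=Z_1$, $H_2=X_1X_2$, $H_1=Z_2$. Then $[Z_2,[X_1X_2,Z_1]]=4Y_1Y_2\neq 0$. The only contributing tuple is $(P_1,P_2)=(Z_2,X_1X_2)$, and it fails the literal condition because $\operatorname{supp}(Z_2)\cap\operatorname{supp}(Z_1)=\emptyset$.

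So state and prove the identity with the reversed order. This costs nothing:
\begin{itemize}
\item Reversal is a bijection on tuples, so the count is unchanged.
\item When $H_1=\cdots=H_\ell=H$, which is the only case used in Lemma~\ref{lem:tail-bound-higher-order}, reindexing $P_j\mapsto P_{\ell+1-j}$ leaves $\prod_j\lambda_{P_j}$ invariant, so the two forms agree there.
\end{itemize}

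Your neighbourhood argument also covers the paper's actual application, where $A=P_a$ is a single-qubit Pauli not in $\mathcal{E}$. Since $\operatorname{supp}(P_a)\subseteq\operatorname{supp}(E_a)$, every admissible first term lies in the closed neighbourhood of $E_a$. The same $k(\mathfrak{d}+1)$ bound then holds at each stage.
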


\subsection{Tail bound}
\label{subsec:tail-bound-f-a-t}
\begin{lemma}[Tail bound for higher-order terms of \texorpdfstring{$\mathcal{F}_{a, t}$}{F_{a, t}}]
    \label{lem:tail-bound-higher-order}
    Let $H = \sum_{a \in \left[ M \right]} \lambda_{a} E_{a}$ be an $n$-qubit Hamiltonian as defined in Def.~\ref{def:hamiltonian}. Let $U \coloneqq e^{-iHt}$, for $t > 0$. For each $a \in \left[ M \right]$, let $P_{a}$ be any single-qubit Pauli that anti-commutes with $E_{a}$ and $Q_{a} = i \left[ P_{a}, E_{a} \right] = 2 i P_{a} E_{a}$. Then, define 
    \begin{align}
        \label{eq:mathcal_f_a_t}
        \mathcal{F}_{a, t} \coloneqq \frac{1}{D} \operatorname{Tr} \left[ Q_{a} U P_{a} U^{\dagger} \right].
    \end{align}
    Then, it holds true that
    \begin{align}
        \left\vert \mathcal{F}_{a, t} - 4 t \lambda_{a} \right\vert \leqslant \frac{8t^{2} \left( \mathfrak{d} + 1 \right)^{2}}{1 - 2t \left( \mathfrak{d} + 1 \right)},
    \end{align}
    where $\mathfrak{d}$ is the degree of $\mathfrak{G}$, the dual-interaction graph of $H$, provided that $2t \left( \mathfrak{d} + 1 \right) < 1$.
\end{lemma}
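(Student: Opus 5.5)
We would prove the bound by expanding the Heisenberg-evolved operator $U P_{a} U^{\dagger} = e^{-iHt} P_{a} e^{iHt}$ with the Hadamard formula (Lemma~\ref{lem:hadamard-formula}), applied with $A = Ht$ and $B = P_{a}$:
\begin{align}
    \mathcal{F}_{a, t} = \sum_{k = 0}^{\infty} \frac{(-it)^{k}}{k!} \frac{1}{D} \operatorname{Tr}\left[ Q_{a} [H, P_{a}]_{k} \right].
\end{align}

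First we dispose of the low orders. The $k=0$ term is $\frac{1}{D}\operatorname{Tr}[Q_{a} P_{a}] = \frac{2i}{D}\operatorname{Tr}[P_{a} E_{a} P_{a}] = \pm\frac{2i}{D}\operatorname{Tr}[E_a]$, which vanishes because $E_{a}$ is a non-identity Pauli. For $k=1$ we write $[H, P_{a}] = \sum_{b} \lambda_{b} [E_{b}, P_{a}]$. Only terms with $E_{b}$ anticommuting with $P_{a}$ survive, each giving $2E_{b}P_{a}$. Taking the trace against $Q_{a} = 2iP_{a}E_{a}$ and using Hilbert–Schmidt orthonormality of Pauli strings, together with uniqueness of the $E_{b}$, only $b=a$ survives. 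Tracking the factors of $i$ and $-i$ then gives exactly $4\lambda_{a} t$. This identifies the first-order term claimed in Eq.~(\ref{eq:mathcal-f-a-t-taylor-expansion-simple}).

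Next we bound the remainder $\sum_{k\ge 2}$. For each $k$ we apply Lemma~\ref{lemma:cluster-counting-blmt} with $H_{1}=\dots=H_{k}=H$ and $A=P_{a}$. One caveat is that $P_{a}$ is not literally in $\mathcal{E}$. We would handle this by enlarging $\mathcal{E}$ to $\mathcal{E}\cup\{P_a\}$. Since $P_{a}$ is a single-qubit Pauli on a qubit in $\operatorname{supp}(E_a)$, the number of $E_b$ overlapping it is at most $\mathfrak{d}+1$, so the degree bound is preserved. The lemma then expresses $[H,P_a]_k$ as $2^{k}$ times a sum of at most $k!(\mathfrak{d}+1)^{k}$ Pauli operators, each with coefficient of modulus at most $\prod|\lambda|\le 1$. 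Because $\|Q_a\|_\infty = 2$ and $|\frac{1}{D}\operatorname{Tr}[Q_a R]|\le 2$ for any Pauli $R$, we get
\begin{align}
    \left|\frac{1}{D}\operatorname{Tr}\left[Q_a[H,P_a]_k\right]\right| \le 2\cdot 2^{k} k! (\mathfrak{d}+1)^{k}.
\end{align}
The $k!$ cancels the $1/k!$ in the Hadamard series, leaving a geometric series:
\begin{align}
    \sum_{k\ge2} 2\,(2t(\mathfrak{d}+1))^{k} = \frac{8t^{2}(\mathfrak{d}+1)^{2}}{1-2t(\mathfrak{d}+1)},
\end{align}
which converges precisely when $2t(\mathfrak{d}+1)<1$.

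The main obstacle is applying the cluster-counting lemma with the correct constants. The count $k!(\mathfrak{d}+1)^k$ must hold with $P_a$ as the root rather than an element of $\mathcal{E}$, which requires $P_a$ to be chosen on $\operatorname{supp}(E_a)$. The prefactor $2^{k}$ must also be absorbed without an extra factor. If the counting only yields $(\mathfrak{d}+2)^k$, the fallback is to first commute once, $[H,P_a]=\sum_{b\sim a}\lambda_b[E_b,P_a]$, rooting the clusters at $E_a$-neighbours. The resulting slightly worse constant would still give a bound of the stated form.
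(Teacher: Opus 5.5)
Your proposal is correct and follows essentially the same route as the paper's proof. The shared steps are the Hadamard expansion, the vanishing zeroth-order term, the first-order term $4t\lambda_a$ from Hilbert--Schmidt orthogonality of Pauli strings, and a remainder bounded by the cluster count $k!(\mathfrak{d}+1)^k$ times $2^k$ and $\|Q_a\|_\infty = 2$, summed as a geometric series.

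On your caveat: the paper simply applies the cluster-counting lemma with root $P_a$ without comment. Enlarging $\mathcal{E}$ as you suggest would actually raise the maximum degree to $\mathfrak{d}+1$, so ``the degree bound is preserved'' is not accurate. However, the fact you already state is exactly what the direct counting argument needs: at most $\mathfrak{d}+1$ terms touch the qubit of $P_a$, since it lies in $\operatorname{supp}(E_a)$. This gives the count $k!(\mathfrak{d}+1)^k$, so no fallback is required. That is fortunate, because a fallback with a worse constant would not reproduce the stated bound.
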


\begin{proof}
    Consider 
    \begin{align}
        U P_{a} U^{\dagger} &= e^{-iHt} P_{a} e^{iHt} \\
        \nonumber
        &= \sum_{k = 0}^{\infty} \frac{\left( -it \right)^{k}}{k!} \left[ H, P_{a} \right]_{k} \\
         \nonumber
        &= \sum_{k = 0}^{\infty} \frac{\left( -it \right)^{k}}{k!} \hspace{-1cm} \sum_{\substack{a_{1}, a_{2}, \ldots, a_{k} \in \left[ M \right] \colon \\ 
         \nonumber\forall j \in \left[ k \right], \operatorname{supp} \left( E_{a_{j}} \right) \cap \left( \bigcup_{i = j + 1}^{k} \operatorname{supp} \left( E_{a_{i}} \right) \cup \operatorname{supp} \left( A \right) \right) \neq \emptyset}} \hspace{-1cm} \underset{\eqqcolon C_{a_{1}, a_{2}, \ldots, a_{k}}}{\underbrace{\left[ E_{a_{1}}, \left[ E_{a_{2}}, \left[ \ldots , \left[ E_{a_{k}}, P_{a} \right] \ldots \right] \right] \right]}} \lambda_{a_{1}} \lambda_{a_{2}} \ldots \lambda_{a_{k}} \\
          \nonumber
        &= \sum_{k = 0}^{\infty} t^{k} q_{k} \left( \lambda_{1}, \lambda_{2}, \ldots, \lambda_{M} \right), 
         \nonumber
    \end{align}
    where, in the final equality, according to the notation of Ref.~\cite{Haah_2024}, $q_{k} \left( \lambda_{1}, \lambda_{2}, \ldots, \lambda_{M} \right)$ is a degree-$k$, homogeneous, matrix-valued polynomial in the Hamiltonian coefficients $\bigl( \lambda_{a} \bigr)_{a \in \left[ M \right]}$. Also, note that, in the penultimate equality, the interior sum is being taken over clusters of the form $\left( E_{a_{1}}, E_{a_{2}}, \ldots, E_{a_{k}}, P_{a} \right)$, which is to ensure that the summands are non-zero. This condition is exactly the same as requiring that the $(k+1)$-tuple $\left( E_{a_{1}}, E_{a_{2}}, \ldots, E_{a_{k}}, P_{a} \right)$ is a cluster as in Def.~\ref{def:cluster}. Now, consider 
    \begin{align}
        \left\Vert q_{k} \left( \lambda_{1}, \lambda_{2}, \ldots, \lambda_{M} \right) \right\Vert_{\infty} &= \left\Vert \frac{\left( -i \right)^{k}}{k!} \hspace{-1cm} \sum_{\substack{a_{1}, a_{2}, \ldots, a_{k} \in \left[ M \right] \colon \\ 
        \nonumber\forall j \in \left[ k \right], \operatorname{supp} \left( E_{a_{j}} \right) \cap \left( \bigcup_{i = j + 1}^{k} \operatorname{supp} \left( E_{a_{i}} \right) \cup \operatorname{supp} \left( P_{a} \right) \right) \neq \emptyset}} \hspace{-2cm} C_{a_{1}, a_{2}, \ldots, a_{k}} \lambda_{a_{1}} \lambda_{a_{2}} \ldots \lambda_{a_{k}} \right\Vert_{\infty} \\
        \nonumber
        &\leqslant \frac{1}{k!} \hspace{-1cm} \sum_{\substack{a_{1}, a_{2}, \ldots, a_{k} \in \left[ M \right] \colon \\ 
        \nonumber\forall j \in \left[ k \right], \operatorname{supp} \left( E_{a_{j}} \right) \cap \left( \bigcup_{i = j + 1}^{k} \operatorname{supp} \left( E_{a_{i}} \right) \cup \operatorname{supp} \left( P_{a} \right) \right) \neq \emptyset}}  \hspace{-0.75cm} \underset{\leqslant 2^{k}}{\underbrace{\left\Vert C_{a_{1}, a_{2}, \ldots, a_{k}} \right\Vert_{\infty}}} \underset{\leqslant 1}{\underbrace{\left\vert \lambda_{a_{1}} \lambda_{a_{2}} \ldots \lambda_{a_{k}} \right\vert}} \\
        \nonumber
        &\leqslant \frac{2^{k}}{k!} k! \left( \mathfrak{d} + 1 \right)^{k} = \left( 2 \left( \mathfrak{d} + 1 \right) \right)^{k},
    \end{align}
    where the bound on $C_{a_{1}, a_{2}, \ldots, a_{k}}$ follows from iterative use of the trivial commutator bound
    \begin{align}
        \left\Vert C_{a_{1}, a_{2}, \ldots, a_{k}} \right\Vert_{\infty} &= \left\Vert \left[ E_{a_{1}}, \left[ E_{a_{2}}, \left[ \ldots , \left[ E_{a_{k}}, P_{a} \right] \ldots \right] \right] \right] \right\Vert_{\infty} \\
        &\leqslant 2^{k} \biggl(~\underset{\leqslant 1}{\underbrace{\prod_{i = 1}^{k} \left\Vert E_{a_{i}} \right\Vert_{\infty}}}~\biggr) \underset{\leqslant 1}{\underbrace{\left\Vert P_{a} \right\Vert_{\infty}}} \leqslant 2^{k},
        \nonumber
    \end{align}
    and the inequality in the third line follows from Lemma \ref{lemma:cluster-counting-blmt}, which counts the number of monomials (or, equivalently, the number of clusters over which the sum is taken) contained in the summation form of $\left[ H, P \right]_{k}$. Now, consider
    \begin{align}
        \mathcal{F}_{a, t} &= \frac{1}{D} \operatorname{Tr} \left[ Q_{a} U P_{a} U^{\dagger} \right] \\
        \nonumber
        &= \frac{1}{D} \operatorname{Tr} \left[ Q_{a} \sum_{k = 0}^{\infty} \frac{\left( -it \right)^{k}}{k!} \left[ H, P_{a} \right]_{k} \right] \\
        \nonumber
        &= \frac{1}{D} \sum_{k = 0}^{\infty} \frac{\left( -it \right)^{k}}{k!} \operatorname{Tr} \left[ Q_{a} \left[ H, P_{a} \right]_{k} \right] \\
        \nonumber
        &= \frac{1}{D} \biggl( \underset{=0}{\underbrace{\operatorname{Tr} \left[ Q_{a} P_{a} \right]}} - it \operatorname{Tr} \left[ Q_{a} \left[ H, P_{a} \right] \right] \biggr) + \underset{\mathcal{O} \left( t^{2} \right)}{\underbrace{\frac{1}{D} \operatorname{Tr} \left[ Q_{a} \sum_{k = 2}^{\infty} \frac{\left( -it \right)^{k}}{k!} \left[ H, P_{a} \right]_{k} \right]}} \\
        \nonumber
        &= \frac{1}{D} \left( -it \operatorname{Tr} \left[ Q_{a} \left[ \sum_{b \in \left[ M \right]} \lambda_{b} E_{b}, P_{a} \right] \right] \right) + \mathcal{O} \left( t^{2} \right) \\
        \nonumber
        &= \frac{-it}{D} \sum_{b \in \left[ M \right]} \lambda_{b} \operatorname{Tr} \left[ Q_{a} \left[ E_{b}, P_{a} \right] \right] + \mathcal{O} \left( t^{2} \right) \\
        \nonumber
        &= \frac{-it}{D} \sum_{b \in \left[ M \right]} \lambda_{b} \operatorname{Tr} \left[ 2iP_{a}E_{a} \left( E_{b} P_{a} - P_{a} E_{b} \right) \right] + \mathcal{O} \left( t^{2} \right) \\
        \nonumber
        &= \frac{2t}{D} \sum_{b \in \left[ M \right]} \lambda_{b} \operatorname{Tr} \left[ P_{a} E_{a} \left( E_{b} P_{a} - P_{a} E_{b} \right) \right] + \mathcal{O} \left( t^{2} \right) \\
        \nonumber
        &= \frac{2t}{D} \sum_{b \in \left[ M \right]} \lambda_{b} \left( \operatorname{Tr} \left[ P_{a} E_{a} E_{b} P_{a} \right] - \operatorname{Tr} \left[ P_{a} E_{a} P_{a} E_{b} \right] \right) + \mathcal{O} \left( t^{2} \right) \\
        \nonumber
        &= \frac{2t}{D} \sum_{b \in \left[ M \right]} \lambda_{b} \left( \operatorname{Tr} \left[ P_{a}^{2} E_{a} E_{b} \right] + \operatorname{Tr} \left[ E_{a} P_{a}^{2} E_{b} \right] \right) + \mathcal{O} \left( t^{2} \right) \\
        \nonumber
        &= \frac{4t}{D} \sum_{b \in \left[ M \right]} \lambda_{b} \operatorname{Tr} \left[ E_{a} E_{b} \right] + \mathcal{O} \left( t^{2} \right) = \frac{4t}{D} \cdot \lambda_{a} \operatorname{Tr} \left[ E_{a}^{2} \right] + \mathcal{O} \left( t^{2} \right) = 4t\lambda_{a} + \mathcal{O} \left( t^{2} \right),
        \nonumber
    \end{align}
    where, in the fifth equality, we have simply plugged in the definition of $H$, in the seventh equality, we have used the definition of $Q_{a}$, and the tenth equality follows from the cyclicity of the trace and the fact that $P_{a} E_{a} = -E_{a} P_{a}$. From the above, we have 
    \begin{align}
        \left\vert \mathcal{F}_{a} - 4 t \lambda_{a} \right\vert &\leqslant \left\vert \frac{1}{D} \operatorname{Tr} \left[ Q_{a} \sum_{k = 2}^{\infty} \frac{\left( -it \right)^{k}}{k!} \left[ H, P_{a} \right]_{k} \right] \right\vert \\
        \nonumber
        &\leqslant \frac{1}{D} \sum_{k = 2}^{\infty} t^{k} \left\vert \operatorname{Tr} \left[ Q_{a} \underset{q_{k} \left( \lambda_{1}, \lambda_{2}, \ldots, \lambda_{M} \right)}{\underbrace{\frac{\left( -i \right)^{k}}{k!} \left[ H, P_{a} \right]_{k}}} \right] \right\vert \\
        \nonumber
        &= \frac{1}{D} \sum_{k = 2}^{\infty} t^{k} \left\vert \operatorname{Tr} \left[ Q_{a} q_{k} \left( \lambda_{1}, \lambda_{2}, \ldots, \lambda_{M} \right) \right] \right\vert \\
        \nonumber
        &\leqslant \frac{1}{D} \sum_{k = 2}^{\infty} t^{k} \underset{=2}{\underbrace{\left\Vert Q_{a} \right\Vert_{\infty}}} \left\Vert q_{k} \left( \lambda_{1}, \lambda_{2}, \ldots, \lambda_{M} \right) \right\Vert_{1} \\
        \nonumber
        &\leqslant \frac{2}{D} \sum_{k = 2}^{\infty} t^{k} \cdot D \left\Vert q_{k} \left( \lambda_{1}, \lambda_{2}, \ldots, \lambda_{M} \right) \right\Vert_{\infty} \\
        \nonumber
        &\leqslant 2\sum_{k = 2}^{\infty} \left( 2t \left( \mathfrak{d} + 1 \right) \right)^{k} \\
        \nonumber
        &= 2 \left( \frac{1}{1 - 2t \left( \mathfrak{d} + 1 \right)} - 1 - 2t \left( \mathfrak{d} + 1 \right) \right) \\
        \nonumber
        &= 2 \left( \frac{1 - \left( 1 - 2t \left( \mathfrak{d} + 1 \right) \right)\left( 1 + 2t \left( \mathfrak{d} + 1 \right) \right)}{1 - 2t \left( \mathfrak{d} + 1 \right)} \right) \\
        \nonumber
        &= 2 \left( \frac{1 - \left( 1 - \left( 2t \left( \mathfrak{d} + 1 \right) \right)^{2} \right)}{1 - 2t \left( \mathfrak{d} + 1 \right)} \right) = \frac{8t^{2} \left( \mathfrak{d} + 1 \right)^{2}}{1 - 2t \left( \mathfrak{d} + 1 \right)}.
        \nonumber
    \end{align}
    The second equality holds when $2t \left( \mathfrak{d} + 1 \right) < 1$. 
\end{proof}

\subsection{Hamiltonian learning precision required for the continuous-parameter case}
\label{subsec:hl-precision}
\begin{lemma}[Error propagation from Hamiltonian coefficients to time-evolved expectation values]
    \label{lem:error-prop-ham-coeff-exp-values}
    Let $H \left( \lambda \right) \coloneqq \sum_{a \in \left[ M \right]} \lambda_{a} E_{a}$ and $H \left( \lambda' \right) \coloneqq \sum_{a \in \left[ M \right]} \lambda_{a}' E_{a}$ be Hamiltonians defined on $n$ qubits, where $\lambda \coloneqq \left( \lambda_{a} \right)_{a \in \left[ M \right]}$ and $\lambda' \coloneqq \left( \lambda_{a}' \right)_{a \in \left[ M \right]}$. Also, let $\left\Vert \lambda - \lambda' \right\Vert_{\ell_{\infty}}\coloneqq \max_{a \in \left[ M \right]} \left\vert \lambda_{a} - \lambda_{a}' \right\vert \leqslant \varepsilon$, for some $\varepsilon > 0$. Define $\rho \left( \lambda \right) \coloneqq e^{-iH \left( \lambda \right) t} \rho_{0} e^{iH \left( \lambda \right) t}$ and $\rho \left( \lambda' \right) \coloneqq e^{-i H \left( \lambda' \right) t} \rho_{0} e^{i H \left( \lambda' \right) t}$, where $\rho_{0} \in \mathbb{C}^{2^{n} \times 2^{n}}$ is an arbitrary initial state. Then:
    \begin{align}
        \left\vert \langle O \rangle_{\rho \left( \lambda \right)} - \langle O \rangle_{\rho \left( \lambda' \right)} \right\vert &\leqslant 2 \left\Vert O \right\Vert_{\infty} M\left\vert t \right\vert \varepsilon, 
    \end{align}
    where $O \in \mathrm{Herm} \left( 2^{n} \right)$ is an observable.
\end{lemma}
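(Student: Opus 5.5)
The bound follows from a Duhamel (interpolation) argument for the difference of the two evolution unitaries, combined with Hölder's inequality and the triangle inequality on the Hamiltonian difference. Write $U \coloneqq e^{-iH(\lambda)t}$ and $V \coloneqq e^{-iH(\lambda')t}$. The first step is to reduce the expectation-value difference to an operator-norm difference of unitaries:
\begin{align}
    \left\vert \operatorname{Tr}\left[ O \left( U\rho_{0}U^{\dagger} - V\rho_{0}V^{\dagger} \right) \right] \right\vert \leqslant \left\Vert O \right\Vert_{\infty} \left\Vert U\rho_{0}U^{\dagger} - V\rho_{0}V^{\dagger} \right\Vert_{1}.
\end{align}
We then split
\begin{align}
    U\rho_{0}U^{\dagger} - V\rho_{0}V^{\dagger} = (U-V)\rho_{0}U^{\dagger} + V\rho_{0}(U-V)^{\dagger}.
\end{align}
Each piece has trace norm at most $\Vert U - V\Vert_{\infty}\Vert\rho_0\Vert_1 = \Vert U-V\Vert_\infty$, since $\rho_{0}$ is a state and $U,V$ are unitary. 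This gives the factor $2\Vert O\Vert_\infty \Vert U - V\Vert_\infty$.

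The second step bounds $\Vert U - V\Vert_\infty$ using Duhamel's formula. Differentiating $s \mapsto e^{-iH(\lambda)(t-s)}e^{-iH(\lambda')s}$ and integrating over $s\in[0,t]$ gives
\begin{align}
    V - U = -i\int_0^t e^{-iH(\lambda)(t-s)}\left(H(\lambda') - H(\lambda)\right)e^{-iH(\lambda')s}\,ds.
\end{align}
Since the outer factors are unitary, this yields $\Vert U-V\Vert_\infty \leqslant |t|\,\Vert H(\lambda)-H(\lambda')\Vert_\infty$. For negative $t$ the same argument applies with $|t|$.

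The third step bounds the Hamiltonian difference by the triangle inequality over the Pauli terms:
\begin{align}
    \Vert H(\lambda)-H(\lambda')\Vert_\infty \leqslant \sum_{a\in[M]}|\lambda_a-\lambda_a'|\,\Vert E_a\Vert_\infty \leqslant M\varepsilon,
\end{align}
using $\Vert E_a\Vert_\infty=1$ for Pauli strings. Chaining the three steps gives $2\Vert O\Vert_\infty M|t|\varepsilon$, as claimed.

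No step is a real obstacle. The only point needing care is justifying the Duhamel identity, by differentiating the interpolating product and applying the fundamental theorem of calculus. The constant $2$ comes from splitting into two pieces rather than from any looseness in the Duhamel bound.
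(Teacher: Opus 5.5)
Your proof is correct and follows the same route as the paper. Both use Hölder's inequality to reduce to the trace norm, the identical two-term split $(U-V)\rho_0U^\dagger + V\rho_0(U-V)^\dagger$, a Duhamel bound $\Vert U - V\Vert_\infty \leqslant |t|\,\Vert H(\lambda)-H(\lambda')\Vert_\infty$, and the triangle inequality over the $M$ Pauli terms. The only cosmetic difference is in deriving the Duhamel bound: the paper interpolates $H(s) = H' + s(H - H')$ and applies the derivative-of-matrix-exponential formula, whereas you differentiate the product $e^{-iH(\lambda)(t-s)}e^{-iH(\lambda')s}$, which is somewhat more direct and gives the same bound.
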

\begin{proof}
    For notational convenience, we will denote $H \left( \lambda \right)$ and $H \left( \lambda' \right)$ by $H$ and $H'$ respectively, and $\rho \left( \lambda \right)$ and $\rho \left( \lambda' \right)$ by $\rho$ and $\rho'$ respectively. Recall now that, for some parametrized operator $A \left( t \right)$, by Duhamel's formula for derivatives of matrix exponentials~\cite{feldman2007}, we have
    \begin{align}
        \partial_{t} e^{ A \left( t \right)} = \int_{0}^{1} e^{s A \left( t \right)} \partial_{t} A \left( t \right) e^{\left( 1 - s \right) A \left( t \right)} ds.
    \end{align}
    Now, for $s \in \left[ 0, 1 \right]$, 
    define
    \begin{align}
        H \left( s \right) \coloneqq H' + s \left( H - H' \right),
    \end{align}
    so that $H \left( 0 \right) = H'$ and $H \left( 1 \right) = H$. Now, consider 
    \begin{align}
        e^{-iHt} - e^{-iH't} &= \int_{0}^{1} \partial_{s} e^{-iH \left( s \right) t} ds \\
        \nonumber
        &= -it \int_{0}^{1} \int_{0}^{1} e^{-i s' H \left( s \right) t} \partial_{s} H \left( s \right) e^{\left( 1 - s' \right) \left( -i H \left( s \right) t \right)} ds ds' \\
        \nonumber
        &= -it \int_{0}^{1} \int_{0}^{1} e^{-i s' H \left( s \right) t} \left( H - H' \right) e^{\left( 1 - s' \right) \left( -i H \left( s \right) t \right)} ds ds',
        \nonumber
    \end{align}
    where the first equality follows from the fundamental theorem of calculus, the second equality follows from an application of Duhamel's formula, and, in the third equality, we have made use of the fact that $\partial_{s} H \left( s \right) = H - H'$. Taking the norm of both sides, we get:
    \begin{align}
        \left\Vert e^{-iHt} - e^{-iH't} \right\Vert_{\infty} &\leqslant \left\vert t \right\vert \int_{0}^{1} \int_{0}^{1} \underset{=1}{\underbrace{\left\Vert e^{-i s' H \left( s \right) t} \right\Vert_{\infty}}} \left\Vert H - H' \right\Vert_{\infty} \underset{=1}{\underbrace{\left\Vert e^{\left( 1 - s' \right) \left( -i H \left( s \right) t \right)} \right\Vert_{\infty}}} ds ds' \\
        &\leqslant \left\vert t \right\vert \left\Vert H - H' \right\Vert_{\infty}.
    \end{align}
    Let $U \coloneqq e^{-iHt}$ and $U' \coloneqq e^{-iH't}$. We bound the trace norm between $\rho$ and $\rho'$ as 
    \begin{align}
        \left\Vert \rho - \rho' \right\Vert_{1} &= \left\Vert U \rho_{0} U^{\dagger} - U'\rho_{0} U'^{\dagger}  \right\Vert_{1} \\
        \nonumber
        &= \left\Vert U \rho_{0} U^{\dagger} + U'\rho_{0}U^{\dagger} - U' \rho_{0} U^{\dagger} - U'\rho_{0} U'^{\dagger}  \right\Vert_{1} \\
        \nonumber
        &\leqslant \left\Vert \left( U - U' \right) \rho_{0} U^{\dagger} \right\Vert_{1} + \left\Vert U' \rho_{0} \left( U^{\dagger} - U'^{\dagger} \right) \right\Vert_{1} \\
        \nonumber
        &\leqslant \left\Vert U - U' \right\Vert_{\infty} + \left\Vert U^{\dagger} - U'^{\dagger} \right\Vert_{\infty} \\
        \nonumber
        &\leqslant 2 \left\vert t \right\vert \left\Vert H - H' \right\Vert_{\infty} \leqslant 2 t \varepsilon \sum_{a \in \left[ M  \right]} \left\Vert E_{a} \right\Vert_{\infty} = 2 t M \varepsilon.
        \nonumber
    \end{align}
    Finally, for some observable $O$, we find
    \begin{align}
        \left\vert \langle O \rangle_{\rho} - \langle O \rangle_{\rho'} \right\vert \leqslant \left\Vert O \right\Vert_{\infty} \left\Vert \rho - \rho' \right\Vert_{1} \leqslant 2 \left\Vert O \right\Vert_{\infty} t M \varepsilon, 
    \end{align}
    which is exactly the advertised bound.
\end{proof} 

\subsection{Sample and time complexity analysis for quantum learning algorithm}
\label{subsec:sample-time-complexity-analysis}
\begin{lemma}[Bounds on time-averaged integral]
    \label{lem:bounds-time-averaged-integral}
    Let $\mathcal{F}_{a, t}$ be as defined in Eq.\  (\ref{eq:mathcal_f_a_t}), Lemma \ref{lem:tail-bound-higher-order}. For $t^{*} > 0$, define the 
    time-averaged integral
    \begin{align}
        \tilde{\mathcal{F}}_{a, t^{*}} \coloneqq \frac{1}{t^{*}} \int_{0}^{t^{*}} \mathcal{F}_{a, t} dt.
    \end{align}
    Then, for $\varepsilon > 0$, we find
    \begin{align}
        \left\vert \frac{\tilde{\mathcal{F}}_{a, t^{*}}}{2t^{*}} - \lambda_{a} \right\vert \leqslant \frac{2t^{*} \left( \mathfrak{d} + 1 \right)^{2}}{1 - 2 \left( \mathfrak{d} + 1 \right) t^{*}},
    \end{align}
    where $\mathfrak{d}$ is the maximum degree of the dual-interaction graph $\mathfrak{G}$ of $H$.
\end{lemma}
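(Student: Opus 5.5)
The plan is to integrate the pointwise tail bound of Lemma~\ref{lem:tail-bound-higher-order} over the short-time window $[0,t^{*}]$. I will assume $2(\mathfrak{d}+1)t^{*}<1$, which is implicit in the statement. First write
\begin{align}
    \tilde{\mathcal{F}}_{a, t^{*}} = \frac{1}{t^{*}} \int_{0}^{t^{*}} 4 \lambda_{a} t \, dt + \frac{1}{t^{*}} \int_{0}^{t^{*}} \left( \mathcal{F}_{a, t} - 4 \lambda_{a} t \right) dt = 2 \lambda_{a} t^{*} + \frac{1}{t^{*}} \int_{0}^{t^{*}} \left( \mathcal{F}_{a, t} - 4 \lambda_{a} t \right) dt.
\end{align}
This isolates the linear term exactly: time-averaging the first-order piece $4\lambda_a t$ yields $2\lambda_a t^{*}$. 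That is why the estimator is divided by $2t^{*}$ rather than $4t^{*}$ in Algorithm~\ref{alg:training-via-ham-learning-short-times}.

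Next, move the absolute value inside the integral and apply Lemma~\ref{lem:tail-bound-higher-order} pointwise. For every $t\in[0,t^{*}]$ we have $2t(\mathfrak{d}+1)\le 2t^{*}(\mathfrak{d}+1)<1$, so the hypothesis of that lemma holds throughout the window. The denominator is monotone in $t$, so I will bound it uniformly: $1-2t(\mathfrak{d}+1)\ge 1-2t^{*}(\mathfrak{d}+1)$. This gives
\begin{align}
    \left\vert \tilde{\mathcal{F}}_{a, t^{*}} - 2 \lambda_{a} t^{*} \right\vert \leqslant \frac{1}{t^{*}} \int_{0}^{t^{*}} \frac{8 t^{2} (\mathfrak{d}+1)^{2}}{1 - 2 t^{*} (\mathfrak{d}+1)} dt = \frac{8 (t^{*})^{2} (\mathfrak{d}+1)^{2}}{3\left(1 - 2 t^{*} (\mathfrak{d}+1)\right)}.
\end{align}

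Finally, divide by $2t^{*}$. The resulting bound is $\frac{4t^{*}(\mathfrak{d}+1)^{2}}{3(1-2(\mathfrak{d}+1)t^{*})}$, which is at most the claimed $\frac{2t^{*}(\mathfrak{d}+1)^{2}}{1-2(\mathfrak{d}+1)t^{*}}$ because $4/3\le 2$.

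There is no real obstacle here. The only points needing care are the uniform denominator bound and noting that the hypothesis of Lemma~\ref{lem:tail-bound-higher-order} holds on the whole interval. The lemma's constant is slightly loose, and I will simply record the sharper $4/3$ factor before relaxing it to the stated form.
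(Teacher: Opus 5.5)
Your proof is correct, and its skeleton matches the paper's: split off $\frac{1}{t^{*}}\int_{0}^{t^{*}}4\lambda_{a}t\,dt=2\lambda_{a}t^{*}$, integrate the pointwise bound of Lemma~\ref{lem:tail-bound-higher-order} over $[0,t^{*}]$, and divide by $2t^{*}$. The difference is in how the error integral is evaluated.

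The paper integrates $\frac{8t^{2}(\mathfrak{d}+1)^{2}}{1-2t(\mathfrak{d}+1)}$ exactly by partial fractions. With $x=2(\mathfrak{d}+1)t^{*}$, this produces a logarithmic term $-\ln(1-x)/(t^{*}(\mathfrak{d}+1))$. The paper then controls it through the series $-\frac{\ln(1-x)}{x}-1-\frac{x}{2}=\sum_{n\geqslant 2}\frac{x^{n}}{n+1}\leqslant\frac{x^{2}}{2(1-x)}$.

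You instead freeze the denominator at its worst value $1-2(\mathfrak{d}+1)t^{*}$ before integrating. This avoids logarithms altogether and gives the sharper constant $4/3$ directly. The paper's series route would give the same $4/3$ had it used $\frac{1}{n+1}\leqslant\frac{1}{3}$ rather than $\frac{1}{2}$.

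You also handle the two-sided absolute value explicitly by pulling it inside the integral. The paper's displayed chain only writes an upper bound on $\tilde{\mathcal{F}}_{a,t^{*}}$ and leaves the matching lower bound implicit.
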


\begin{proof}
    Consider that
    \begin{align}
        \tilde{\mathcal{F}}_{a, t^{*}} &\coloneqq \frac{1}{t^{*}} \int_{0}^{t^{*}} \mathcal{F}_{a, t}~dt \\
        \nonumber 
        &\leqslant \frac{1}{t^{*}} \int_{0}^{t^{*}} 4 \lambda_{a} t + \frac{8t^{2} \left( \mathfrak{d} + 1 \right)^{2}}{1 - 2t \left( \mathfrak{d} + 1 \right)} dt \\
         \nonumber 
        &= 2 \lambda_{a} t^{*} + \frac{2}{t^{*}} \int_{0}^{t^{*}} \frac{1 - \left( 1 - 4t^{2} \left( \mathfrak{d} + 1 \right)^{2} \right)}{1 - 2t \left( \mathfrak{d} + 1 \right)} dt \\
         \nonumber 
        &= 2\lambda_{a}t^{*} + \frac{2}{t^{*}} \int_{0}^{t^{*}} \frac{1 - \left( 1 - 2t \left( \mathfrak{d} + 1 \right) \right) \left( 1 + 2t \left( \mathfrak{d} + 1 \right) \right)}{1 - 2t \left( \mathfrak{d} + 1 \right)} dt \\
         \nonumber 
        &= 2 \lambda_{a} t^{*} + \frac{2}{t^{*}} \int_{0}^{t^{*}} \frac{1}{1 - 2t \left( \mathfrak{d} + 1 \right)} dt - \frac{2}{t^{*}} \int_{0}^{t^{*}} 1 + 2t \left( \mathfrak{d} + 1 \right) dt \\
         \nonumber 
        &= 2 \lambda_{a} t^{*} - \frac{2}{t^{*}} \cdot \frac{1}{2 \left( \mathfrak{d} + 1 \right)} \int_{1}^{1 - 2t^{*} \left( \mathfrak{d} + 1 \right)} \frac{1}{u}~du - \frac{2}{t^{*}} \left( t^{*} + 2 \left( \mathfrak{d} + 1 \right) \cdot \frac{{t^{*}}^{2}}{2} \right) \\
         \nonumber 
        &= 2 \lambda_{a} t^{*} - \frac{\ln \left\vert 1 - 2t^{*} \left( \mathfrak{d} + 1 \right) \right\vert}{t^{*} \left( \mathfrak{d} + 1 \right)} - 2 - 2\left( \mathfrak{d} + 1 \right) t^{*}.
         \nonumber 
\end{align}
The above implies
\begin{align}
    \left\vert \frac{\tilde{\mathcal{F}}_{a, t}}{2t^{*}} - \lambda_{a} \right\vert \leqslant \frac{1}{2t^{*}} \left( - \frac{\ln \left\vert 1 - 2t^{*} \left( \mathfrak{d} + 1 \right) \right\vert}{t^{*} \left( \mathfrak{d} + 1 \right)} - 2 - 2 \left( \mathfrak{d} + 1 \right) t^{*} \right).
\end{align}
For ease of symbolic manipulation, let us define $a \leftarrow \left( \mathfrak{d} + 1 \right)$ and $x \leftarrow 2at^{*}$. Now, consider that
\begin{align}
    \ln \left( 1 - x \right) = -\sum_{n = 1}^{\infty} \frac{x^{n}}{n} \implies -\frac{\ln \left( 1 - x \right)}{x} - 1 - \frac{x}{2} = \sum_{n = 2}^{\infty} \frac{x^{n}}{n + 1} \leqslant \frac{1}{2} \sum_{n = 2}^{\infty} x^{n} = \frac{x^{2}}{2 \left( 1 - x \right)}.  
\end{align}
From the above, we have
\begin{align}
    \frac{2a}{x} \left( -\frac{\ln \left( 1 - x \right)}{x} - 1 - \frac{x}{2} \right) \leqslant \frac{ax}{\left( 1 - x \right)} = \frac{\left( \mathfrak{d} + 1 \right) \cdot 2 a t^{*}}{1 - 2 a t^{*}} = \frac{2t^{*} \left( \mathfrak{d} + 1 \right)^{2}}{1 - 2 \left( \mathfrak{d} + 1 \right) t^{*}}.
\end{align}
That is,
\begin{align}
    \label{eq:diff-mc-f-coeff}
    \left\vert \frac{\tilde{\mathcal{F}}_{a, t^{*}}}{2t^{*}} - \lambda_{a} \right\vert \leqslant \frac{2t^{*} \left( \mathfrak{d} + 1 \right)^{2}}{1 - 2 \left( \mathfrak{d} + 1 \right) t^{*}}.
\end{align}
This completes the proof.
\end{proof}

\begin{theorem}[Hoeffding's inequality]
    \label{thm:hoeffding-inequality}
    Let $Y_{1}, Y_{2}, \ldots, Y_{N}$ be $N$ independent samples of some random variable $Y$ that is bounded in the interval $\left[ a, b \right]$ and with mean $\mu \coloneqq \mathbb{E} \left[ Y \right]$. Let $\overline{Y}_{N}$ be the sample mean
    \begin{align}
        \overline{Y}_{N} \coloneqq \frac{1}{N} \sum_{i = 1}^{N} Y_{i}.
    \end{align}
    Then, for all precision and confidence parameters $\varepsilon, \delta > 0$, 
    \begin{align}
        \Pr \left[ \left\vert \overline{Y}_{N} - \mu \right\vert \leqslant \varepsilon \right] \geqslant 1 - \delta
    \end{align}
    holds true,
    provided that the number of samples $T \geqslant \frac{M^{2}}{2 \varepsilon^{2}} \ln \left( \frac{2}{\delta} \right)$, where $M \coloneqq b - a$.
\end{theorem}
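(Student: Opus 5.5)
The statement immediately above is Hoeffding's inequality (Theorem~\ref{thm:hoeffding-inequality}). The plan is the standard Chernoff--moment-generating-function argument, followed by a two-sided union bound and inversion of the tail bound to get the stated sample size. Note that the statement writes the sample size as $T$ and the range as $M$. Throughout I read these as the number of samples $N$ and the range $b-a$ of $Y$, respectively.

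\textbf{Hoeffding's lemma.} First I prove the lemma for a single bounded variable. If $X\in[a,b]$ with $\mathbb{E}[X]=0$, then $\mathbb{E}[e^{sX}]\leqslant e^{s^{2}(b-a)^{2}/8}$ for all $s\in\mathbb{R}$. The steps are:
\begin{enumerate}
    \item By convexity of $x\mapsto e^{sx}$ on $[a,b]$, bound $e^{sX}\leqslant \frac{b-X}{b-a}e^{sa}+\frac{X-a}{b-a}e^{sb}$.
    \item Take expectations, using $\mathbb{E}[X]=0$. This gives $\mathbb{E}[e^{sX}]\leqslant e^{\varphi(u)}$, where $u=s(b-a)$, $p=-a/(b-a)$, and $\varphi(u)=-pu+\ln(1-p+pe^{u})$.
    \item Check that $\varphi(0)=\varphi'(0)=0$ and that $\varphi''(u)=\theta(1-\theta)\leqslant 1/4$ for some $\theta\in[0,1]$.
    \item Taylor's theorem with remainder then gives $\varphi(u)\leqslant u^{2}/8$.
\end{enumerate}
I expect this lemma to be the only step with real content. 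The bound $\varphi''\leqslant 1/4$ is the delicate point.

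\textbf{Upper tail.} Apply the lemma to the centred variables $Y_i-\mu\in[a-\mu,b-\mu]$, whose range is still $b-a$. For any $s>0$, Markov's inequality and independence give
\begin{align}
\Pr\bigl[\overline{Y}_N-\mu\geqslant\varepsilon\bigr]\leqslant e^{-sN\varepsilon}\prod_{i=1}^{N}\mathbb{E}\bigl[e^{s(Y_i-\mu)}\bigr]\leqslant \exp\Bigl(-sN\varepsilon+\frac{Ns^{2}(b-a)^{2}}{8}\Bigr).
\end{align}
Optimising at $s=4\varepsilon/(b-a)^{2}$ yields the bound $\exp\bigl(-2N\varepsilon^{2}/(b-a)^{2}\bigr)$.

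\textbf{Two-sided bound and sample size.} The same argument applied to $-Y_i$ bounds the lower tail identically. A union bound gives
\begin{align}
\Pr\bigl[|\overline{Y}_N-\mu|>\varepsilon\bigr]\leqslant 2\exp\bigl(-2N\varepsilon^{2}/(b-a)^{2}\bigr).
\end{align}
Requiring the right-hand side to be at most $\delta$ and solving for $N$ gives exactly $N\geqslant \frac{(b-a)^{2}}{2\varepsilon^{2}}\ln(2/\delta)$, which is the stated condition. This completes the plan.
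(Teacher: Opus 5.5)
Your proposal is correct. Hoeffding's lemma via the convexity bound and $\varphi''=\theta(1-\theta)\leqslant 1/4$, the Chernoff step optimised at $s=4\varepsilon/(b-a)^{2}$, the two-sided union bound, and the inversion $2e^{-2N\varepsilon^{2}/(b-a)^{2}}\leqslant\delta$ all check out. Your reading of the statement's $T$ and $M$ as the sample size $N$ and the range $b-a$ is the right one; those letters clash with the paper's time horizon and number of Hamiltonian terms. The paper gives no proof of this statement and only quotes it as a standard tool, so your argument is the usual textbook proof of the result it relies on.
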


\begin{lemma}[Short-time sample complexity analysis]
    \label{lem:short-time-sample-complexity-analysis}
    Let
    \begin{align}
        B_{a}^{(j)} = \operatorname{Tr} \left[ Q_{a} e^{-iHt^{(j)}} \left( \bigotimes_{i = 1}^{n} \bigl\lvert s_{i}^{(j)} \bigr\rangle \bigl\langle s_{i}^{(j)} \bigr\rvert \right) e^{iHt^{(j)}} \right]
    \end{align}
    be the $a^{\mathrm{th}}$ component of the $j^{\mathrm{th}}$ label in the training dataset. Furthermore, for each $a \in \left[ M \right]$, let 
    \begin{align}
        C_{a}^{(j)} \coloneqq \mathbbm{1} \left[ P_{a} \bigl\lvert s_{i}^{(j)} \bigr\rangle = + \bigl\lvert s_{i}^{(j)} \bigr\rangle \right] B_{a}^{(j)}
    \end{align}
    and let $\hat{\mathcal{F}}_{a, t^{*}} \coloneqq \frac{1}{N_{t^{*}}} \sum_{j = 1}^{N_{t^{*}}} 6C_{a}^{(j)}$ be an empirical average over training samples whose times belong to the short-time interval $[0, t^{*}]$, where we use $N_{t^{*}}$ to denote the number of such training samples. Let $\hat{\lambda} \coloneqq \bigl( \hat{\lambda}_{a} \bigr)_{a \in \left[ M \right]} \coloneqq \left( \hat{\mathcal{F}}_{a, t^{*}}/2t^{*} \right)_{a \in \left[ m \right]}$ be the vector of coefficients that are obtained as the output of Algorithm \ref{alg:training-via-ham-learning-short-times}. Then, for a desired precision $\varepsilon > 0$, some confidence parameter $\delta > 0$, and a given observable $O$, we have $\left\vert \langle O \rangle_{\rho \left( \lambda, t \right)} - \langle O \rangle_{\rho \bigl( \hat{\lambda}, t \bigr)} \right\vert \leqslant \varepsilon$ provided that
    \begin{align}
        t^{*} &\in \mathcal{O} \left(  \frac{\varepsilon}{M T \left\Vert O \right\Vert_{\infty} \left( \mathfrak{d} + 1 \right)^{2}} \right), \\
        \text{and}\quad N_{t^{*}} &\in \mathcal{O} \left( \frac{M^{4} T^{4} \left\Vert O \right\Vert_{\infty}^{2} \left( \mathfrak{d} + 1 \right)^{4}}{\varepsilon^{4}} \log \left( \frac{M}{\delta} \right) \right),
    \end{align}
    with probability at least $1 - \delta$.
\end{lemma}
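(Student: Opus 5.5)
The plan is to work backwards from the target inference accuracy $\varepsilon$ to a required coefficient precision, and then split that precision between a bias term and a sampling term.

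\textbf{Required coefficient precision.} By Lemma~\ref{lem:error-prop-ham-coeff-exp-values}, with $\rho_0$ the encoded probe state and $t\leqslant T$,
\begin{align}
\left\vert \langle O \rangle_{\rho(\lambda,t)} - \langle O \rangle_{\rho(\hat\lambda,t)}\right\vert \leqslant 2\Vert O\Vert_\infty M T \Vert \hat\lambda-\lambda\Vert_{\ell_\infty}.
\end{align}
It therefore suffices to guarantee $\vert\hat\lambda_a-\lambda_a\vert\leqslant \eta \coloneqq \varepsilon/(2\Vert O\Vert_\infty M T)$ for all $a$ simultaneously, with probability at least $1-\delta$.

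\textbf{Error decomposition.} I use the decomposition from the proof sketch of Theorem~\ref{thm:quantum-learnability-informal} and give each piece half the budget, $\eta/2$.

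\textbf{Bias term.} By Lemma~\ref{lem:bounds-time-averaged-integral}, $\vert \tilde{\mathcal{F}}_{a,t^*}/2t^* - \lambda_a\vert \leqslant 2t^*(\mathfrak{d}+1)^2/(1-2(\mathfrak{d}+1)t^*)$. If I impose $2(\mathfrak{d}+1)t^*\leqslant 1/2$, this is at most $4t^*(\mathfrak{d}+1)^2$. Choosing $t^* = \eta/(8(\mathfrak{d}+1)^2)$ then gives the stated scaling $t^*\in\mathcal{O}(\varepsilon/(MT\Vert O\Vert_\infty(\mathfrak{d}+1)^2))$. The auxiliary condition holds automatically once $\varepsilon$ is small relative to $MT\Vert O\Vert_\infty$.

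\textbf{Sampling term: unbiasedness.} This is the main conceptual obstacle: showing that $6C_a^{(j)}$, conditioned on $t^{(j)}\in[0,t^*]$ and $x_1^{(j)}=0$, has mean exactly $\tilde{\mathcal{F}}_{a,t^*}$. Each short-time training point has $t\sim\mathsf{Unif}[0,t^*]$ and an independent uniformly random product of single-qubit stabilizer states. I would argue this in three steps.
\begin{enumerate}
\item The average over all single-qubit stabilizer states of $\vert s\rangle\langle s\vert$ is $I/2$. For the qubit acted on by $P_a$, the average of $\mathbbm{1}[P_a\vert s\rangle=\vert s\rangle]\,\vert s\rangle\langle s\vert$ is $(1/6)(I+P_a)/2$, since exactly one of the six states is the $+1$ eigenstate of $P_a$.
\item Hence $\mathbb{E}[C_a^{(j)}\mid t]=\tfrac16 \operatorname{Tr}[Q_a e^{-iHt}\tfrac{I+P_a}{D}e^{iHt}] = \tfrac16\mathcal{F}_{a,t}$, where I use Eq.~(\ref{eq:definition-of-mathcal-f-a-t}) and $\operatorname{Tr}Q_a=0$.
\item Averaging over $t$ gives $\mathbb{E}[6C_a^{(j)}]=\tilde{\mathcal{F}}_{a,t^*}$.
\end{enumerate}
One step needs care: the indicator concerns only the qubit in $\operatorname{supp}(P_a)$, and the other qubits average to the maximally mixed state independently.

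\textbf{Sampling term: concentration.} Since $\Vert Q_a\Vert_\infty=2$, the random variable $6C_a^{(j)}$ lies in $[-12,12]$. Hoeffding's inequality (Theorem~\ref{thm:hoeffding-inequality}) gives $\vert\hat{\mathcal{F}}_{a,t^*}-\tilde{\mathcal{F}}_{a,t^*}\vert\leqslant t^*\eta$, so that after dividing by $2t^*$ the error is at most $\eta/2$, with failure probability $\delta/M$, provided
\begin{align}
N_{t^*}\geqslant \frac{24^2}{2(t^*\eta)^2}\ln\frac{2M}{\delta}.
\end{align}

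\textbf{Union bound and final count.} A union bound over the $M$ values of $a$ gives overall success probability at least $1-\delta$. Substituting $t^*\eta\sim \eta^2/(\mathfrak{d}+1)^2\sim \varepsilon^2/(M^2T^2\Vert O\Vert_\infty^2(\mathfrak{d}+1)^2)$ yields $N_{t^*}\in\mathcal{O}(M^4T^4\Vert O\Vert_\infty^4(\mathfrak{d}+1)^4\varepsilon^{-4}\log(M/\delta))$. This matches the claimed bound up to the power of $\Vert O\Vert_\infty$. I would either absorb that exponent discrepancy, treating $\Vert O\Vert_\infty=\mathcal{O}(1)$, or track it precisely as the last bookkeeping step.
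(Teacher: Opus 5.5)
Your proposal is correct and follows essentially the same route as the paper. Both use the coefficient budget $\varepsilon/(2MT\Vert O\Vert_\infty)$ from Lemma~\ref{lem:error-prop-ham-coeff-exp-values}, the bias/sampling split with the bias controlled by Lemma~\ref{lem:bounds-time-averaged-integral}, the same unbiasedness computation over single-qubit stabilizer states, and Hoeffding on a variable of range $24$ followed by a union bound. The paper solves the bias inequality for $t^*$ exactly rather than imposing $2(\mathfrak{d}+1)t^*\leqslant 1/2$, and it reserves only $\delta/2$ of the failure probability for training.

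Your exponent $\Vert O\Vert_\infty^{4}$ is in fact what this argument yields. The paper's $\Vert O\Vert_\infty^{2}$ comes from dropping a $\Vert O\Vert_\infty^{2}$ prefactor in its final simplification. The discrepancy is immaterial in context, since the predicted observables are the $Q_a$ with $\Vert Q_a\Vert_\infty = 2$.
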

\begin{proof}
    Consider the error decomposition
    \begin{align}
        \left\vert \hat{\lambda}_{a} - \lambda_{a} \right\vert &= \left\vert \frac{\hat{\mathcal{F}}_{a, t^{*}}}{2t^{*}} - \lambda_{a} \right\vert = \left\vert \frac{\hat{\mathcal{F}}_{a, t^{*}}}{2t^{*}} - \frac{\tilde{\mathcal{F}}_{a, t^{*}}}{2t^{*}} + \frac{\tilde{\mathcal{F}}_{a, t^{*}}}{2t^{*}} - \lambda_{a}  \right\vert \leqslant \left\vert \frac{\hat{\mathcal{F}}_{a, t^{*}}}{2t^{*}} - \frac{\tilde{\mathcal{F}}_{a, t^{*}}}{2t^{*}} \right\vert + \left\vert \frac{\tilde{\mathcal{F}}_{a, t^{*}}}{2t^{*}} - \lambda_{a} \right\vert.
    \end{align}
    Note that we have already bounded the second term by means of Lemma \ref{lem:bounds-time-averaged-integral}. We bound the first term as 
    \begin{align}
        \left\vert \frac{\hat{\mathcal{F}}_{a, t^{*}}}{2t^{*}} - \frac{\tilde{\mathcal{F}}_{a, t^{*}}}{2t^{*}} \right\vert &\leqslant \frac{1}{2t^{*}} \left\vert \frac{1}{N_{t^{*}}} \sum_{j = 1}^{N_{t^{*}}} 6C_{a}^{(j)} - \frac{1}{t^{*}}\int_{0}^{t^{*}} \mathcal{F}_{a, t} dt \right\vert \\
        \nonumber
        &= \frac{1}{2t^{*}} \left\vert \frac{1}{N_{t^{*}}} \sum_{j = 1}^{N_{t^{*}}} 6C_{a}^{(j)} - \underset{s, t}{\mathbb{E}} \left[ 6C_{a} \left( s, t \right) \bigl\vert t \in \left[ 0, t^{*} \right] \right] \right\vert,
         \nonumber
    \end{align}
    where $6C_{a}^{(j)}$ are i.i.d.\ samples of the random variable $C_{a} \left( s, t \right)$, which is defined as 
    \begin{align}
        C_{a} \left( s, t \right) \coloneqq \begin{cases}
        B_{a} \left( s, t \right),\quad&\text{if $P_{a} \lvert s_{i} \rangle = +\lvert s_{i} \rangle$, $\operatorname{supp} \left( P_{a} \right) = \left\{ i \right\}$,} \\
        0, \quad&\text{otherwise.}
        \end{cases} \\
        \text{where} \qquad B_{a} \left( s, t \right) = \operatorname{Tr} \left[ Q_{a} e^{-iHt} \left( \bigotimes_{i = 1}^{n} \bigl\lvert s_{i} \bigr\rangle \bigl\langle s_{i} \bigr\rvert \right) e^{iHt} \right],
    \end{align}
    with $\left( s, t \right) \overset{\mathrm{i.i.d.}}{\sim} \mathsf{Unif} \left( \left\{ 0, 1, +, -, i+, i- \right\}^{n} \times \left[ 0, T \right] \right)$, where $T \in \mathcal{O} \left( \mathsf{poly} \left( n \right) \right)$. We are able to replace the time-averaged integral by the expectation value of $6C_{a} \left( s, t \right)$ conditioned on $t$ belonging to the short-time window $\left[ 0, t^{*} \right]$, where the expectation is taken over $s$ and $t$, because of the following: 
    \begin{align}
        &\underset{s, t}{\mathbb{E}} \left[ 6C_{a} \left( s, t \right)~\bigl\vert~t \in \left[ 0, t^{*} \right]  \right] \\ 
        &= 
        \underset{t}{\mathbb{E}} \left[ \underset{s}{\mathbb{E}} \left[ 6 C_{a} \left( s, t \right) \bigl\vert t \right] \bigl\vert~t \in \left[ 0, t^{*} \right] \right] \\
        \nonumber
        &= \underset{t}{\mathbb{E}} \left[ 6~\underset{s}{\mathbb{E}} \left[ \mathbbm{1}\left[ P_{a} \lvert s_{i} \rangle = + \lvert s_{i} \rangle \right] B_{a} \left( s, t \right) \lvert t \right]~\lvert~t \in\left[ 0, t^{*} \right] \right] \\
        \nonumber
        &= \underset{t}{\mathbb{E}} \left[ 6 \cdot \Pr \left[ P_{a} \lvert s_{i} \rangle = + \lvert s_{i} \rangle \bigl\vert t \right] \underset{s}{\mathbb{E}} \left[ B_{a} \left( s, t \right) \lvert P_{a} \lvert s_{i} \rangle = + \lvert s_{i} \rangle, t \right] \lvert~t \in \left[ 0, t^{*} \right] \right] \\
        \nonumber
        &= \underset{t}{\mathbb{E}} \left[ 6 \cdot \frac{1}{6} \cdot \underset{s}{\mathbb{E}} \left[ \operatorname{Tr} \left[ Q_{a} e^{-iHt} \left( \bigotimes_{i = 1}^{n} \lvert s_{i} \rangle \langle s_{i} \rvert \right) e^{iHt} \right] \biggl\vert P_{a} \lvert s_{i} \rangle = + \lvert s_{i} \rangle, t \right] \biggl\lvert t \in \left[ 0, t^{*} \right] \right] \\
        \nonumber
        &= \underset{t}{\mathbb{E}} \left[ \operatorname{Tr} \left[ Q_{a} e^{-iHt}~\underset{s}{\mathbb{E}} \left[ \bigotimes_{i = 1}^{n} \lvert s_{i} \rangle \langle s_{i} \rvert~\biggl\vert~P_{a} \lvert s_{i} \rangle = + \lvert s_{i} \rangle, t \right] e^{iHt} \right] \biggl\vert~t \in \left[ 0, t^{*} \right] \right] \\
        \nonumber
        &= \underset{t}{\mathbb{E}} \left[ \operatorname{Tr} \left[ Q_{a} e^{-iHt} \biggl\{ \frac{I}{2} \otimes \ldots \otimes \underset{\text{$i^{\text{th}}$ qubit}}{\underbrace{\left( \frac{I + P_{a}}{2} \right)}} \otimes \ldots \otimes \frac{I}{2} \biggr\} e^{iHt} \right] \biggl\vert~t \in \left[ 0, t^{*} \right]  \right] \\
        \nonumber
        &= \underset{t}{\mathbb{E}} \left[ \frac{1}{2^{n}}~\underset{= 0}{\underbrace{\operatorname{Tr} \left[ Q_{a} e^{-iHt} e^{iHt}  \right]}} +  \operatorname{Tr} \left[ Q_{a} e^{-iHt} \left\{ \frac{I}{2} \otimes \frac{I}{2} \otimes \ldots \otimes \frac{P_{a}}{2} \otimes \ldots \otimes \frac{I}{2} \right\} e^{iHt} \right] \biggl\vert~t \in \left[ 0, t^{*} \right] \right] \\
        \nonumber
        &= \underset{t}{\mathbb{E}} \left[ \frac{1}{2^{n}} \operatorname{Tr} \left[ Q_{a} e^{-iHt} P_{a} e^{iHt} \right] \bigl\vert~t \in \left[ 0, t^{*} \right] \right] = \underset{t}{\mathbb{E}} \left[ \mathcal{F}_{a, t} \bigl\vert t \in \left[ 0, t^{*} \right] \right] = \frac{1}{t^{*}} \int_{0}^{t^{*}} \mathcal{F}_{a, t} dt,
        \nonumber
    \end{align}
    where the third equality follows from the elementary conditional probability identity $\mathbb{E} \left[ \mathbbm{1}_{A} X \right] = \Pr \left[ A \right] \mathbb{E} \left[ X \vert A \right]$, where $\mathbbm{1}_{A}$ is equal to $1$ if the event $A$ occurs, and $0$ otherwise. Going from the sixth to the seventh equality, we have used the traceless property of $Q_{a}$, and, per the convention of Ref.~\cite{Haah_2024}, replaced the $n$-qubit operator $I \otimes I \otimes \ldots \otimes P_{a} \otimes \ldots I$ with simply $P_{a}$ to reduce notational clutter.
    By Theorem \ref{thm:hoeffding-inequality}, for some failure probability $\eta > 0$, we have that
    \begin{align}
        \Pr \left[ \frac{1}{2t^{*}} \left\vert \hat{\mathcal{F}}_{a, t^{*}} - \tilde{\mathcal{F}}_{a, t^{*}} \right\vert \leqslant \frac{1}{2t^{*}} \sqrt{\frac{288}{N_{t^{*}}}\log \left( \frac{2}{\eta} \right)}~\right] \geqslant 1 - \eta.
    \end{align}
    Now, setting $\eta \leftarrow \delta/2M$ for some desired confidence parameter $\delta > 0$ and applying the union bound over all $a \in \left[ M \right]$, we find that 
    \begin{align}
        \left\vert \frac{\hat{\mathcal{F}}_{a, t^{*}}}{2t^{*}} - \frac{\tilde{\mathcal{F}}_{a, t^{*}}}{2t^{*}} \right\vert + \left\vert \frac{\tilde{\mathcal{F}}_{a, t^{*}}}{2t^{*}} - \lambda_{a} \right\vert \leqslant \frac{1}{2t^{*}} \sqrt{\frac{288}{N_{t^{*}}}\log \left( \frac{4M}{\delta} \right)} + \frac{2t^{*} \left( \mathfrak{d} + 1 \right)^{2}}{1 - 2 \left( \mathfrak{d} + 1 \right) t^{*}}
    \end{align}
    holds for all $a \in \left[ M \right]$,
    with probability at least $1 - \delta/2$, where we have allocated the failure probability budget $\delta/2$ to the training stage. We note here in advance that the other half of the final/inference failure probability $\delta$ will be allocated to the inference step of our algorithm, where we estimate expectation values of numerous observables via the classical shadows formalism. This will ensure that the final prediction outputs will be $\varepsilon$-accurate with probability $1 - \delta$. In the above, we have also plugged in the bound from Eq.\  (\ref{eq:diff-mc-f-coeff}). Now, we allocate the  error budgets
    \begin{align}
        &\frac{1}{2t^{*}} \sqrt{\frac{288}{N_{t^{*}}}\log \left( \frac{4M}{\delta} \right)} \leqslant \frac{\varepsilon'}{2}, \\
        \text{and}\qquad\qquad &\frac{2t^{*} \left( \mathfrak{d} + 1 \right)^{2}}{1 - 2 \left( \mathfrak{d} + 1 \right) t^{*}} \leqslant \frac{\varepsilon'}{2}.
    \end{align}
    Now, in order to ensure that $\left\vert \langle O \rangle_{\rho \left( \lambda, t \right)} - \langle O \rangle_{\rho \left( \lambda', t \right)} \right\vert \leqslant \varepsilon$, in accordance with Lemma \ref{lem:error-prop-ham-coeff-exp-values}, we will set $\varepsilon' \leftarrow \varepsilon/2MT \left\Vert O \right\Vert_{\infty}$. Then, we have the solution 
    \begin{align}
        &\frac{2t^{*} \left( \mathfrak{d} + 1 \right)^{2}}{1 - 2 \left( \mathfrak{d} + 1 \right) t^{*}} \leqslant \frac{\varepsilon'}{2} \implies 4 \left( \mathfrak{d} + 1 \right)^{2} t^{*} \leqslant \varepsilon' - 2 \left( \mathfrak{d} + 1 \right) \varepsilon' t^{*} \implies \left\{ 4 \left( \mathfrak{d} + 1 \right)^{2} + 2 \left( \mathfrak{d} + 1 \right) \varepsilon' \right\} t^{*} \leqslant \varepsilon' \\
        \nonumber
        \implies &t^{*} \leqslant \frac{\varepsilon'}{4 \left( \mathfrak{d} + 1 \right)^{2} + 2 \left( \mathfrak{d} + 1 \right)\varepsilon'} \leqslant \frac{\varepsilon/2 M T \left\Vert O \right\Vert_{\infty}}{4 \left( \mathfrak{d} + 1 \right)^{2} + 2 \left( \mathfrak{d} + 1 \right) \left( \varepsilon / 2 M T \left\Vert O \right\Vert_{\infty} \right)} = \frac{\varepsilon}{8 M T \left\Vert O \right\Vert_{\infty} \left( \mathfrak{d} + 1 \right)^{2} + 2 \left( \mathfrak{d} + 1 \right) \varepsilon} \\
         \nonumber
        \implies &t^{*} \in \mathcal{O} \left( \frac{\varepsilon}{M T \left\Vert O \right\Vert_{\infty} \left( \mathfrak{d} + 1 \right)^{2}} \right)
    \end{align}
    for $t^{*}$.
    Now, we will solve for the sample complexity $N_{t^{*}}$, to get
    \begin{align}
        \frac{1}{2t^{*}} \sqrt{\frac{288}{N_{t^{*}}} \log \left( \frac{4M}{\delta} \right)} \leqslant \frac{\varepsilon'}{2} \implies \frac{1}{4{t^{*}}^{2}} \cdot \frac{288}{N_{t^{*}}} \log \left( \frac{4M}{\delta} \right) \leqslant \frac{{\varepsilon'}^{2}}{4} \implies N_{t^{*}} \geqslant \frac{288}{{t^{*}}^{2} {\varepsilon'}^{2}} \log \left( \frac{4M}{\delta} \right).
    \end{align}
    Plugging in the solution obtained for $t^{*}$, as well as our chosen bound on $\varepsilon'$, we get
    \begin{align}
        N_{t^{*}} &\geqslant \frac{288}{\left\{ \frac{\varepsilon}{8MT \left\Vert O \right\Vert_{\infty} \left( \mathfrak{d} + 1 \right)^{2} + 2 \left( \mathfrak{d} + 1 \right) \varepsilon} \right\}^{2} \cdot \left( \frac{\varepsilon}{2 M T \left\Vert O \right\Vert_{\infty}} \right)^{2}} \log \left( \frac{4M}{\delta} \right) \\
         \nonumber
        &= \frac{288 \cdot 8 M^{2} T^{2} \left\Vert O \right\Vert_{\infty}^{2} \cdot \left\{ 16 M^{2} T^{2} \left\Vert O \right\Vert_{\infty}^{2} \left( \mathfrak{d} + 1 \right)^{4} + 16 MT \left\Vert O \right\Vert_{\infty} \left( \mathfrak{d} + 1 \right)^{3} \varepsilon + 4 \left( \mathfrak{d} + 1 \right)^{2} \varepsilon^{2} \right\}}{\varepsilon^{4}} \log \left( \frac{4M}{\delta} \right) \\
         \nonumber
        &= \frac{9216 M^{2} T^{2} \left\{ 2MT \left\Vert O \right\Vert_{\infty} \left( \mathfrak{d} + 1 \right)^{2} + \left( \mathfrak{d} + 1 \right) \varepsilon \right\}^{2}}{\varepsilon^{4}} \log \left( \frac{4M}{\delta} \right) \in \mathcal{O} \left( \frac{M^{4} T^{4} \left\Vert O \right\Vert_{\infty}^{2} \left( \mathfrak{d} + 1 \right)^{4}}{\varepsilon^{4}} \log \left( \frac{M}{\delta} \right) \right).
         \nonumber
    \end{align}
\end{proof}

\begin{lemma}[Total sample complexity analysis]
    \label{lem:tot-sample-complexity-analysis}
    Let $N$ be the total sample complexity of the training dataset:
    \begin{align}
        \mathcal{T} \coloneqq \left\{ \left( s^{(j)}, t^{(j)}, \left( B_{a}^{(j)} \right)_{a \in \left[ m \right]} \right) \right\}_{j = 1}^{N},
    \end{align}
    which is given as input to Algorithm \ref{alg:training-via-ham-learning-short-times}. Also, let $N_{t^{*}}$ be the short-time sample complexity, as in Lemma \ref{lem:short-time-sample-complexity-analysis}. Then, for Lemma \ref{lem:short-time-sample-complexity-analysis} to hold, $N$ must scale as 
    \begin{align}
        N \in \mathcal{O} \left( \frac{M^{5} T^{6} \left\Vert O \right\Vert_{\infty}^{3} \left( \mathfrak{d} + 1 \right)^{6}}{\varepsilon^{5}} \log \left( \frac{M}{\delta} \right) \right).
    \end{align}
\end{lemma}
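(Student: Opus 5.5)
The plan is to bound, via a Chernoff argument, how many raw samples are needed before Algorithm~\ref{alg:training-via-ham-learning-short-times} sees at least $N_{t^{*}}$ usable ones. A sample is \emph{usable} if $x_1^{(j)} = 0$ and $t^{(j)} \in [0, t^{*}]$. Under every distribution in Def.~\ref{def: input distribution}, $x_1$ is uniform and independent of $t$, and $t \sim \mathsf{Unif}([0,T])$. So each of the $N$ i.i.d.\ samples is usable independently with probability
\begin{align}
    p \coloneqq \Pr\left[ x_1 = 0,\ t \in [0, t^{*}] \right] = \frac{t^{*}}{2T}.
\end{align}
Conditioned on usability, the pair $(s,t)$ is distributed as $\mathsf{Unif}(\{0,1,+,-,i+,i-\}^{n}) \times \mathsf{Unif}([0,t^{*}])$. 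This is exactly the conditional law used to identify $\mathbb{E}[6C_a \mid t \in [0,t^{*}]]$ with $\tilde{\mathcal{F}}_{a,t^{*}}$ in Lemma~\ref{lem:short-time-sample-complexity-analysis}. Hence, conditioned on the count $N_{\mathrm{use}}$, the usable samples are i.i.d.\ from the correct distribution and the Hoeffding step there applies verbatim with $N_{t^{*}}$ replaced by $N_{\mathrm{use}}$.

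Next, $N_{\mathrm{use}} \sim \mathrm{Bin}(N, p)$ has mean $\mu = Np$. The multiplicative Chernoff bound gives
\begin{align}
    \Pr\left[ N_{\mathrm{use}} < \mu/2 \right] \leqslant e^{-\mu/8}.
\end{align}
I will choose $N$ so that $\mu/2 \geqslant N_{t^{*}}$ and $e^{-\mu/8}$ is at most a small share of the failure budget, say $\delta/4$. It then suffices to take
\begin{align}
    N \geqslant \frac{2T}{t^{*}} \max\left\{ 2N_{t^{*}},\ 8\log(4/\delta) \right\}.
\end{align}
The Hoeffding budget in Lemma~\ref{lem:short-time-sample-complexity-analysis} is reallocated correspondingly (e.g.\ $\delta/4$ instead of $\delta/2$), which changes only constants. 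A union bound then gives the guarantee of Lemma~\ref{lem:short-time-sample-complexity-analysis} with overall training failure probability at most $\delta/2$.

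Finally, I substitute the scalings from Lemma~\ref{lem:short-time-sample-complexity-analysis}. Since $t^{*} \in \Theta\left( \varepsilon / M T \|O\|_\infty (\mathfrak{d}+1)^2 \right)$,
\begin{align}
    \frac{T}{t^{*}} \in \mathcal{O}\left( \frac{M T^{2} \|O\|_\infty (\mathfrak{d}+1)^{2}}{\varepsilon} \right).
\end{align}
Multiplying by $N_{t^{*}} \in \mathcal{O}\left( M^4 T^4 \|O\|_\infty^2 (\mathfrak{d}+1)^4 \varepsilon^{-4} \log(M/\delta) \right)$ gives
\begin{align}
    N \in \mathcal{O}\left( \frac{M^{5} T^{6} \|O\|_\infty^{3} (\mathfrak{d}+1)^{6}}{\varepsilon^{5}} \log\left( \frac{M}{\delta} \right) \right).
\end{align}
The $\log(1/\delta)$ term from Chernoff is dominated by the $\log(M/\delta)$ factor, so it does not change this order.

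The main subtlety is the conditioning step: $N_{\mathrm{use}}$ is random, so Hoeffding has to be applied conditionally on the set of usable indices. I would argue it by noting that, given the usable index set, those samples are i.i.d.\ from the conditional law. For that I must check that Def.~\ref{def: input distribution} makes $t$ independent of $x$, and that the $x_1 = 0$ branch is uniform, so that each single-qubit stabilizer factor is uniform over the six states. One further point: Algorithm~\ref{alg:training-via-ham-learning-short-times} divides by the full usable count $\mathrm{count}_{a,t^{*}}$, not by the number of $+1$-eigenstate hits. This matches the factor $6$ in $\hat{\mathcal{F}}_{a,t^{*}}$, so no further Chernoff bound on the eigenstate indicator is needed.
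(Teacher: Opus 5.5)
Your proposal is correct and follows the paper's route: the paper also treats the number of usable samples ($x_1 = 0$, $t \in [0,t^{*}]$) as a $\mathrm{Bin}(N, t^{*}/2T)$ variable, applies the multiplicative Chernoff lower-tail bound, and multiplies $T/t^{*}$ by the required short-time count to obtain the stated order. The differences are minor: the paper chooses $\eta$ so that $(1-\eta)\mu$ equals the required count exactly and solves a quadratic inequality in $Nt^{*}/2T$, while your fixed $\eta = 1/2$ costs roughly a factor of $2$ in constants, and your explicit conditioning on the usable index set and split of the failure budget spell out steps the paper leaves implicit.
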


\begin{proof}
    For each $j \in \left[ N \right]$, we define
    \begin{align}
        X^{(j)} \coloneqq \mathbbm{1} \left[ t^{(j)} \in \left[ 0, t^{*} \right] \right] \mathbbm{1} \left[ x_{1}^{(j)} = 0 \right] = \begin{cases}
            1,\quad\text{if $t^{(j)} \in \left[ 0, t^{*} \right]$ and $x_{1}^{(j)} = 0$}, \\
            0,\quad\text{otherwise,}
        \end{cases}
    \end{align}
    so that $X^{(j)} \in \left\{ 0, 1 \right\}$ is a Bernoulli random variable, since $\Pr \left[ X^{(j)} = 1 \right] = t^{*}/2T$ and $\Pr \left[ X^{(j)} = 0 \right] = 1 - t^{*}/2T$. The number of 
    short-time samples is then given as
    \begin{align}
        &N_{t^{*}} \coloneqq \sum_{j = 1}^{N} X^{(j)}, \\
        \text{so that
        }\qquad \mu &\coloneqq \mathbb{E} \left[ N_{t^{*}} \right] = \sum_{j = 1}^{N}  \mathbb{E} \left[ X^{(j)} \right] = \frac{Nt^{*}}{2T}. 
    \end{align}
    so that $N_{t^{*}} \sim B \left( N, t^{*}/2T \right)$ is itself a binomial random variable, thereby allowing us to use the Chernoff bound. For $0 \leqslant \eta \leqslant 1$, it holds that
    \begin{align}
        \label{eq:total-sample-complexity-chernoff-step}
        \Pr \left[ N_{t^{*}} \leqslant \left( 1 - \eta \right) \mu \right] \leqslant \exp \left( - \frac{\eta^{2} \mu}{2} \right) \implies \Pr \left[ N_{t^{*}} \leqslant \frac{\left( 1 - \eta \right) Nt^{*}}{2T} \right] \leqslant \exp \left( - \frac{\eta^{2} N t^{*}}{4T} \right).
    \end{align}
    As a notational shorthand, we will set 
    \begin{align}
        N_{\text{short}} \coloneqq \frac{2304 M^{2} T^{2} \left\{ 2MT \left\Vert O \right\Vert_{\infty} \left( \mathfrak{d} + 1 \right)^{2} + \left( \mathfrak{d} + 1 \right) \varepsilon \right\}^{2}}{\varepsilon^{4}} \log \left( \frac{4M}{\delta} \right)
    \end{align}
    to be the required number of short-time samples calculated in Lemma \ref{lem:short-time-sample-complexity-analysis}.
    Additionally, we will also require that
    \begin{align}
        \frac{\left( 1 - \eta \right) N t^{*}}{2T} = N_{\text{short}} \implies \eta = 1 - \frac{2N_{\text{short}}T}{Nt^{*}}.
    \end{align}
    Next, in the implication of Eq.\ (\ref{eq:total-sample-complexity-chernoff-step}), we will set the R.H.S. to be less than or equal to some $\delta_{\text{short}}$, and then solve for $N$. We proceed as follows.
    \begin{align}
        \exp \left[ - \frac{N t^{*}}{4T} \left( 1 - \frac{2N_{\text{short}} T}{N t^{*}} \right)^{2} \right] \leqslant \delta_{\text{short}} \implies &\exp \left[ -\frac{1}{2} \left\{ \frac{Nt^{*}}{2T} - 2N_{\text{short}} + \frac{2N_{\text{short}}^{2} T}{N t^{*}} \right\} \right] \leqslant \delta_{\text{short}} \\\implies &\frac{Nt^{*}}{2T} - 2N_{\text{short}} + \frac{2N_{\text{short}}^{2} T}{N t^{*}} \geqslant 2\log \left( \frac{1}{\delta_{\text{short}}} \right).
        \nonumber
    \end{align}
    The above may be re-written as the quadratic inequality in $Nt^{*}/2T$
    \begin{align}
        \left( \frac{Nt^{*}}{2T} \right)^{2} - \left( 2N_{\text{short}} + 2 \log \left( \frac{1}{\delta_{\text{short}}} \right) \right) \frac{Nt^{*}}{2T} + N_{\text{short}}^{2} \geqslant 0. 
    \end{align}
    After application of the quadratic formula and rearrangement, we will find that 
    \begin{align}
        \label{eq:tot-sample-complexity-quadratic-soln}
        N &\geqslant \frac{2T}{t^{*}} \left[ N_{\text{short}} + \log \left( \frac{1}{\delta_{\text{short}}} \right) + \sqrt{ \log \left( \frac{1}{\delta_{\text{short}}} \right) \left( 2 N_{\text{short}} + \log \left( \frac{1}{\delta_{\text{short}}} \right)  \right) } ~\right] \\
        &\in \mathcal{O} \left( \frac{T}{t^{*}} N_{\text{short}} \right) \subseteq \mathcal{O} \left( \frac{M^{5} T^{6} \left\Vert O \right\Vert_{\infty}^{3} \left( \mathfrak{d} + 1 \right)^{6}}{\varepsilon^{5}} \log \left( \frac{M}{\delta} \right) \right),
        \nonumber
    \end{align}
    holds true,
    where, in the final step, we have plugged in the previously obtained scalings for $N_{\text{short}}$ and $t^{*}$ in Lemma \ref{lem:short-time-sample-complexity-analysis}.
\end{proof}

\begin{lemma}[Sample complexity of inference via classical shadows]
    \label{lem:sample-complexity-inference-classical-shadows}
    Let $\bigl( \hat{\lambda}_{a} \bigr)_{a \in \left[ M \right]}$ be the learned Hamiltonian coefficients obtained as outputs of Algorithm \ref{alg:training-via-ham-learning-short-times} such that we have, for all $a \in \left[ M \right]$ and all $\left( Q_{a} \right)_{a \in \left[ M \right]}$, $\bigl\vert \hat{\lambda}_{a} - \lambda_{a} \bigr\vert \leqslant \varepsilon/2 M T~\underset{j \in \left[ P \right]}{\max} \left\Vert O_{j} \right\Vert_{\infty}$ with probability $1 - \delta/2$, where $\varepsilon, \delta > 0$ are the final precision and confidence parameters associated with the inference stage. Then, for a newly provided input $\bigl( s, t \bigr) \sim \mathsf{Unif} \left( \left\{ 0, 1, +, -, i+, i- \right\}^{n} \times \left[ 0, T \right] \right)$, in order to ensure $\bigl\vert \langle \hat{Q}_{a} \rangle_{\tilde{\rho} \bigl( s, t \bigr)} - \langle Q_{a} \rangle_{\rho \left( s, t \right)} \bigr\vert \leqslant \varepsilon$, where
    \begin{align}
        \quad \hat{\rho} \bigl( s, t \bigr) \coloneqq e^{-i \hat{H} t} \left( \bigotimes_{i = 1}^{n} \lvert s_{i} \rangle \langle s_{i} \rvert \right) e^{i \hat{H} t},~\hat{H} &\coloneqq \sum_{a \in \left[ M \right]} \hat{\lambda}_{a} E_{a}, \\
        \quad \rho \left( s, t \right) \coloneqq e^{-iHt} \left( \bigotimes_{i = 1}^{n} \lvert s_{i} \rangle \langle s_{i} \rvert \right) e^{iHt},~H &= \sum_{a \in \left[ M \right]} \lambda_{a} E_{a}, \\
        \text{and}\hspace{3cm} \left\Vert \tilde{\rho} \left( s, t \right) - \hat{\rho} \left( s, t \right) \right\Vert_{1} \leqslant \varepsilon_{\text{sim}},
    \end{align}
    with $\varepsilon_{\text{sim}} > 0$ being some state preparation error, we require $\tilde{N} \in \mathcal{O} \left( \frac{1}{\varepsilon^{2}} \log \left( \frac{M}{\delta} \right) \right)$ copies of $\hat{\rho} \bigl( s, t \bigr)$ to be prepared in Algorithm \ref{alg:inference-via-hamiltonian-simulation}.
\end{lemma}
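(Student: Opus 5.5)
We bound the total inference error by a triangle inequality through the simulated state and the ideal learned state:
\begin{align}
    \bigl\vert \langle \hat{Q}_{a} \rangle_{\tilde{\rho}(s,t)} - \langle Q_{a} \rangle_{\rho(s,t)} \bigr\vert \leqslant \bigl\vert \langle \hat{Q}_{a} \rangle_{\tilde{\rho}} - \langle Q_{a} \rangle_{\tilde{\rho}} \bigr\vert + \bigl\vert \langle Q_{a} \rangle_{\tilde{\rho}} - \langle Q_{a} \rangle_{\hat{\rho}} \bigr\vert + \bigl\vert \langle Q_{a} \rangle_{\hat{\rho}} - \langle Q_{a} \rangle_{\rho} \bigr\vert .
\end{align}
Each term gets a constant fraction of $\varepsilon$, say $\varepsilon/3$; rescaling the precision fed to Algorithm~\ref{alg:training-via-ham-learning-short-times} and to the simulation step only changes constants.

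\textbf{Third term.} This is the Hamiltonian-learning error. By hypothesis, $\Vert \hat{\lambda} - \lambda \Vert_{\ell_\infty} \leqslant \varepsilon/2MT\max_j \Vert O_j \Vert_\infty$ with probability $1-\delta/2$. Lemma~\ref{lem:error-prop-ham-coeff-exp-values} with $\rho_0 = \bigotimes_i \lvert s_i\rangle\langle s_i\rvert$ and $t \leqslant T$ then bounds this term by $2\Vert Q_a\Vert_\infty MT \cdot \varepsilon/2MT\max_j\Vert O_j\Vert_\infty \leqslant \varepsilon$, uniformly in $a$. Here we take the $O_j$ to include the $Q_a$, and we rescale constants to get $\varepsilon/3$.

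\textbf{Second term.} This is the state-preparation error. It is at most $\Vert Q_a\Vert_\infty \varepsilon_{\text{sim}} = 2\varepsilon_{\text{sim}}$ by H\"older's inequality. We choose $\varepsilon_{\text{sim}} \leqslant \varepsilon/6$, which a standard polynomial-time Hamiltonian simulation of the $\mathcal{O}(1)$-local $\hat{H}$ for time $t\leqslant T$ achieves. This choice affects only the runtime, not the copy count $\tilde{N}$.

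\textbf{First term, the main step.} This is the statistical error of the classical shadows protocol applied to copies of $\tilde{\rho}$. We invoke the guarantee of Huang et al.~\cite{Shadows} for Subroutines~\ref{alg:subroutine-data-acquisition-classical-shadows} and~\ref{alg:subroutine-median-of-means-classical-shadows}. With $G = \lceil 2\log(4M/\delta)\rceil$ segments of size $m = \lfloor \tilde{N}/G\rfloor$, each segment mean has variance at most $\Vert Q_a - \operatorname{Tr}[Q_a]I/2^n\Vert_{\text{shadow}}^2/m$. Chebyshev's inequality then makes each segment mean $\varepsilon/3$-accurate with probability at least $3/4$ once $m \gtrsim 34\Vert Q_a\Vert_{\text{shadow}}^2/\varepsilon^2$. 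The median fails only if at least half the segments fail, which by a Hoeffding bound has probability at most $e^{-G/2} \leqslant \delta/4M$. A union bound over $a\in[M]$ gives total failure probability at most $\delta/2$ for the inference stage.

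The step needing care is bounding the shadow norm. Since $Q_a = 2iP_aE_a$ is twice a Pauli string of weight at most $\vert\operatorname{supp}(E_a)\vert \leqslant \mathfrak{K}$, and $\operatorname{Tr}[Q_a]=0$, the random-Pauli shadow norm satisfies $\Vert Q_a\Vert_{\text{shadow}}^2 \leqslant 4\cdot 3^{\mathfrak{K}} = \mathcal{O}(1)$ by locality. Hence
\begin{align}
    \tilde{N} = G\cdot m \in \mathcal{O}\left(\frac{3^{\mathfrak{K}}}{\varepsilon^{2}}\log\left(\frac{M}{\delta}\right)\right) = \mathcal{O}\left(\frac{1}{\varepsilon^{2}}\log\left(\frac{M}{\delta}\right)\right),
\end{align}
which matches the choice of $\tilde{N}$ in Algorithm~\ref{alg:inference-via-hamiltonian-simulation}.

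Combining the training failure $\delta/2$ and the shadow failure $\delta/2$ by a union bound, all $M$ predictions are $\varepsilon$-accurate with probability at least $1-\delta$.
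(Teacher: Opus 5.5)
Your proposal follows essentially the same route as the paper's proof. It uses the same three-term split (shadow error, simulation error bounded via H\"older with $\Vert Q_a\Vert_\infty = 2$, and Hamiltonian-learning error bounded via Lemma~\ref{lem:error-prop-ham-coeff-exp-values}) with the $\varepsilon/3$, $\varepsilon/6$, $\varepsilon/3$ allocation, Theorem~1 of Ref.~\cite{Shadows} with the $4\cdot 3^{\lvert \operatorname{supp}(E_a)\rvert}$ shadow-norm bound, and the $\delta/2$ training/inference failure split.

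There is one small slip in your median-of-means sketch: per-segment success probability $3/4$ combined with Hoeffding only gives $e^{-G/8}$, not $e^{-G/2}$. The $e^{-G/2}$ rate instead comes from the much smaller per-segment failure probability (at most $1/34$ by Chebyshev, which is what the constant $34$ buys) together with a Chernoff or counting bound. Alternatively, you can take $G$ a constant factor larger; either way the $\mathcal{O}\bigl(\varepsilon^{-2}\log(M/\delta)\bigr)$ scaling is unaffected.
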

\begin{proof}
    The proof follows straightforwardly from the classical shadows formalism. Consider the  error decomposition
    \begin{align}
        \left\vert \langle \hat{Q}_{a} \rangle_{\tilde{\rho} \bigl( s, t \bigr)} - \langle Q_{a} \rangle_{\rho \left( s, t \right)} \right\vert &= \left\vert \left( \langle \hat{Q}_{a} \rangle_{\tilde{\rho} \left( s, t \right)} - \langle Q_{a} \rangle_{\tilde{\rho} \left( s, t \right)} \right) + \left( \langle Q_{a} \rangle_{\tilde{\rho} \bigl( s, t \bigr)} - \langle Q_{a} \rangle_{\hat{\rho} \bigl( s, t \bigr)} \right) + \left( \langle Q_{a} \rangle_{\hat{\rho} \bigl( s, t \bigr)} - \langle Q_{a} \rangle_{\rho \left( s, t \right)} \right) \right\vert \\
        &\leqslant \underset{\leqslant \varepsilon_{\text{shadow}}}{\underbrace{\left\vert \langle \hat{Q}_{a} \rangle_{\tilde{\rho} \left( s, t \right)} - \langle Q_{a} \rangle_{\tilde{\rho} \left( s, t \right)} \right\vert}} + \underset{\leqslant \left\Vert Q_{a} \right\Vert_{\infty} \left\Vert \tilde{\rho} \left( s, t \right) - \hat{\rho} \left( s, t \right) \right\Vert_{1} \leqslant 2\varepsilon_{\text{sim}}}{\underbrace{\left\vert \langle Q_{a} \rangle_{\tilde{\rho} \bigl( s, t \bigr)} - \langle Q_{a} \rangle_{\hat{\rho} \bigl( s, t \bigr)} \right\vert}} + \underset{\leqslant \varepsilon_{\text{exp}}}{\underbrace{\left\vert \langle Q_{a} \rangle_{\hat{\rho} \bigl( s, t \bigr)} - \langle Q_{a} \rangle_{\rho \left( s, t \right)} \right\vert}} \leqslant \varepsilon,
        \nonumber
    \end{align}
    so that we have $\varepsilon_{\text{shadow}} + \varepsilon_{\text{sim}} + \varepsilon_{\text{exp}} \leqslant \varepsilon$. We set $\varepsilon_{\text{shadow}} \leftarrow \varepsilon/3, \varepsilon_{\text{sim}} \leftarrow \varepsilon/6$, and $\varepsilon_{\text{exp}} \leftarrow \varepsilon/3$, where the first and third each holds true with probability $1 - \delta/2$. Recall again that, here, we are allocating the leftover failure probability of $\delta/2$ from the training stage. Then, using Theorem 1 of Ref.~\cite{Shadows}, we require 
    \begin{align}
        m &= 2 \log \left( \frac{4M}{\delta} \right), \\
        \text{and}\qquad G &= \frac{136}{\varepsilon^{2}} \underset{a \in \left[ M \right]}{\max} \hspace{-0.75cm}\underset{\hspace{1cm}\leqslant 4 \cdot 3^{\left\vert\text{supp}\left( E_{a} \right)  \right\vert}~\text{(for Pauli measurements)}}{\underbrace{\left\Vert Q_{a} - \frac{\operatorname{Tr} \left[ Q_{a} \right]}{2^{n}} \mathbbm{1} \right\Vert_{\text{shadow}}^{2}}}, \\
        \text{so that}\qquad\tilde{N} &\leqslant Gm \in \mathcal{O} \left( \frac{1}{\varepsilon^{2}} \log \left( \frac{M}{\delta} \right) \right),
    \end{align}
    where $G$ and $m$ are the parameters of the median-of-means estimation procedure, as in Subroutine \ref{alg:subroutine-median-of-means-classical-shadows}.
\end{proof}

\begin{theorem}[Quantum learnability, formal]
\label{thm:quantum-learnability-formal}
    The concept class of Def.~\ref{def:learning_problem_concept_class} is efficiently quantumly PAC-learnable; that is, Algorithms \ref{alg:training-via-ham-learning-short-times} and \ref{alg:inference-via-hamiltonian-simulation} require a total sample complexity
    \begin{align}
        \label{eq:final-quantum-learnability-thm-sample-complexity}
        N \in \mathcal{O} \left( \frac{M^{5} T^{6} \left\Vert O \right\Vert_{\infty}^{3} \left( \mathfrak{d} + 1 \right)^{6}}{\varepsilon^{5}} \log \left( \frac{M}{\delta} \right) \right),
    \end{align}
    and a runtime of $\mathcal{O} \left( MN \right)$ and $\mathcal{O} \bigl( \tilde{N} \bigl( M + t_{\text{sim}} \bigr) \bigr)$, where $t_{\text{sim}} \in \mathcal{O} \left( \mathsf{poly} \left( n, 1/\varepsilon \right) \right)$, in order to satisfy the learning condition of Def.~\ref{def: learning condition}.
\end{theorem}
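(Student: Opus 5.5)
The proof assembles the lemmas of this appendix in the order in which the algorithm runs, and splits the total failure probability $\delta$ evenly between training and inference.

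\textbf{Training.} We first invoke Lemma~\ref{lem:short-time-sample-complexity-analysis}. We set $\varepsilon' = \varepsilon/2MT\left\Vert O \right\Vert_{\infty}$ and choose $t^{*} \in \mathcal{O}\bigl(\varepsilon/MT\left\Vert O\right\Vert_{\infty}(\mathfrak{d}+1)^{2}\bigr)$. The lemma then guarantees that, once $N_{t^{*}}$ short-time samples with $x_1=0$ are available, the output of Algorithm~\ref{alg:training-via-ham-learning-short-times} satisfies $\bigl\Vert \hat{\lambda}-\lambda\bigr\Vert_{\infty} \leqslant \varepsilon'$ with probability at least $1-\delta/2$. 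This relies on two inputs already proved:
\begin{itemize}
    \item Lemma~\ref{lem:bounds-time-averaged-integral} controls the bias of the time-averaged estimator $\tilde{\mathcal{F}}_{a,t^{*}}/2t^{*}$.
    \item Hoeffding's inequality plus a union bound over $a\in[M]$ controls the fluctuations.
\end{itemize}

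We then use Lemma~\ref{lem:tot-sample-complexity-analysis} to convert the requirement on $N_{t^{*}}$ into a requirement on the total dataset size. Since each sample lands in the usable window with probability $t^{*}/2T$, the Chernoff bound gives $N \in \mathcal{O}\bigl((T/t^{*})N_{\text{short}}\bigr)$, which is exactly Eq.~\eqref{eq:final-quantum-learnability-thm-sample-complexity}. We will take $\delta_{\text{short}}$ to be a constant fraction of $\delta$, and absorb it into the $\delta/2$ budget by adjusting constants.

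\textbf{Inference.} Lemma~\ref{lem:error-prop-ham-coeff-exp-values} turns coefficient accuracy $\varepsilon'$ into $\bigl\vert\langle Q_a\rangle_{\rho(\hat\lambda)}-\langle Q_a\rangle_{\rho(\lambda)}\bigr\vert \leqslant \varepsilon/3$, uniformly over all $t\le T$ and all inputs $x$. Note that this lemma holds for an arbitrary initial state, so it also covers the computational basis inputs with $x_1=1$. Lemma~\ref{lem:sample-complexity-inference-classical-shadows} then supplies the remaining pieces:
\begin{itemize}
    \item an $\varepsilon/6$ Hamiltonian-simulation error;
    \item an $\varepsilon/3$ shadow estimation error, holding for every $Q_a$ with probability $1-\delta/2$ using $\tilde N\in\mathcal{O}(\varepsilon^{-2}\log(M/\delta))$ copies.
\end{itemize}
For the shadow bound, $\Vert Q_a\Vert_{\text{shadow}}^2\le 4\cdot 3^{\mathfrak{K}}=\mathcal{O}(1)$ by low intersection. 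A union bound then gives, with probability $1-\delta$, the pointwise guarantee $\Vert h(x,t)-c_\lambda(x,t)\Vert_\infty\le\varepsilon$ for every $(x,t)$. This pointwise guarantee a fortiori gives the expectation condition \eqref{eq:learning} for every $\mathcal{D}_i$.

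The main subtlety is whether the hypothesis is a fixed function, since the shadow step introduces fresh randomness at each query. I would handle this by absorbing the per-query failure probability into the expected error. Because $\Vert Q_a\Vert_\infty=2$, a failure contributes at most a constant loss. Running inference at confidence $\varepsilon/4$ instead of $\delta/2$ therefore keeps the expected $\ell_\infty$ error $\le\varepsilon$ at only a logarithmic extra cost.

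\textbf{Runtime.} Training loops over $N$ samples and $M$ coefficients, doing constant-time arithmetic per pair, so it costs $\mathcal{O}(MN)$. Inference prepares $\tilde N$ copies:
\begin{itemize}
    \item each copy needs a product-state encoding plus simulation of $\hat H$ for time $t\le T=\mathsf{poly}(n)$ to precision $\varepsilon/6$;
    \item since $\hat H$ has $M=\mathcal{O}(n)$ local terms with bounded coefficients, any standard method (e.g.\ product formulas) does this in $t_{\text{sim}}=\mathsf{poly}(n,1/\varepsilon)$;
    \item median-of-means post-processing on $M$ constant-weight Paulis costs $\mathcal{O}(M)$ per snapshot.
\end{itemize}
This yields $\mathcal{O}(\tilde N(M+t_{\text{sim}}))$. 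The step I expect to be delicate is the bookkeeping of the $\delta$-budget across the Chernoff, Hoeffding and shadow union bounds, together with the fixed-hypothesis issue above. The rest is direct substitution of the earlier lemmas.
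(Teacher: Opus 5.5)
Your proposal is correct and follows essentially the paper's route. The sample complexity is Lemma~\ref{lem:short-time-sample-complexity-analysis} lifted through the Chernoff argument of Lemma~\ref{lem:tot-sample-complexity-analysis}, correctness comes from Lemma~\ref{lem:error-prop-ham-coeff-exp-values} plus the classical-shadow lemma, and the $\mathcal{O}(MN)$ and $\mathcal{O}\bigl(\tilde{N}(M+t_{\text{sim}})\bigr)$ runtimes come from the same per-step accounting. Your explicit treatment of the $x_1=1$ inputs and of the fresh shadow randomness at each query (absorbing an $\mathcal{O}(\varepsilon)$ per-query failure probability into the expected loss, with estimates clipped to $[-2,2]$) is a sound refinement the paper leaves implicit. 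The constants need rescaling, though: with $\varepsilon'=\varepsilon/2MT\Vert O\Vert_{\infty}$, Lemma~\ref{lem:error-prop-ham-coeff-exp-values} yields $\varepsilon$ rather than $\varepsilon/3$, and the failure term needs its own share of the $\varepsilon$ budget.
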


\begin{proof}
    The sample complexity shown in Eq.\ (\ref{eq:final-quantum-learnability-thm-sample-complexity}) follows from Lemma \ref{lem:tot-sample-complexity-analysis}. Let us now thoroughly analyze the time complexities of Algorithms \ref{alg:training-via-ham-learning-short-times} and \ref{alg:inference-via-hamiltonian-simulation}. For each $a \in \left[ M \right]$ and $j \in \left[ N \right]$, Algorithm \ref{alg:training-via-ham-learning-short-times} involves checking whether $t^{(j)} \in \left[ 0, t^{*} \right]$, $x_{1}^{(j)} = 0$, and $P_{a} \lvert s_{i}^{(j)} \rangle = + \lvert s_{i}^{(j)} \rangle$. Each of these checks is $\mathcal{O} \left( 1 \right)$; the Pauli eigenstate check being constant-time follows from the fact that $P_{a}$ is a fixed single qubit Pauli and $s^{(j)}_{i}$ can only be one of six symbols. This is followed by a $\mathcal{O} \left( M \right)$ cost of performing elementary arithmetic operations to obtain the estimates $\bigl( \hat{\lambda}_{a} \bigr)_{a \in \left[ M \right]}$. Therefore, the time complexity of training is dominated by $\mathcal{O} \left( MN \right)$. We move on to Algorithm \ref{alg:inference-via-hamiltonian-simulation}, the first step of which requires us to prepare, for each $\ell \in \bigl[ \tilde{N} \bigr]$, the state $\rho_{\text{init}}^{(\ell)} \left( s \right)$, which can be done in $\mathcal{O} \left( n \right)$ time and $\mathcal{O} \left( 1 \right)$ depth. Then, we perform a time-evolution step, which, abstractly and combined with the previous initial state preparation step, incurs a cost of $t_{\text{sim}} \in \mathcal{O} \left( \mathsf{poly} \left( n, t, 1/\varepsilon_{\text{sim}} \right) \right) \subseteq \mathcal{O} \left( \mathsf{poly} \left( n, 1/\varepsilon \right) \right)$ per copy. Finally, the runtime of the classical post-processing is given by $\mathcal{O} \bigl( \tilde{N} M \bigr)$. The total time complexity of inference is then given as $\mathcal{O} \bigl( \tilde{N}M + \tilde{N} t_{\text{sim}} \bigr)$.
\end{proof}

\subsection{Quantum learnability of the classically hard instance}
\begin{lemma}[Sample- and time-complexity of learning $\mathcal{C}_{U_{\text{enc}}, T}^{\mathsf{H}_{\mathsf{BQP}}}$ using Algorithm \ref{alg:training-via-ham-learning-short-times}]
    \label{lem:samp-time-complexity-of-quantumly-learning-classically-hard-instance}
    Let $\mathsf{H}_{\mathsf{BQP}}$ be the Hamiltonian family as in Theorem \ref{thm: classical hardness}, and let $\mathcal{C}_{U_{\text{enc}}, T}^{\mathsf{H}_{\mathsf{BQP}}}$ be the corresponding concept class defined per Def.~\ref{def:learning_problem_concept_class}. Then, with probability at least $1 - \delta$, $\mathcal{C}_{U_{\text{enc}}, T}^{\mathsf{H}_{\mathsf{BQP}}}$ is PAC-learnable in the sense of Def.~\ref{def: learning condition} via Algorithm \ref{alg:training-via-ham-learning-short-times} using a number of samples scaling as 
    \begin{align}
        N_{\mathsf{BQP}} \in \mathcal{O} \left( T \cdot \frac{\left( \mathfrak{d} + 1 \right)^{6}}{\kappa^{5}} \log \left( \frac{M}{\delta} \right) \right) \subseteq \mathcal{O} \left( T \log \left( \frac{M}{\delta} \right) \right),
    \end{align}
    where, as described in Lemma \ref{lemma: quantum learnability const error} and for $\varepsilon_{P} \approx 0.66$, $\kappa \coloneqq \varepsilon_{P}'/2 = \varepsilon_{P}/64$ is the minimum estimation precision required to learn $H_{\mathsf{BQP}} \left( \lambda \right)$.
\end{lemma}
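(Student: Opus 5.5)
The plan is to rerun the sample-complexity analysis of Lemmas~\ref{lem:short-time-sample-complexity-analysis} and~\ref{lem:tot-sample-complexity-analysis} with one change: the target coefficient precision is no longer $\varepsilon' = \varepsilon/2MT\Vert O\Vert_\infty$, which came from the error-propagation bound of Lemma~\ref{lem:error-prop-ham-coeff-exp-values}. Instead it is the constant rounding precision $\kappa = \varepsilon_P/64$ supplied by Lemma~\ref{lemma: quantum learnability const error app} and the separation argument in the proof of Lemma~\ref{lemma: quantum learnability const error}.

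\textbf{Step 1: coefficient error.} Decompose $\vert \hat{\lambda}_a - \lambda_a\vert$ exactly as in Lemma~\ref{lem:short-time-sample-complexity-analysis}, as Monte-Carlo error plus bias, and allocate $\kappa/2$ to each. From Lemma~\ref{lem:bounds-time-averaged-integral}, the bias condition $2t^*(\mathfrak{d}+1)^2/(1-2(\mathfrak{d}+1)t^*) \leqslant \kappa/2$ is satisfied by $t^* = \kappa/(4(\mathfrak{d}+1)^2 + 2(\mathfrak{d}+1)\kappa) = \Theta(\kappa/(\mathfrak{d}+1)^2)$. This value is a constant, independent of $n$ and $T$. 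Hoeffding (Theorem~\ref{thm:hoeffding-inequality}), with the labels $6C_a^{(j)}$ bounded, a union bound over $a\in[M]$, and failure budget $\delta/2$, then gives
$N_{\text{short}} \in \mathcal{O}\bigl(\log(M/\delta)/(t^{*2}\kappa^2)\bigr) = \mathcal{O}\bigl((\mathfrak{d}+1)^4\kappa^{-4}\log(M/\delta)\bigr)$.

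\textbf{Step 2: total samples.} Apply the Chernoff argument of Lemma~\ref{lem:tot-sample-complexity-analysis}. The probability that a sample lands in the short-time window with $x_1=0$ is $t^*/2T$, so the same quadratic solution gives
$N \in \mathcal{O}\bigl((T/t^*) N_{\text{short}}\bigr) = \mathcal{O}\bigl(T(\mathfrak{d}+1)^6\kappa^{-5}\log(M/\delta)\bigr)$,
using the remaining $\delta/2$ of the failure budget. Since $\mathfrak{d}\in\mathcal{O}(1)$ for the $5$-local low-intersection $H_{\mathsf{BQP}}$ and $\kappa$ is an absolute constant, this simplifies to $\mathcal{O}(T\log(M/\delta))$.

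\textbf{Step 3: correctness.} With probability $1-\delta$, every $\hat\lambda_a$ is within $\kappa<\varepsilon_P'/2$ of $\lambda_a$. Rounding each estimate to the nearest element of the finite set $\Lambda$ therefore recovers $\lambda$ exactly. Rounding costs $\mathcal{O}(M)$ because $\Lambda$ is fixed, so the time complexity stays $\mathcal{O}(MN_{\mathsf{BQP}})$ as in Theorem~\ref{thm:quantum-learnability-formal}. Exact recovery means the error-propagation factor $MT$ never appears, and $h=c_\lambda$ follows.

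\textbf{Main obstacle.} The delicate point is justifying that $\kappa$ really is a size-independent lower bound on half the gap between admissible coefficient values. Summed Pauli coefficients, such as overlapping clock-transition terms and the $Z$-parts of $T$ and $H$ gates, could in principle come arbitrarily close. I would rely on the finiteness of $\Lambda$ from Lemma~\ref{lemma: quantum learnability const error app}: any finite set has a positive minimum separation, and I would take the paper's stated value $\varepsilon_P/32$ for it. A secondary check is that the Hoeffding range of $6C_a^{(j)}$ (at most $12$, since $\Vert Q_a\Vert_\infty=2$) only affects constants.
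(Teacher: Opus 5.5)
Your proposal is correct and follows the paper's proof essentially step for step. The paper imposes the bias condition from Lemma~\ref{lem:bounds-time-averaged-integral} at level $\kappa/2$ to fix a constant $t^{*}_{\mathsf{BQP}}$, applies Hoeffding with a union bound to get $N_{t^{*}}\in\mathcal{O}\bigl(\log(M/\delta)/(t^{*2}\kappa^{2})\bigr)$, and then uses the Chernoff quadratic of Lemma~\ref{lem:tot-sample-complexity-analysis} to inflate this by $T/t^{*}$, giving $\mathcal{O}\bigl(T(\mathfrak{d}+1)^{6}\kappa^{-5}\log(M/\delta)\bigr)$ exactly as you do.

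One cosmetic fix: since $\kappa=\varepsilon'_P/2$ exactly, your ``$\kappa<\varepsilon'_P/2$'' should be an equality. To make nearest-element rounding unambiguous, allocate slightly less than $\kappa/2$ to each error term; this changes only constants, and the paper has the same non-strictness.
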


\begin{proof}
Recall, from Lemma \ref{lem:bounds-time-averaged-integral}, that we have 
\begin{align}
    \left\vert \frac{\tilde{\mathcal{F}}_{a, t}}{2t^{*}} - \lambda_{a} \right\vert \leqslant \frac{2 \left( \mathfrak{d} + 1 \right)^{2} t^{*}}{1 - 2 \left( \mathfrak{d} + 1 \right) t^{*}}.
\end{align}
Let us denote by $t^{*}_{\mathsf{BQP}}$ the short-time cutoff for the classically hard case. We then require that 
\begin{align}
    \frac{2 \left( \mathfrak{d} + 1 \right)^{2} t^{*}_{\mathsf{BQP}}}{1 - 2 \left( \mathfrak{d} + 1 \right) t^{*}} \leqslant \frac{\kappa}{2} &\implies 4 \left( \mathfrak{d} + 1 \right)^{2} t^{*}_{\mathsf{BQP}} \leqslant \kappa - 2 \left( \mathfrak{d} + 1 \right) \kappa t^{*}_{\mathsf{BQP}} 
    \\
    \nonumber
    &\implies \left\{ 4 \left( \mathfrak{d} + 1 \right)^{2} + 2 \left( \mathfrak{d} + 1 \right) \kappa \right\} t^{*}_{\mathsf{BQP}} \leqslant \kappa \\
     \nonumber
    &\implies t^{*}_{\mathsf{BQP}} \leqslant \frac{\kappa}{4 \left( \mathfrak{d} + 1 \right)^{2} + 2 \left( \mathfrak{d} + 1 \right) \kappa} \\
     \nonumber
    &\implies t^{*}_{\mathsf{BQP}} \leqslant \frac{\varepsilon_{P}/64}{4 \left( \mathfrak{d} + 1 \right)^{2} + 2 \left( \mathfrak{d} + 1 \right) \cdot \left( \varepsilon_{P}/64 \right)} \\
     \nonumber
    &\implies t^{*}_{\mathsf{BQP}} \leqslant \frac{66}{25600 \left( \mathfrak{d} + 1 \right)^{2} + 132 \left( \mathfrak{d} + 1 \right)} \in \mathcal{O} \left( \frac{1}{\left( \mathfrak{d} + 1 \right)^{2}} \right)  \in \mathcal{O} \left( 1 \right)
     \nonumber
\end{align}
holds true, with probability at least $1 - \delta/2$.
On the other hand, we also require 
\begin{align}
    \frac{1}{2t^{*}_{\mathsf{BQP}}} \sqrt{\frac{288}{N_{t^{*}_{\mathsf{BQP}}}} \log \left( \frac{4M}{\delta} \right)} \leqslant \frac{\kappa}{2} &\implies \frac{1}{{4t^{*2}_{\mathsf{BQP}}}} \cdot \frac{288}{N_{t^{*}_{\mathsf{BQP}}}} \log \left( \frac{4M}{\delta} \right) \leqslant \frac{\kappa^{2}}{4} \\
    &\implies N_{t^{*}_{\mathsf{BQP}}} \geqslant \frac{288}{t^{*2}_{\mathsf{BQP}} \kappa^{2}} \log \left( \frac{4 M}{\delta} \right),
    \nonumber
\end{align}    
where $N_{t^{*}_{\mathsf{BQP}}}$ is the number of short-time samples. Now, let us compute the total number of samples, $N_{\mathsf{BQP}}$, following the logic of Lemma \ref{lem:tot-sample-complexity-analysis}. By analogy with Eq.\ (\ref{eq:tot-sample-complexity-quadratic-soln}), we will find that $N_{\mathsf{BQP}}$ satisfies
\begin{align}
    N_{\mathsf{BQP}} \geqslant \frac{2T}{t^{*}_{\mathsf{BQP}}} \left[ N_{t^{*}_{\mathsf{BQP}}} + \log \left( \frac{1}{\delta_{\mathsf{BQP}}} \right) + \sqrt{\log \left( \frac{1}{\delta_{\mathsf{BQP}}} \right) \left( 2N_{t^{*}_{\mathsf{BQP}}} + \log \left( \frac{1}{\delta_{\mathsf{BQP}}} \right) \right)}~\right] \in \mathcal{O} \left( \frac{T}{t^{*}} N_{t^{*}_{\mathsf{BQP}}} \right).
\end{align}
Plugging in the previously acquired expressions for $t^{*}$ and $N_{t^{*}_{\mathsf{BQP}}}$, we finally obtain
\begin{align}
    N_{\mathsf{BQP}} \in \mathcal{O} \left( T \cdot \frac{\left( \mathfrak{d} + 1 \right)^{6}}{\kappa^{5}} \log \left( \frac{M}{\delta} \right) \right) \in \mathcal{O} \left( T \log \left( \frac{M}{\delta} \right) \right),
\end{align}
where we have used the facts that $\mathfrak{d} \in \mathcal{O} \left( 1 \right)$ and $\kappa = \varepsilon_{P}/64 \in \mathcal{O} \left( 1 \right)$. The time-complexity of reconstructing $H_{\mathsf{BQP}}$ is then simply given as $\mathcal{O} \left( MN_{\mathsf{BQP}} \right)$.
\end{proof}

\end{document}